\pgfplotsset{compat=1.18}
\newtheorem{theorem}{Theorem}[section]
\newtheorem*{theorem*}{Theorem}
\newtheorem{proposition}[theorem]{Proposition}
\newtheorem{remark}[theorem]{Remark}
\newtheorem*{remark*}{Remark}
\newtheorem{lemma}[theorem]{Lemma}
\newtheorem{corollary}[theorem]{Corollary}
\DeclarePairedDelimiter{\abs}{\lvert}{\rvert}
\DeclareMathOperator*{\argmin}{arg\,min}
\def\thm@space@setup{%
  \thm@preskip=\parskip \thm@postskip=0pt
}
\Crefname{algocf}{Algorithm}{Algorithms}
\crefname{algocfline}{line}{lines}
\newcommand{\Acal}{\mathcal{A}}
\newcommand{\Bcal}{\mathcal{B}}
\newcommand{\Ecal}{\mathcal{E}}
\newcommand{\Ocal}{\mathcal{O}}
\newcommand{\Pcal}{\mathcal{P}}
\newcommand{\Scal}{\mathcal{S}}
\newcommand{\Tcal}{\mathcal{T}}
\newcommand{\Ebb}{\mathbb{E}}
\newcommand{\Nbb}{\mathbb{N}}
\newcommand{\Pbb}{\mathbb{P}}
\newcommand{\Rbb}{\mathbb{R}}
\newcommand{\Pb}{\mathbb{P}}
\newcommand{\one}{\mathbbm{1}}
\newcommand{\OPT}{{\mathrm{OPT}\xspace}}
\newcommand{\opt}{{\mathrm{OPT}\xspace}}
\newcommand{\ALG}{\mathrm{ALG}\xspace}
\newcommand{\A}{\mathrm{A}\xspace}
\newcommand{\rr}{\mathrm{RR}\xspace}
\newcommand{\alg}{\mathrm{ALG}\xspace}
\newcommand{\threshold}{k}
 \newcommand{\EX}{\mathbb{E}}
\title{Non-Clairvoyant Scheduling with Progress Bars}
\author{Ziyad Benomar\thanks{
CREST, ENSAE, Ecole Polytechnique,
  Fairplay joint team,
  Palaiseau, France.
}
\and Romain Cosson\thanks{
  Courant Institute of Mathematical Sciences,
  New York University. Part of this work was carried out while at Inria, Paris.}
\and Alexander Lindermayr\thanks{Simons Institute for the Theory of Computing, UC Berkeley. This work was done while at University of Bremen and supported by the ``Humans on Mars Initiative'', funded by the Federal State of Bremen and the University of Bremen.}
\and Jens Schlöter\thanks{Centrum Wiskunde \& Informatica (CWI). This work was supported by the research project ``Optimization for and with Machine Learning (OPTIMAL)'', funded by the Dutch Research Council (NWO), grant number OCENW.GROOT.2019.015.}
}
\date{}
\begin{document}

\maketitle

\begin{abstract}
In non-clairvoyant scheduling, the goal is to minimize the
total job completion time without prior knowledge of individual
job processing times. This classical online optimization problem
has recently gained attention through the framework of
learning-augmented algorithms. We introduce a natural setting in
which the scheduler receives continuous feedback in the form of
progress bars—estimates of the fraction of each job completed over time.
We design new algorithms for both adversarial and stochastic progress bars
and prove strong competitive bounds. Our results in the adversarial case surprisingly
induce improved guarantees for learning-augmented scheduling with job size predictions.
We also introduce a general method for combining scheduling algorithms, yielding
further insights in scheduling with predictions. Finally, we propose a stochastic
model of progress bars as a more optimistic alternative to conventional worst-case
models, and present an asymptotically optimal scheduling algorithm in this setting.
\end{abstract}

\section{Introduction}

We study the fundamental problem of scheduling on a single machine: there are $n$ jobs,
each with a processing time $p_j$, available at time $0$, and the goal is to preemptively (i.e., jobs can be interrupted and resumed later) %
schedule them to minimize the sum %
of the completion times. In the \textit{clairvoyant} (or \textit{offline}) setting, i.e., when all processing times are known to the scheduler, the optimal solution is to schedule jobs in increasing order of processing times. This rule is known as Shortest Processing Time First, or as Smith's rule in reference to a 1956 paper that first formalized %
it \cite{smith1956various}.
In many practical applications, such as operating
systems or high-performance computing, the scheduler does not know the jobs' processing times  upfront. This restriction led to the model of \emph{non-clairvoyant} (or \textit{online})
scheduling, where the scheduler learns about the processing time of a job only when this job is completed.
For this online problem, %
introduced in the early 90s~\cite{FeldmannST91,ShmoysWW91,MotwaniPT94},
a classical algorithm is Round-Robin, %
which simply allocates the same amount of computing power to all unfinished jobs. %
It approximates the
optimal solution within a factor (called competitive ratio) of~$2$,
and it is well-known that this value is optimal among non-clairvoyant algorithms~\cite{MotwaniPT94}.

The clairvoyant and non-clairvoyant models have a stark contrast in their \emph{information model}, which is the amount of information (feedback) the online algorithm receives over time. In this paper, we introduce and study a model of scheduling with \emph{progress bars} that interpolates between these two settings. %
Non-clairvoyant scheduling has already been investigated through the lens of
\emph{learning-augmented algorithms}
(aka \emph{algorithms with predictions})~\cite{LykourisV21,MitzenmacherV22,alps}. Therein, algorithms are
equipped with additional, possibly
erroneous information (predictions) to overcome limitations of the zero knowledge assumption
and hence improve over classical lower bounds.
The goal is to establish strong performance guarantees for perfect predictions (\emph{consistency}) that degrade gracefully as the quality of the predictions worsens (\emph{smoothness}), and which stay bounded for arbitrarily bad predictions (\emph{robustness}).
This line of research studied various %
prediction models for non-clairvoyant scheduling, such as (partial) predictions on the unknown processing
times (input predictions) \cite{PurohitSK18,WeiZ20,AzarLT21,AzarLT22,ImKQP23, BenomarP23,BenomarP24nonclarivoyant},
or predictions on the optimal scheduling order (action predictions) \cite{DinitzILMV22, EliasKMM24,LindermayrM25permutation,LassotaLMS23}.
A shortcoming of these models is that they give predictions \emph{a priori},
hence
relying only on static information such as job
identifiers or descriptions.
This limits their applicability in natural situations where information might initially be scarce or extremely unreliable, but where its amount and reliability increases as the scheduler interacts with the system.

Given these limitations, we ask the following question: \emph{Is there an information model that
allows algorithms to refine their decisions over time?}
Our main motivation for this question is estimation techniques such as
profiling~\cite{hilman2018task,weerasiri2017taxonomy,xie2021two},
which \emph{learn} about jobs as they are being executed: while initially we have no good estimate
of processing times, we can expect to give improved predictions after some processing.
Inspired by this, we propose a formal model based on %
\emph{progress bars}: while processing a job,
we receive an estimated indication of how much percentage of the job has been completed,
revealed at different levels of \emph{granularity} (e.g., frequency, time, or levels of updates).
Clearly, we cannot expect that the progress bar is perfectly accurate,
motivating the design of scheduling algorithms for adversarial and stochastic progress bars.

\subsection{Our results}\label{sec:results}
A \emph{progress bar} for a job $j \in [n] := \{1,\ldots,n\}$ is formally defined as a function $\varphi_j: [0,1] \to [0,1]$, that is non-decreasing and satisfies $\varphi(0) = 0$ and $\varphi(1) = 1$. For all $x \in [0,1]$, $\varphi_j(x)$ represents the \emph{displayed} progress when job $j$ has an \emph{actual} progress of $x$, that is, it has been executed for a fraction $x$ of its total processing time. The value $\varphi_j(x)$ can be interpreted as an estimate of $x$ at any time during processing.
In particular, we can model clairvoyant scheduling as having
accurate progress bars $\varphi_j: x \mapsto x$ for all $j$ (cf.~\Cref{fig:clairvoyant-progress-bar}),\footnote{There is a minor technical difference: while we learn in the clairvoyant setting about $p_j$ at $j$'s arrival, we need to process $j$ for an $\varepsilon>0$ fraction to deduce $p_j$ via the accurate progress bar $\varphi_j: x \mapsto x$.} while the non-clairvoyant setting corresponds to uninformative progress bars $\varphi_j: x \mapsto \one(x=1)$ (cf.~\Cref{fig:nonclairvoyant-progress-bar}). We say that the progress bar $\varphi$ has a \emph{granularity} $g\in \Nbb$ if it is a step function taking $g$ values besides $0$ and $1$.

From the scheduler's point of view, after allocating $e_j\leq p_j$ units of computing to job $j$, the displayed progress of job~$j$, denoted by %
\(
X_j(e_j) = \varphi_j\left( \nicefrac{e_j}{p_j} \right)
\), provides an estimate of the actual progress of job~$j$, equal to $x_j = \nicefrac{e_j}{p_j}$. It can also be interpreted as a prediction $\pi_j = e_j / X_j(e_j)$ on the job length, which usually refines as more computing is allocated to that job.

The problem of scheduling with progress bars has a flavor similar to decision with bandit feedback \cite{lattimore2020bandit,bubeck2012regret}. Indeed, the scheduler faces an explore-exploit tradeoff, where it is incentivized to prioritize jobs for which the progress bar seems to progress faster (\textit{exploit}) but must also allocate some computation to other jobs to refine estimates of their processing times (\textit{explore}). Like in the bandits literature, the study of scheduling algorithms with progress bars lends itself to both an adversarial analysis and to a stochastic analysis. Specifically, we study three models of progress bars, which we describe in detail below.

\begin{figure}[htb]
    \centering
    \captionsetup[subfigure]{justification=centering}
    \begin{subfigure}[T]{0.24\textwidth}
        \centering
        \begin{tikzpicture}[scale=1.3]
            \begin{axis}[
                width=\textwidth,
                height=0.8\textwidth,
                ymin=0, ymax=1.1,
                xmin=0, xmax=1.1,
                grid=major,
                axis lines=left,
                xtick={0, 1},
                ytick={0, 1},
                ticklabel style = {font=\tiny},
            ]
                \addplot[domain=0:1, line width=1.5pt, blue] {x};
            \end{axis}
        \end{tikzpicture}
        \caption{Clairvoyant setting}
        \label{fig:clairvoyant-progress-bar}
    \end{subfigure}
    \hfill
    \begin{subfigure}[T]{0.24\textwidth}
        \centering
        \begin{tikzpicture}[scale=1.3]
            \begin{axis}[
                width=\textwidth,
                height=0.8\textwidth,
                ymin=0, ymax=1.1,
                xmin=0, xmax=1.1,
                grid=major,
                axis lines=left,
                xtick={0, 1},
                ytick={0, 1},
                ticklabel style = {font=\tiny},
            ]
                \addplot+ [
                   line width=3pt, blue, mark=none
                ] coordinates {
                    (0,0.0) (1,0.0)
                };
                \addplot+ [
                   line width=1pt, blue, mark=none, empty line=jump, dotted
                ] coordinates {
                    (1,0.0) (1,1.0)
                };
                \addplot+ [
                   jump mark left=*, line width=1.5pt, blue, mark=*,mark options={scale=0.8, fill=blue}
                ] coordinates {
                    (1,1.0) (1,1.0)
                };
            \end{axis}
        \end{tikzpicture}
        \caption{Non-clairvoyant setting}
        \label{fig:nonclairvoyant-progress-bar}
    \end{subfigure}
    \hfill
    \begin{subfigure}[T]{0.24\textwidth}
        \centering
        \begin{tikzpicture}[scale=1.3]
            \begin{axis}[
                width=\textwidth,
                height=0.8\textwidth,
                ymin=0, ymax=1.1,
                xmin=0, xmax=1.1,
                grid=major,
                axis lines=left,
                xtick={0, 0.2, 0.81, 1},
                xticklabels={0, $\beta_j^{(1)}$, $\beta_j^{(2)}$, 1},
                ytick={0, 0.6, 0.75, 1},
                yticklabels={0, $\alpha^{(1)}$, $\alpha^{(2)}$, 1},
                ticklabel style = {font=\tiny},
            ]
            \addplot+ [
                   line width=3pt, blue, mark=none
                ] coordinates {
                   (0,0.0) (0.2,0.0)
                };
            \addplot+ [
                jump mark left=*, line width=1.5pt, blue, mark=*,mark options={scale=0.8, fill=blue}
             ] coordinates {
                 (0.2,0.6) (0.81,0.75) (1,1.0)
             };
             \addplot+ [
                   line width=1pt, blue, mark=none, empty line=jump, dotted
                ] coordinates {
                    (0.2,0.0)
                    (0.2,0.6)

                     (0.81,0.6)
                     (0.81,0.75)

                     (1,0.75)
                     (1,1.0)
                };
            \end{axis}
        \end{tikzpicture}
        \caption{Untrusted progress\\ bar with $g=2$}
        \label{fig:untrusted-progress-bar}
    \end{subfigure}
    \hfill
    \begin{subfigure}[T]{0.24\textwidth}
        \centering
        \begin{tikzpicture}[scale=1.3]
            \begin{axis}[
                width=\textwidth,
                height=0.8\textwidth,
                ymin=0, ymax=1.1,
                xmin=0, xmax=1.1,
                grid=major,
                axis lines=left,
                xtick={0, 0.17, 0.51, 0.79, 1},
                xticklabels={0, $\beta_j^{(1)}$, $\beta_j^{(2)}$, $\beta_j^{(3)}$, $1$},
                ytick={0, 0.25, 0.5, 0.75, 1},
                ticklabel style = {font=\tiny},
            ]
            \addplot+ [
                   line width=3pt, blue, mark=none
                ] coordinates {
                   (0,0.0) (0.17,0.0)
                };
            \addplot+ [
                jump mark left=*, line width=1.5pt, blue, mark=*,mark options={scale=0.8, fill=blue}
             ] coordinates {
                 (0.17,0.25) (0.51,0.5) (0.79,0.75) (1,1.0)
             };
             \addplot+ [
                   line width=1pt, blue, mark=none, empty line=jump, dotted
                ] coordinates {
                    (0.17,0.0)
                    (0.17,0.25)

                    (0.51,0.25)
                    (0.51,0.5)

                    (0.79,0.5)
                    (0.79,0.75)

                    (1,0.75)
                    (1,1.0)
                };
            \end{axis}
        \end{tikzpicture}
        \caption{Stochastic progress\\ bar with $g=3$}
        \label{fig:stochastic-progress-bar}
    \end{subfigure}
    \caption{
    Visualization of different progress bars. The horizontal axis represents the actual progress $x$ of a job and the vertical axis represents the displayed progress $\varphi(x)$.}
    \label{fig:our-progress-bars}
\end{figure}
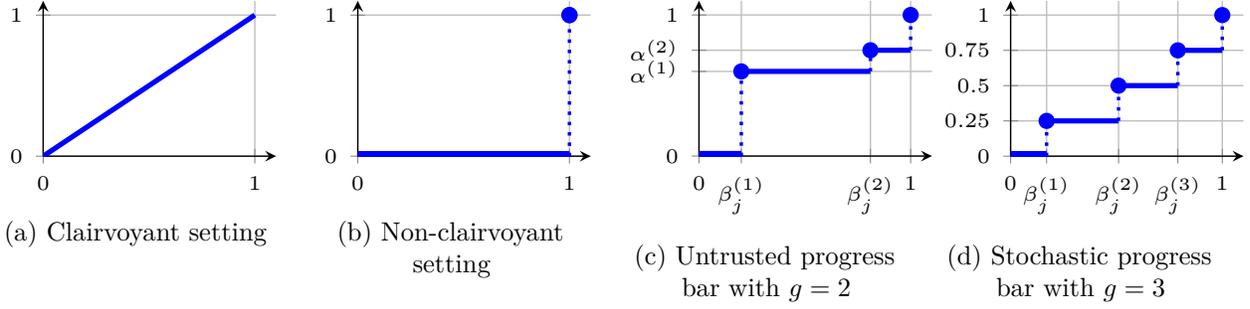

\subsubsection{Single signal progress bar}
We first consider progress bars of granularity $1$. In this case, the progress bar stays at $0$ until a fraction $\beta_j$ of job $j$ has been processed, at which point it jumps to a displayed progress of $\alpha$ (where $\beta_j$ may differ from $\alpha$ due to inaccuracies). Finally, it jumps to $1$ upon completion of the job. Formally,
\[
\varphi_j(x) = \alpha \cdot \one(\beta_j \leq x < 1) + \one(x = 1) \ .
\]

\paragraph{Accurate signal.}
We start by investigating the case where $\beta_j = \alpha$ for all jobs $j$. This is equivalent to %
each job emitting a signal exactly after it has been processed for an $\alpha$ fraction. This setting interpolates between the \emph{non-clairvoyant} ($\alpha = 1$) and \emph{clairvoyant} ($\alpha = 0$) regimes. This setting has been studied before in a different context~\cite{YingchareonthawornchaiT17,GuptaKLSY25}. %
We show
in \cref{thm:alpha-clairvoyant}
that a simple deterministic algorithm is $(1+\alpha)$-competitive. This result appears concurrently also in \cite{GuptaKLSY25}.

\paragraph{Consistent, robust and smooth algorithm.}
Next, we assume that the signal emission time can be inaccurate, i.e., $\beta_j \neq \alpha$. We demonstrate
in \cref{thm:1-signal-main}
that for all $\rho \in (0,1)$, there is a deterministic algorithm that is $(1+\alpha)$-consistent, $(1+\nicefrac{1}{\rho \alpha})$-robust, and smooth.
The parameter $\rho$ tunes a tradeoff between robustness and smoothness in this algorithm.
We also demonstrate in \cref{app:multi} that the consistency and robustness bounds can be extended to the more general setting of having $m$ parallel identical machines.

\paragraph{Improved consistency-robustness tradeoff.}
In \cref{sec:better-consistency-robustness}, we discuss how
our technique yields an algorithm for scheduling with
job length predictions
that improves over the best consistency-robustness tradeoff in prior work, achieved by time sharing~\cite{PurohitSK18}.
Interestingly, our algorithm is the first to actually use the numerical values of the predicted job lengths for robustification, while time sharing is a black box technique, which also applies to  %
permutation predictions~\cite{LindermayrM25permutation}.

\subsubsection{Untrusted progress bar}
We then consider the more general model where the progress bar has \emph{granularity} $g \in \mathbb N$, displaying~$g$ intermediate progress levels $\alpha^{(1)} < \ldots < \alpha^{(g)}$. We denote for convenience $\alpha^{(0)} = 0$ and $\alpha^{(g+1)} = \beta_j^{(g+1)} = 1$. For each $h \in [g]$, the displayed progress increases from $\alpha^{(h-1)}$ to $\alpha^{(h)}$ when the actual progress reaches $\beta_j^{(h)}$ (cf.~\Cref{fig:untrusted-progress-bar} for an example). Hence, we formally define for all $x \in [0,1]$
\[
\varphi_j(x) = \sum_{h=1}^{g+1} (\alpha^{(h)} - \alpha^{(h-1)}) \cdot \one (x \geq \beta_j^{(h)})\;.
\]

\paragraph{General combining algorithm.}
To address this setting, we study a more general problem: how to combine multiple scheduling algorithms.
\cite{EliasKMM24} introduced a randomized strategy to combine multiple permutation predictions, achieving performance close to that of the best prediction, up to a regret term.
In \cref{thm:combining-algo},
we extend and improve their approach to combine arbitrary scheduling algorithms under mild assumptions, even if the algorithms rely on different types of predictions, or none at all.
In particular, this combining strategy can be applied in our setting by interpreting a progress bar with granularity $g$ as a collection of $g$ progress bars with granularity 1 each, and combining different instantiations of the algorithm for the single-signal setting.

\paragraph{Implications in learning-augmented scheduling.}
With our strategy, we can combine Round-Robin with any $1$-consistent algorithm in any prediction model, giving a consistency of $1+o(1)$ and a robustness of $2+o(1)$ for a broad class of scheduling instances, where the maximum job processing time is small enough compared to the optimal objective value.
For this class of instances, our result significantly improves upon previously known consistency-robustness tradeoffs for non-clairvoyant scheduling with predictions, and substantially narrows the gap between the prior state-of-the-art algorithms and the lower bound on the consistency-robustness tradeoff established in \cite{WeiZ20}.

\subsubsection{Stochastic progress bar}
Finally, we study a stochastic model of %
progress bars with \emph{granularity} $g \in \mathbb N$. %
Specifically, the bar can display $g+2$
progress levels $(\nicefrac{h}{g+1})_{0 \leq h \leq g+1}$,
and jumps between these levels according to a Poisson point process with rate $g$. Let $\tilde{\beta}_j^{(1)} \leq \ldots \leq \tilde{\beta}_j^{(g)}$ denote the first $g$ points of this process, $\beta_j^{(h)} = \min(1, \tilde{\beta}_j^{(h)})$ for all $h \in [g]$, and $\beta_j^{(g+1)} = 1$.
An example is given in \Cref{fig:stochastic-progress-bar}.
Formally, %
\[
\varphi_j(x) = \frac{1}{g+1} \sum_{h=1}^{g+1} \one(x \geq \beta_j^{(h)})\;.
\]

Our algorithm %
follows a
repeated explore-then-commit approach:
it runs Round-Robin until the progress bar of a job reaches a certain threshold of order $\Theta(g^{2/3})$, then it commits to that job until completion, and then it resumes the exploration phase with Round-Robin on the remaining jobs. We show in \cref{thm:ETC-expectation} that the competitive ratio of this algorithm is $1+\Ocal(g^{-1/3})$,
and we prove in \cref{thm:lower-bound-stochastic} a matching lower bound of $1+\Omega(g^{-1/3})$ on the competitive ratio of any algorithm. %

\subsection{Notation and organization}
There are $n$ jobs $j \in [n]$ with processing times $p_j$ such that $p_1 \leq \ldots \leq p_n$.
We consider scheduling with arbitrary preemptions, so a scheduler can process a job for an infinitesimal small amount of time.
The optimal objective value is given by $\opt = \sum_{i=1}^n (n-i+1) p_i$~\cite{smith1956various}.
The processing that job $j$ has received until time $t$, also called the elapsed time of job $j$, is denoted by $e_j(t)$. %
The completion time $C_j$ of job $j$ is the earliest time $t$ that satisfies $e_j(t) \geq p_j$.
The total completion time of a scheduling algorithm is equal to $\alg = \sum_{j=1}^n C_j$.
For two jobs $i,j\in [n]$, we denote by $d(i,j)$ the delay that $i$ incurs to $j$, i.e., $d(i,j) = e_i(C_j)$. We have %
(see, e.g., \cite{MotwaniPT94}):
\begin{equation}
    \alg  = \sum_{j=1}^n p_j + \sum_{i=1}^n \sum_{j=1}^{i-1} d(i,j) + d(j,i) \ . \label{eq:delay-decomp}
\end{equation}

We present our models and theoretical results in \Cref{sec:alpha-clairvoyant,sec:adv-progress-bar,sec:stochastic-progress-bar}, and then present our empirical results in \Cref{sec:experiments}.
All proofs are deferred to the appendix.
\section{Scheduling with untrusted signal}
\label{sec:alpha-clairvoyant}

We begin by studying a simplified scenario in which the progress bars have a granularity $g = 1$. We assume that there exists some known $\alpha \in [0,1]$ and unknown $\beta_j \in [0,1]$ for all $j\in [n]$ such that $\varphi_j(x) = \alpha \cdot \one(\beta_j \leq x < 1) + \one(x = 1)$.
This model is equivalent to assuming that each job emits a signal after being processed for an unknown fraction $\beta_j$ of its total processing time.

\subsection{Truthful signals}

First, we assume that each job emits a signal exactly after being processed for a fraction $\alpha$ of its total processing time, i.e., $\beta_j = \alpha$.
The following theorem describes an optimal algorithm in this setting.

\begin{restatable}{theorem}{thmAlphaClairvoyant}\label{thm:alpha-clairvoyant}
Consider the algorithm that runs Round-Robin over all jobs, and whenever a job emits its signal, it is granted \emph{preferential execution}, i.e. it is run alone until completion. This algorithm achieves a competitive ratio of $1 + \alpha$, which is optimal even for randomized algorithms.
\end{restatable}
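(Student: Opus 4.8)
The plan is to analyze the algorithm via the delay decomposition~\eqref{eq:delay-decomp}, comparing the total pairwise delays $d(i,j)+d(j,i)$ incurred by the algorithm against those incurred by the optimal (Smith's rule) schedule. Recall $\opt = \sum_j p_j + \sum_{i} \sum_{j<i} p_j$ since in the optimal schedule the shorter job $j \le i$ contributes full delay $p_j$ to the longer one and zero in the other direction. So it suffices to show that for every pair $\{i,j\}$ with $p_j \le p_i$, the algorithm satisfies $d(i,j) + d(j,i) \le (1+\alpha)\, p_j$; summing over all pairs and adding $\sum_j p_j$ then yields $\alg \le \sum_j p_j + (1+\alpha)\sum_i\sum_{j<i} p_j \le (1+\alpha)\opt$.

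First I would fix a pair of jobs and trace through the three phases of their interaction: (i) both in Round-Robin; (ii) one of them has emitted its signal (at actual progress $\alpha$) and entered preferential execution; (iii) both finished. The key observations are: while both jobs are in Round-Robin together, they progress at the same rate, so by time the shorter job $j$ has reached progress $\alpha$, job $i$ has received exactly $\alpha p_j$ units as well — this is the point where $j$ gets preferential execution. From that moment $j$ runs alone until done (receiving its remaining $p_j(1-\alpha)$ with no further delay to $i$), so $d(j,i) \le \alpha p_j + p_j(1-\alpha) \cdot \one(\text{$j$ not yet done}) $; careful bookkeeping shows $d(j,i) = p_j$ exactly when $j$ finishes before $i$, which is the generic case. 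Meanwhile $d(i,j)$: job $i$ delays $j$ only during the shared Round-Robin phase (before $j$'s signal) plus possibly a sliver if $i$'s own signal fires first and $i$ grabs preferential execution before $j$ does. One must check that if $i$'s signal fires before $j$'s — possible since signals are at fixed \emph{fraction} $\alpha$ but $p_i \ge p_j$ means $i$ reaches progress $\alpha$ later in absolute time, so actually $j$ always signals first among the two — this cleanly bounds $d(i,j) \le \alpha p_j$ (the Round-Robin contribution up to $j$'s signal, after which $j$ runs alone). Hence $d(i,j)+d(j,i) \le \alpha p_j + p_j = (1+\alpha)p_j$.

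The subtlety — and the step I expect to require the most care — is handling the \emph{interleaving of preferential executions across many jobs simultaneously}: when several jobs have emitted signals, they queue for preferential execution, and a job waiting its turn in that queue still accrues delay from the jobs ahead of it. I would argue that the preferential queue is processed in a way that does not break the pairwise bound: a job in the queue has already been processed for an $\alpha$ fraction, so it is "short-ish," and the total work done by jobs ahead of it in the queue is bounded by their remaining processing times; one shows these remaining times, charged appropriately, still fit within the $(1+\alpha)p_j$ budget per pair by noting that preferential execution respects (a relaxation of) Smith's order among signaled jobs, or alternatively by a direct amortized charging argument. For the optimality/lower bound, the plan is the standard two-job (or $n$ identical jobs plus one) adversarial construction: take jobs where the signal reveals nothing useful until an $\alpha$ fraction is already sunk, force any algorithm (even randomized, via Yao's principle) to waste an $\alpha$ fraction of processing on the "wrong" job in expectation, and compute that the resulting ratio tends to $1+\alpha$; this mirrors the classical Round-Robin lower bound of $2$ recovered at $\alpha = 1$.
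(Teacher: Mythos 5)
Your upper bound argument is essentially the paper's. Both reduce to the delay decomposition $\alg = \sum_j p_j + \sum_{i<j}\bigl(d(i,j)+d(j,i)\bigr)$ and to the per-pair bound $d(i,j)+d(j,i)\le(1+\alpha)\min\{p_i,p_j\}$, using that in the truthful case the shorter job of a pair signals first (at elapsed time $\alpha\min\{p_i,p_j\}$) and then runs to completion; the paper phrases this as the $\beta_j=\alpha$ specialization of \Cref{lem:naive-algo-smoothness}. One worry you raise can be retired: the ``interleaving/queuing'' issue is not actually a subtlety. In the truthful case there is never a queue, since signals are emitted at the distinct elapsed times $\alpha p_j$, and while one job enjoys preferential execution no other job advances. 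More robustly, $d(i,j)$ and $d(j,i)$ depend only on the elapsed times of jobs $i$ and $j$, and both are frozen whenever a third job is being run alone, so the pairwise bound is agnostic to however other jobs' preferential phases interleave.

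For the lower bound, your plan is too vague and the instances you hint at would not produce the constant $1+\alpha$. A two-job instance with $p_1=1$, $p_2=1+\varepsilon$ gives the algorithm a ratio of only about $1+\alpha/3$ as $\varepsilon\to 0$, and $n$ literally identical jobs yield ratio $1$, since with nothing to learn any algorithm may as well process them sequentially and match OPT exactly. The paper's proof applies Yao's principle to $n$ jobs with i.i.d.\ exponential processing times. The memorylessness of the exponential is precisely what makes the pre-signal phase uninformative: one computes that every deterministic algorithm incurs an expected pairwise delay of $\alpha$ before either signal fires plus $\tfrac12(1-\alpha)$ afterwards, i.e.\ $\tfrac12(1+\alpha)$ per pair, against $\tfrac12$ per pair for OPT, and the ratio tends to $1+\alpha$ only as $n\to\infty$. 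You had the right high-level intuition (force an $\alpha$-fraction of wasted work on wrong jobs, use Yao), but the specific distributional instance and the calculation showing the loss is exactly $1+\alpha$ rather than something smaller are the real content of the lower bound, and your sketch does not supply them.
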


\subsection{Untrusted signals}

Then, we suppose the signal can be inaccurate: each job $j$ emits its signal after being processed for a fraction $\beta_j \in [0,1]$ of its total processing time, instead of the announced $\alpha$. We show how to adapt the previous algorithm to maintain robustness under such deviations.
A natural baseline is to \emph{naively trust the predictions}, running the previous algorithm for truthful signals described in \cref{thm:alpha-clairvoyant}.

\begin{restatable}{lemma}{naiveAlgSmooth}\label{lem:naive-algo-smoothness}
Blindly following the signals yields a total completion time of at most
\[
(1+\alpha) \opt + \sum_{i=1}^n (n-i)(\beta_i - \alpha) p_i + \sum_{i<j} (p_j - p_i) \cdot \one (\beta_j p_j < \beta_i p_i).
\]
\end{restatable}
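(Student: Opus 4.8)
My plan is to first determine the precise schedule produced by the algorithm when signals arrive at fractions $\beta_j$ instead of $\alpha$, and then evaluate $\alg$ via the delay decomposition~\eqref{eq:delay-decomp}, bounding the resulting pair terms one by one. The key structural fact is this: since Round-Robin splits the machine evenly among unfinished jobs, a job's elapsed time grows at a rate common to all jobs currently in Round-Robin until it is pulled out for preferential execution, which happens exactly when its elapsed time first reaches $\beta_j p_j$. Consequently the jobs are completed in non-decreasing order of $\beta_j p_j$; let $\pi$ be a permutation with $\beta_{\pi(1)} p_{\pi(1)} \le \ldots \le \beta_{\pi(n)} p_{\pi(n)}$, so that $\pi(k)$ is the $k$-th job to finish. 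Moreover — and this is the point I expect to be the crux of the proof — at the instant $\pi(k)$ starts its preferential execution, the jobs $\pi(1),\ldots,\pi(k-1)$ are already complete, while $\pi(k),\ldots,\pi(n)$ all share the elapsed time $\beta_{\pi(k)} p_{\pi(k)}$ and then stay frozen there until $\pi(k)$ completes. The one delicate case is ties (several jobs reaching their thresholds $\beta_j p_j$ simultaneously), which I would dispose of by fixing a tie-breaking rule, or by perturbing the $\beta_j$ by an arbitrarily small amount so that the completion order is well defined.

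Granting this, for $a<b$ we have $d(\pi(a),\pi(b)) = p_{\pi(a)}$ (job $\pi(a)$ finishes before $\pi(b)$) and $d(\pi(b),\pi(a)) = \beta_{\pi(a)} p_{\pi(a)}$, so \eqref{eq:delay-decomp} gives $\alg = \sum_{j} p_j + \sum_{a<b}\bigl(p_{\pi(a)} + \beta_{\pi(a)} p_{\pi(a)}\bigr)$. I would then reindex the sum over unordered pairs of jobs by their sorted positions: for a pair $\{i,j\}$ with $i<j$ (hence $p_i \le p_j$), the job that finishes first is the one with the smaller value of $\beta p$; calling it $m \in \{i,j\}$, the pair contributes $p_m(1+\beta_m)$, so $\alg = \sum_j p_j + \sum_{i<j} p_m(1+\beta_m)$.

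The final step is a pairwise comparison. For $i<j$: if $m=i$ the contribution is exactly $p_i(1+\beta_i)$; if $\beta_j p_j < \beta_i p_i$ then $m=j$ and $p_j(1+\beta_j) = p_i(1+\beta_i) + (p_j-p_i) + (\beta_j p_j - \beta_i p_i) \le p_i(1+\beta_i) + (p_j-p_i)$. Either way the pair contributes at most $p_i(1+\beta_i) + (p_j-p_i)\,\one(\beta_j p_j < \beta_i p_i)$, and summing (using $\sum_{i<j} p_i(1+\beta_i) = \sum_i (n-i)\,p_i(1+\beta_i)$) gives $\alg \le \sum_j p_j + \sum_i (n-i)\,p_i(1+\beta_i) + \sum_{i<j}(p_j-p_i)\,\one(\beta_j p_j < \beta_i p_i)$. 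Since $\sum_j p_j + \sum_i(n-i)p_i = \sum_i(n-i+1)p_i = \opt$, the first two terms equal $\opt + \sum_i(n-i)\beta_i p_i$, and writing $\opt = \sum_i(n-i)p_i + \sum_i p_i$ turns this into $(1+\alpha)\opt + \sum_i(n-i)(\beta_i-\alpha)p_i - \alpha\sum_i p_i$, which is at most $(1+\alpha)\opt + \sum_i(n-i)(\beta_i-\alpha)p_i$ since $\alpha\ge 0$; this is the claimed bound. (In fact the argument proves the slightly stronger $\opt + \sum_i(n-i)\beta_i p_i + \sum_{i<j}(p_j-p_i)\,\one(\beta_j p_j < \beta_i p_i)$; the stated form is the one in which the consistency term $(1+\alpha)\opt$ from \cref{thm:alpha-clairvoyant} appears explicitly.)
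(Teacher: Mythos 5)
Your argument is correct and ultimately the same as the paper's: both compute the pairwise mutual delay $d(i,j)+d(j,i)$, observe it equals $p_m(1+\beta_m)$ where $m\in\{i,j\}$ has the smaller $\beta p$, bound it by $p_i(1+\beta_i)+(p_j-p_i)\,\one(\beta_j p_j<\beta_i p_i)$, and sum via \eqref{eq:delay-decomp}. The global detour through the completion permutation $\pi$ is extra scaffolding — the paper extracts the same pairwise-delay identity directly from the observation that two jobs share a processing rate until one emits — but the decomposition, case split, and final telescoping to the stated bound are identical.
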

The theorem above shows that the algorithm is \emph{$(1+\alpha)$-consistent}, which is the best possible consistency in this setting, and decomposes the additional cost into two components:
\begin{itemize}
    \item \textbf{Signal timing error:} The first sum captures discrepancies on when signals are received. Interestingly, this term can be negative if signals arrive earlier than expected, potentially improving performance in the absence of scheduling inversions.
    \item \textbf{Inversion error:} The second sum can then be interpreted as the total \emph{inversion error}, which is a standard error term in learning-augmented scheduling \cite{LindermayrM25permutation} and captures the increase of the objective due to scheduling jobs in a non-optimal order.
\end{itemize}

The inversion error in the previous lemma can be upper bounded using the $\ell_1$-norm of the signal timing errors $\sum_{i=1}^n |\beta_i - \alpha|$, resulting in the following bound.

\begin{corollary}\label{cor:smooth}
Blindly following the signals yields a total completion time of at most
$
(1+\alpha)\opt + \left(1+\frac{1}{\alpha} \right) n \sum_{i=1}^n |\beta_i - \alpha| p_i\;.
$
\end{corollary}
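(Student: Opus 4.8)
The plan is to deduce the corollary directly from \Cref{lem:naive-algo-smoothness} by bounding each of its two error terms by a multiple of $n\sum_{i=1}^n|\beta_i-\alpha|\,p_i$. Write $\delta_i := \beta_i-\alpha$ for brevity, so that we must show the additive overhead on top of $(1+\alpha)\opt$ is at most $\big(1+\tfrac1\alpha\big)\,n\sum_{i=1}^n|\delta_i|\,p_i$. The \emph{signal-timing} term is handled trivially: since $0\le n-i\le n$ and $\delta_i\le|\delta_i|$, we get $\sum_{i=1}^n (n-i)\,\delta_i\,p_i \le n\sum_{i=1}^n|\delta_i|\,p_i$.

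The substance is in the \emph{inversion} term $\sum_{i<j}(p_j-p_i)\cdot\one(\beta_j p_j<\beta_i p_i)$. Recall the jobs are indexed with $p_1\le\cdots\le p_n$, so every summand has $p_j-p_i\ge 0$. The key observation is that the indicator being $1$ forces a lower bound on $|\delta_i|p_i+|\delta_j|p_j$: substituting $\beta_k=\alpha+\delta_k$ into $\beta_j p_j<\beta_i p_i$ gives $\alpha p_j+\delta_j p_j<\alpha p_i+\delta_i p_i$, that is,
\[
\alpha\,(p_j-p_i) < \delta_i p_i-\delta_j p_j \le |\delta_i|\,p_i+|\delta_j|\,p_j .
\]
Hence, on the event $\beta_j p_j<\beta_i p_i$, we may replace $p_j-p_i$ by $\tfrac1\alpha(|\delta_i|p_i+|\delta_j|p_j)$, obtaining
\[
\sum_{i<j}(p_j-p_i)\cdot\one(\beta_j p_j<\beta_i p_i)
\;\le\; \frac1\alpha\sum_{i<j}\big(|\delta_i|\,p_i+|\delta_j|\,p_j\big)\cdot\one(\beta_j p_j<\beta_i p_i).
\]
Finally, for a fixed index $k$, there are $n-k$ ordered pairs $(k,j)$ with $j>k$ and $k-1$ pairs $(i,k)$ with $i<k$, so the weight $|\delta_k|\,p_k$ is counted at most $n-1\le n$ times in the last double sum; therefore that double sum is at most $n\sum_{k=1}^n|\delta_k|\,p_k$, and the inversion term is at most $\tfrac{n}{\alpha}\sum_{k=1}^n|\delta_k|\,p_k$. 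Adding the two bounds gives the claimed total completion time of at most $(1+\alpha)\opt+\big(1+\tfrac1\alpha\big)\,n\sum_{i=1}^n|\beta_i-\alpha|\,p_i$.

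There is no serious obstacle here: the corollary is a clean consequence of \Cref{lem:naive-algo-smoothness}. The only two points needing a little care are the sign manipulation that turns $\beta_j p_j<\beta_i p_i$ into the displayed inequality involving $|\delta_i|p_i+|\delta_j|p_j$, and the double-counting step, where one must check that each weight $|\delta_k|\,p_k$ contributes to at most $n$ of the ordered pairs.
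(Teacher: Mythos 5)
Your proposal is correct and follows essentially the same route as the paper: derive the pointwise bound $p_j-p_i\le\frac1\alpha\big(|\beta_i-\alpha|p_i+|\beta_j-\alpha|p_j\big)$ from the indicator condition $\beta_j p_j<\beta_i p_i$, then drop the indicator and count multiplicities to bound the inversion term by $\frac{n}{\alpha}\sum_i|\beta_i-\alpha|p_i$. The only cosmetic difference is that the paper carries $(n-1)$ factors before relaxing to $n$ at the end, whereas you relax to $n$ directly; the algebra is otherwise identical.
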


The advantage of this algorithm is that it is straightforward, as it does not even depend on~$\alpha$.
However, it lacks robustness: its performance can degrade arbitrarily given adversarial predictions. A standard approach of prior work on learning-augmented scheduling to address this issue is to run the algorithm ``concurrently'' with Round-Robin~\cite{PurohitSK18}. Specifically, the consistent algorithm is run at rate $\lambda$, while Round-Robin is run at rate $1 - \lambda$, yielding a consistency of $\nicefrac{1+\alpha}{\lambda}$ and robustness of $\nicefrac{2}{\lambda}$. Nonetheless, our goal is to %
achieve the \emph{perfect consistency} $1+\alpha$ while minimizing the robustness.

\smallskip
\noindent\textbf{Robust algorithm.\ }
We propose the following refined strategy: when a job emits its signal at time $t$, it receives preferential execution for $(\nicefrac{1}{\alpha} - 1) e_i(t)$ units of time, which is exactly to the remaining processing time if the prediction were accurate. If the job finishes during this phase, Round-Robin is resumed. Otherwise, the job is \emph{excluded} from Round-Robin until all other jobs have reached the same processing level or have been completed,
alternated with preferential execution when some job emits its signal.
In other words, our algorithm runs Shortest Elapsed Time First (SETF) for exploration phases.
SETF always processes equally the jobs that have received the least amount of processing so far.
This refined strategy corresponds to \cref{alg:robust-smooth} below, parametrized with $\rho = 1$. %

While this strategy improves robustness, it suffers from a critical weakness: \emph{brittleness} \cite{elenter2024overcoming, benomartradeoffs}. That is, its performance degrades abruptly even for arbitrarily small errors in the signal emission times. We formalize this claim in \cref{prop:brittle}, and further illustrate it with experimental results in \Cref{fig:exp-smoothness_rho}. %

To mitigate this, we introduce \cref{alg:robust-smooth}, which satisfies all desired criteria, i.e.,   perfect consistency, robustness, and smoothness.
This algorithm can be reinterpreted as an interpolation between the smooth algorithm from \cref{lem:naive-algo-smoothness} (recovered for $\rho\rightarrow 0$) and the robust strategy described above (recovered for $\rho=1$). Our main result on \cref{alg:robust-smooth} is \cref{thm:1-signal-main}.

\begin{algorithm}[h!]
\caption{$(1+\alpha)$-consistent, $(1+\nicefrac{1}{\alpha \rho})$-robust algorithm}\label{alg:robust-smooth}
\SetKwInput{Input}{Input}
\SetKwInOut{Output}{Output}
$\mathcal{J} \gets [n]$ \\
\While{$\mathcal{J} \neq \emptyset$}{
    Run SETF on $\mathcal{J}$. \\
    \If{a job $j$ emits its signal at time $t$}{
    Run job $j$ alone for $(\nicefrac{1}{\alpha \rho}-1) \cdot e_j(t)$ units of time. \\
    \If{job $j$ completes}{
        $\mathcal{J} \gets \mathcal{J} \setminus \{j\}$ \\
    }

    }
}
\end{algorithm}

\begin{restatable}{theorem}{singleSignalMain}\label{thm:1-signal-main}
Let $\rho \in (0,1]$. %
\cref{alg:robust-smooth} is $(1+\alpha)$-consistent, $(1 + \nicefrac{1}{\rho \alpha})$-robust, and if $\rho \in (0,1)$, %
\[
\alg \leq (1+\alpha)\opt + \frac{2 n}{\rho (1-\rho) \alpha^2} \sum_{i=1}^n |\beta_i - \alpha| p_i\;.
\]
\end{restatable}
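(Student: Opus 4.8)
The plan is to analyze Algorithm \ref{alg:robust-smooth} via the delay decomposition \eqref{eq:delay-decomp}, bounding the pairwise delays $d(i,j)+d(j,i)$ for each pair $i<j$ and combining. First I would establish the \textbf{consistency} claim: when $\beta_j=\alpha$ for all $j$, the preferential phase of length $(\nicefrac{1}{\alpha\rho}-1)e_j(t)$ starting at the signal time $t$ (where $e_j(t)=\beta_j p_j=\alpha p_j$) runs for $(\nicefrac{1}{\alpha\rho}-1)\alpha p_j = (\nicefrac{1}{\rho}-\alpha)p_j$ units; for $\rho\le 1$ this is at least $(1-\alpha)p_j=p_j-e_j(t)$, so job $j$ completes during its preferential phase and is never excluded. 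Thus SETF degenerates to Round-Robin interrupted by preferential runs, which is exactly the algorithm of \Cref{thm:alpha-clairvoyant} up to possibly over-running a job that has already finished — so one must argue that over-allocation past completion costs nothing and that the schedule coincides with the truthful-signal algorithm, giving the $(1+\alpha)$-competitive bound. (Alternatively, invoke \Cref{lem:naive-algo-smoothness} with $\beta_i=\alpha$, which kills both error sums.)

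Next, the \textbf{robustness} bound. Here I would argue that regardless of the $\beta_j$, each job $i$ delays another job $j$ by at most a controlled amount. The key structural observation is that SETF (Shortest Elapsed Time First) processes jobs in order of least elapsed time, so at any moment a not-yet-signaled or recently-signaled job has received at most as much processing as any job still "active" in the exploration pool; hence the exploration phases alone contribute delays no worse than Round-Robin, i.e., $d(i,j)\le \min(p_i,p_j)$ from exploration. The preferential phases add the over-allocation: a job $j$ granted preferential execution runs alone for $(\nicefrac{1}{\alpha\rho}-1)e_j(t)$ units where $e_j(t)=\beta_j p_j$, so the extra delay it inflicts on any other job is at most $(\nicefrac{1}{\alpha\rho}-1)\beta_j p_j \le (\nicefrac{1}{\alpha\rho}-1)p_j$, and crucially once $j$ is excluded it is processed only in short preferential bursts and cannot be further delayed by the exploration of others beyond its own level. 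Summing, $d(i,j)+d(j,i)\le 2\min(p_i,p_j) + (\nicefrac{1}{\alpha\rho}-1)(\text{terms in }p_i,p_j)$, and plugging into \eqref{eq:delay-decomp} and comparing with $\opt=\sum_i (n-i+1)p_i$ should yield exactly $(1+\nicefrac{1}{\rho\alpha})\opt$. The bookkeeping that makes the constant come out to $1+\nicefrac{1}{\rho\alpha}$ rather than something larger is the delicate part: one needs that a job contributes its $\nicefrac{1}{\rho\alpha}$-blowup only to jobs it actually precedes, matching the $(n-i+1)$ weighting.

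For the \textbf{smoothness} bound with $\rho\in(0,1)$, the plan is a hybrid argument interpolating between \Cref{lem:naive-algo-smoothness} (the $\rho\to 0$ limit, where no job is ever excluded since the preferential burst vanishes and SETF $\equiv$ Round-Robin between signals) and the robust strategy. I would split jobs into those that complete within their preferential phase (for which the analysis mirrors \Cref{lem:naive-algo-smoothness}, contributing the signal-timing and inversion error terms) and those that get excluded, which can only happen when $\beta_j$ overshoots enough that $(\nicefrac{1}{\alpha\rho}-1)\beta_j p_j < (1-\beta_j)p_j$, i.e., roughly when $\beta_j < \alpha\rho$ — an error event charged against $|\beta_j-\alpha|$. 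The factor $\frac{2}{\rho(1-\rho)\alpha^2}$ suggests that each unit of signal-timing error $|\beta_i-\alpha|p_i$ is amplified by $\nicefrac{1}{\rho\alpha}$ from the preferential over-run, by another $\nicefrac{1}{(1-\rho)\alpha}$ from the SETF exploration "catch-up" that a prematurely-excluded job forces on the pool, and by $n$ from the number of jobs it can affect, with the constant $2$ absorbing the symmetric $d(i,j)+d(j,i)$; tracking these three amplification sources cleanly is the crux.

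\textbf{Main obstacle.} I expect the hardest step to be the smoothness analysis of the SETF exploration phases when some jobs have been excluded: unlike Round-Robin, SETF's processing rates are state-dependent (only least-elapsed jobs run), so a single mistimed signal perturbs the elapsed-time profile of the whole pool and the delays must be re-accumulated carefully to extract the clean $\frac{2n}{\rho(1-\rho)\alpha^2}\sum_i|\beta_i-\alpha|p_i$ form without cross terms — in particular, showing the error does not compound multiplicatively across multiple bad jobs. The consistency and robustness bounds, by contrast, should follow fairly directly from the delay decomposition once the "completes-during-preferential-phase" dichotomy is pinned down.
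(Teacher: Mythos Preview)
Your high-level plan (use the delay decomposition \eqref{eq:delay-decomp} and bound $d(i,j)+d(j,i)$ pairwise) is exactly what the paper does, and your consistency argument is fine. But you are missing the structural observation that actually drives the proof, and because of that your robustness sketch is too loose and your smoothness plan heads in the wrong direction.

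\textbf{The missing observation.} For Algorithm~\ref{alg:robust-smooth}, the mutual delay $d(i,j)+d(j,i)$ between any two jobs $i,j$ depends \emph{only} on $(p_i,\beta_i,p_j,\beta_j,\alpha,\rho)$ --- it is completely insensitive to the other jobs in the instance. This holds because at every moment the algorithm is either (a) processing both $i$ and $j$ at equal rates (SETF when both are at the current minimum level), (b) processing exactly one of them (a preferential burst of $i$ or $j$, or SETF catching $j$ up to $i$'s level after $i$'s burst), or (c) processing neither (another job's preferential burst, or SETF on jobs below both). Phases of type (c) are irrelevant to the $(e_i,e_j)$ trajectory, and the durations of the type-(b) phases are determined entirely by $\beta_i p_i$, $\beta_j p_j$, and the burst length $(\tfrac{1}{\alpha\rho}-1)\cdot e(\cdot)$. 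So the ``main obstacle'' you identify --- SETF's state-dependence perturbing the whole pool, errors compounding across bad jobs --- is a non-issue: there is nothing global to track.

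\textbf{How the paper actually proceeds.} With this observation in hand, the paper does a single exhaustive case analysis on the pair $(i,j)$ (with $p_i\le p_j$, $k=\tfrac{1}{\rho\alpha}$): whether $k\beta_i\le 1$ or $>1$ (does $i$ finish in its burst?), whether $\beta_jp_j\gtrless\beta_ip_i$ (who signals first?), and whether $k\beta_j\ge 1$ (does $j$ finish in its burst?). In \emph{each} case they prove \emph{both} bounds simultaneously:
\[
d(i,j)+d(j,i)\le (1+k)\,p_i
\qquad\text{and}\qquad
d(i,j)+d(j,i)\le (1+\alpha)\,p_i+\tfrac{k+1}{\alpha-1/k}\bigl(|\beta_i-\alpha|p_i+|\beta_j-\alpha|p_j\bigr).
\]
Summing the first over all pairs gives robustness $(1+k)\opt$; summing the second gives the smoothness bound, with the constant $\tfrac{k+1}{\alpha-1/k}\le \tfrac{2}{\rho(1-\rho)\alpha^2}$ and the double sum $\sum_{i<j}(\Delta_i+\Delta_j)$ collapsing to $(n-1)\sum_i\Delta_i$. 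So robustness and smoothness come from the \emph{same} case analysis, not from separate arguments, and there is no ``hybrid'' or job-level split into completing versus excluded jobs.

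\textbf{Where your sketch falls short.} Your robustness bound ``$2\min(p_i,p_j)+(\tfrac{1}{\alpha\rho}-1)(\text{terms})$'' is not sharp enough: if the ``terms'' include $p_j$ you overshoot $(1+k)p_i$. The paper's case analysis is what pins the constant down --- e.g., when $j$ signals first but $\beta_jp_j\le\beta_ip_i\le p_i$, the burst of $j$ contributes at most $k\beta_jp_j\le kp_i$, not $kp_j$. For smoothness, your proposed split by job and worry about SETF re-accumulation would be much harder to make work and is unnecessary; the clean $\tfrac{2n}{\rho(1-\rho)\alpha^2}\sum_i|\beta_i-\alpha|p_i$ form comes directly out of the per-pair error bound above, with no cross terms ever arising.
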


In \cref{app:multi}, we show that these ideas can also be applied in the more general scheduling environment of $m$ \emph{parallel identical machines}. This setting requires tackling additional hurdles, and we give a $(1+\alpha)$-consistent and $(1+\nicefrac{1}{\alpha})$-robust algorithm as a preliminary proof-of-concept.

Interestingly, while most learning-augmented algorithms exhibit a \emph{consistency-robustness} tradeoff, our algorithm reveals a \emph{smoothness-robustness} tradeoff, governed by the parameter $\rho$.
In particular, for $\rho = 1$, the algorithm achieves robustness $1 + \nicefrac{1}{\alpha}$. We also prove %
a lower bound on the robustness of any $(1+\alpha)$-consistent algorithm for this problem. The construction is inspired from~\cite{WeiZ20}.

\begin{restatable}{theorem}{tradeoffTheorem}\label{thm:alpha-clairovyant-lower-bound}
    If $\alpha \in (0,1)$, then any $(1+\alpha)$-consistent deterministic algorithm has a robustness of at least $1+\Omega(\sqrt{\nicefrac{1}{\alpha}})$.
\end{restatable}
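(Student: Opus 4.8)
The plan is to prove the bound adversarially. Fix an arbitrary $(1+\alpha)$-consistent deterministic algorithm $\mathcal A$; following the scheme of \cite{WeiZ20}, I would exhibit a parametrized family of instances and show that for a suitable choice of the parameter (made after observing how $\mathcal A$ behaves) we have $\alg \ge (1+\Omega(\sqrt{1/\alpha}))\,\opt$. The family consists of a large population of unit-size ``filler'' jobs together with a controlled number $m$ of ``heavy'' jobs of a tunable size $L$, and the analysis pits two instances against each other: a fully truthful \emph{calibration} instance $I_{\mathrm{cal}}$, on which the consistency assumption rigidly constrains $\mathcal A$, and a \emph{trap} instance $I_{\mathrm{trap}}$ that is indistinguishable from $I_{\mathrm{cal}}$ up to a critical moment and then differs only in that a heavy job emits a misleading signal.

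The first step is calibration: I would use $(1+\alpha)$-consistency to bound $\mathcal A$'s behaviour from both sides. Since a truthful job of size $p$ gives no indication of its size until its actual progress reaches $\alpha$ (i.e., until elapsed time $\alpha p$), $\mathcal A$ cannot afford to over-invest in an as-yet-unsignaled job — planting a heavy truthful job in such a slot would make $\mathcal A$ process it for too long before it reveals itself, driving $\alg/\opt$ above $1+\alpha$; this forces $\mathcal A$ to process unsignaled jobs essentially round-robin. Conversely, when a job does signal at elapsed time $e$, the truthful world determines its size to be exactly $e/\alpha$, so $\mathcal A$ is (in aggregate, over a batch of such jobs) compelled to devote about $(1/\alpha-1)e$ of preferential processing to finish it, on pain of delaying a constant fraction of the jobs and again exceeding $1+\alpha$. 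Pinning down this forced ``commit'' quantity as a function of the scale at which signals are engineered is the technical core.

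The second step is the trap. In $I_{\mathrm{trap}}$ the heavy job has size $L$ but $\beta_J$ is tuned so that it signals at exactly the elapsed time it would in $I_{\mathrm{cal}}$; thus $\mathcal A$ is fooled into believing the job is almost finished and pours its forced commitment into it, but the job is in fact large, so that effort only delays the still-unfinished fillers. Writing $\opt(I_{\mathrm{trap}})$ (fillers first, heavy jobs last) and lower bounding $\alg(I_{\mathrm{trap}})$ by (wasted effort)$\,\times\,$(number of unfinished fillers), the ratio on the trap side grows with the product of $m$, the committed effort, and $1/\opt$, while on the calibration side the committed effort is itself forced to grow with the engineered scale. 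These two effects pull against each other — more aggressive commitment is worse on the trap, but the only way to keep commitment small is incompatible with consistency — and balancing them is an AM–GM trade-off whose crossover lies at scale $\asymp\sqrt{1/\alpha}$, yielding the bound $1+\Omega(\sqrt{1/\alpha})$.

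The main obstacle is that ``no signal yet'' is itself informative: a cautious $\mathcal A$ can unmask the heavy job simply by processing it a little past where a small job would have signaled, then abandon it before much harm is done. To keep the job hidden the trap must have it signal \emph{early} (small $\beta_J$), but that bounds the elapsed time at signal time and hence caps how much preferential effort consistency can force $\mathcal A$ to waste on it. It is exactly this ``early enough to stay hidden versus late enough to matter'' conflict that keeps the achievable lower bound at the $\sqrt{1/\alpha}$ scale rather than $1/\alpha$, and navigating it — while also ensuring $\opt(I_{\mathrm{trap}})$ stays dominated by the filler jobs so the ratio is not diluted by the heavy jobs' own contribution — is the delicate part of the construction.
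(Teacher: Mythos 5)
Your plan is recognisably inspired by the same source as the paper's proof (the lower-bound construction of \cite{WeiZ20}) and shares the overall calibration-versus-trap shape: a truthful instance on which consistency constrains the algorithm, and a perturbed instance indistinguishable from it up to a critical moment. However, the concrete constructions diverge, and your version has a genuine gap at its core. The step you label \enquote{the technical core} --- that $(1+\alpha)$-consistency \emph{forces} the algorithm to process unsignaled jobs essentially round-robin, and then to commit roughly $(1/\alpha-1)e$ of preferential execution to any job that signals at elapsed time $e$ --- is asserted but not proved, and it is not a straightforward consequence of consistency: a $(1+\alpha)$-consistent algorithm could re-order signaled jobs, interleave commitments, or amortise work across jobs in ways that defeat a per-job (or even small-batch) forced-commit argument. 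The paper avoids this difficulty entirely. Its calibration instance $I_1$ is simply $n$ truthful unit jobs; it runs the algorithm until the first job completes at time $t$, records the elapsed-time vector $(e_1,\dots,e_n)$ with $e_1=1\ge e_2\ge\dots\ge e_n$, and observes that $\alg \le (1+\alpha)\opt$ on $I_1$ already implies the single aggregate inequality $\sum_{j}(j-1)e_j \le \alpha\,\tfrac{n(n+1)}{2}$. No claim about preferential execution, no heavy/filler distinction, and no unmasking argument is needed.

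Your trap also points in the opposite direction from the paper's. You propose a heavy job whose early signal masquerades as small, so that committed effort is wasted on it. The paper's trap $I_2$ instead \emph{shrinks} every unfinished job down to $p_j=e_j+\varepsilon$ while deliberately keeping the signal pattern the algorithm saw on $I_1$ (permitted because signals may be adversarial in the robustness case). By determinism the algorithm behaves identically on $I_1$ and $I_2$ until time $t$, at which point every remaining job has only $\varepsilon$ of work left, so $\alg\ge nt$ while $\opt\to\sum_j j\,e_j$ as $\varepsilon\to 0$. Minimising this ratio over admissible $(e_j)$ subject to the calibration inequality, and then choosing $n=\Theta(\sqrt{1/\alpha})$, yields $1+\Omega(\sqrt{1/\alpha})$ in a few lines of algebra. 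If you want to salvage your route, you would need an explicit lemma lower-bounding the aggregate preferential processing a $(1+\alpha)$-consistent algorithm must devote to early-signaling jobs, and you would have to defeat the abandonment defence you yourself flag (the algorithm overshoots slightly past the predicted finish and bails); neither is carried out, and given the paper's shorter path to the same bound, the extra heavy/filler structure is not pulling its weight.
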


\subsection{Application to non-clairvoyant scheduling with predictions}\label{sec:better-consistency-robustness}

Our results for untrusted progress bars can be leveraged to design a learning-augmented algorithm for non-clairvoyant scheduling with predictions on the processing times. Specifically,
if the algorithm is provided with a prediction $ \pi_j $ of $ p_j $ for each job $ j \in [n] $,
we can frame it in
the setting of progress bars with granularity $ 1 $ as follows: we fix a parameter $ \alpha \in [0,1] $ and simulate that a job's progress bar jumps to level $\alpha$ when it is processed for $ \alpha \pi_j $ units of time. The scheduling is then performed using \cref{alg:robust-smooth} with parameter $\alpha $ %
for some $ \rho \in (0,1) $.
This yields a consistency of $ 1+\alpha $ and a robustness of $ 1+\nicefrac{1}{\rho \alpha} $. Moreover, defining $ \beta_j = \alpha \pi_j/p_j $, the error term appearing in \cref{thm:1-signal-main} becomes proportional to $\sum_j |\pi_j - p_j| $.

\begin{minipage}{0.72\textwidth}
Notably, this consistency-robustness tradeoff improves upon the best-known algorithms in the learning-augmented scheduling literature. Indeed, the state-of-the-art tradeoff achieves consistency $\nicefrac{1}{\lambda}$ and robustness $\frac{2}{1-\lambda}$ for $\lambda \in (0,1)$ \cite{PurohitSK18}, whereas, with our approach, choosing $ \alpha = \frac{1-\lambda}{\rho(1+\lambda)}$ leads to the same robustness $\frac{2}{1-\lambda}$ and consistency $1+\frac{1-\lambda}{\rho(1+\lambda)}$, which can be made arbitrarily close to $\frac{2}{1+\lambda}$ as $\rho \to 1$.
\end{minipage}
\hfill
\begin{minipage}{0.27\textwidth}
\includegraphics[width=0.9\linewidth]{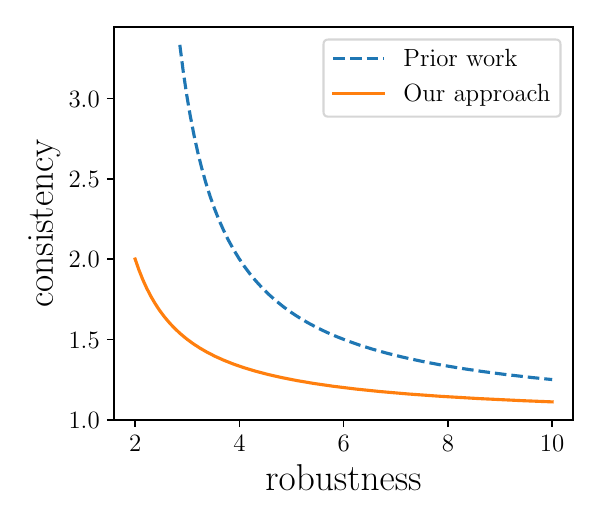}
\end{minipage}

Furthermore, existing algorithms typically rely only on the permutation induced by the predictions on the processing times, whereas our algorithm makes explicit use of the predicted values $ \pi_j $. This provides the first separation between algorithms using value-based and permutation-based predictions, %
two prediction types that were often treated indistinguishably in prior work~\cite{PurohitSK18,LindermayrM25permutation}.
\section{Scheduling with untrusted progress bar}\label{sec:adv-progress-bar}

We now consider the general setting where each job is equipped with a progress bar of granularity $g \in \mathbb N$, which can display progress levels $0 < \alpha^{(1)} < \cdots < \alpha^{(g)} < 1$, with jumps between these levels occurring for actual progress levels $0 \leq \beta^{(1)}_j \leq \ldots \leq \beta^{(g)}_j \leq 1$ for all $j \in [n]$.

A naive strategy in this setting is to select a single index $h \in [g]$ and apply \cref{alg:robust-smooth} using only the $h^{\text{th}}$ jump in the progress bar for each job. This yields the guarantees from \cref{thm:1-signal-main} with parameters $\alpha^{(h)}$ and $(\beta_j^{(h)})_j$. The natural question we ask is whether it is possible to achieve a performance close to that of choosing the best index in hindsight, among the $g$ indices available.

\subsection{Combining scheduling algorithms}

To address this question, we revisit a foundational problem in learning-augmented algorithms: how to combine multiple algorithms to achieve a performance close to the best among them. We approach this problem for scheduling in a general framework, which extends beyond progress bars. %

Suppose we are given $g$ %
algorithms $\A^{(1)}, \ldots, \A^{(g)}$, which may leverage different types of advice or predictions. For each $h \in [g]$ and for all $i \neq j \in [n]$, let $d^{(h)}(i,j)$ denote the delay that job $i$ causes to job $j$ under algorithm $\A^{(h)}$. We assume that each algorithm $\A^{(h)}$ has \emph{computable delays}, meaning that $d^{(h)}(i,j)$ can be determined when both jobs $i$ and $j$ have been completed (i.e., the computation may depend on $p_i$, $p_j$, the progress bar of both jobs, and possibly other information given during the computation). This assumption %
holds for most standard scheduling algorithms, whether or not they use advice, since analyzing the total completion time typically relies on studying mutual delays; e.g.\ %
Round-Robin satisfies $d^{(h)}(i,j) = \min\{p_i,p_j\}$. Note that it also holds for the algorithm that blindly follows the prediction in the single signal model since $d(i,j)$ can be expressed as a function of $p_i,\beta_i,p_j,\beta_j$ (see the proof of \cref{lem:naive-algo-smoothness}).

Our combining strategy generalizes the method of \cite{EliasKMM24}, which considers the restricted setting of multiple permutation predictions.
Their algorithm samples $m$ pairs of jobs at random, runs them to completion, uses the empirical inversion error to select the best-performing permutation, and then schedules the remaining jobs following that permutation. We extend this idea to general scheduling algorithms with computable delays. After completing the sampled jobs, we compute the cumulative delays induced by each algorithm on this sample. We then schedule the remaining jobs using the algorithm with the lowest empirical total delay. Crucially, our approach
works for any algorithms with computable delays, and is not restricted to algorithms with predictions.
We give a formal description in \cref{alg:combining}.

\begin{algorithm}[tb]
\caption{Combining multiple algorithms $\A^{(1)}, \ldots, \A^{(g)}$ with computable delays}\label{alg:combining}
\SetKwInput{Input}{Input}
\SetKwInOut{Output}{Output}
Sample $m$ pairs $(u_k, v_k)_{k=1}^m \sim {n \choose 2}$. \\
Run jobs $i \in \{u_k,v_k\}_{k=1}^m$ until completion in any order. \\
For all $h \in [g]$: compute $a(h) = \sum_{k=1}^m (d^{(h)}(u_k,v_k) + d^{(h)}(v_k,u_k))$. \\
Let $\hat{h} = \argmin_h a(h)$. Schedule the remaining jobs using algorithm $\A^{(\hat{h})}$. \\
\end{algorithm}

\begin{restatable}{theorem}{mainCombining}\label{thm:combining-algo}
Let $\A^{(1)}, \ldots, \A^{(g)}$ be any deterministic scheduling algorithms having computable delays. Then \cref{alg:combining} with $m = \frac{1}{8} n^{2/3} (\log g)^{1/3}$ satisfies %
\[
\Ebb[\alg] \leq \min_{h \in [g]} \A^{(h)} + \frac{9}{4} n^{5/3} (\log g)^{1/3} \max_{i \in [n]} p_i\;.
\]
\end{restatable}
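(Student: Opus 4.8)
The plan is to follow the standard "explore-then-commit" analysis, adapting the argument of \cite{EliasKMM24} from permutation predictions to general algorithms with computable delays. The algorithm's cost splits into two parts: (i) the cost incurred by processing the $m$ sampled pairs to completion up front, and (ii) the cost of running $\A^{(\hat h)}$ on the remainder. For part (i), each sampled job is run to completion, and the naive bound is that processing these $O(m)$ jobs first contributes an additive overhead; carefully, the extra delay a forced-early job $i$ imposes on any other job is at most $p_i \le \max_i p_i$, and there are at most $2m$ such jobs interacting with $n$ jobs, so the overhead here is $O(mn)\max_i p_i$. For part (ii), we use the delay decomposition \eqref{eq:delay-decomp}: $\A^{(\hat h)} = \sum_j p_j + \sum_{i\ne j} d^{(\hat h)}(i,j)$, and we compare $\sum_{i \ne j} d^{(\hat h)}(i,j)$ to $\sum_{i\ne j} d^{(h^\star)}(i,j)$ where $h^\star = \argmin_h \A^{(h)}$.

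The crux is a uniform concentration argument. For each $h$, define the true total delay $D(h) = \sum_{i<j}\bigl(d^{(h)}(i,j)+d^{(h)}(j,i)\bigr)$ over all $\binom n2$ pairs, and note $a(h)$ is (up to scaling by $\binom n2 / m$) an unbiased estimator of $D(h)$ formed by sampling $m$ pairs. Each summand $d^{(h)}(u_k,v_k)+d^{(h)}(v_k,u_k)$ lies in $[0, 2\max_i p_i]$ since a delay $d^{(h)}(i,j) = e_i(C_j) \le p_i$. By Hoeffding's inequality applied to the $m$ i.i.d.\ (or sampled-without-replacement, where Hoeffding still applies) terms, for each fixed $h$,
\[
\Pr\!\left[\,\Bigl|\tfrac{1}{m}a(h) - \tfrac{1}{\binom n2}D(h)\Bigr| > t\,\right] \le 2\exp\!\bigl(-\,m t^2 / (2 (\max_i p_i)^2)\bigr).
\]
Taking a union bound over the $g$ algorithms and choosing $t = \Theta\bigl(\max_i p_i \sqrt{(\log g)/m}\bigr)$ makes this small, so with high probability, for every $h$ simultaneously, $\bigl|\tfrac{1}{m}a(h) - \tfrac{1}{\binom n2}D(h)\bigr| \le t$. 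In particular $\hat h$ satisfies $D(\hat h) \le D(h^\star) + \binom n2 \cdot 2t = D(h^\star) + O(n^2 t)$. With $m = \Theta(n^{2/3}(\log g)^{1/3})$ we get $n^2 t = \Theta(n^2 \max_i p_i \sqrt{(\log g)/m}) = \Theta(n^{5/3}(\log g)^{1/3}\max_i p_i)$, matching the claimed order; on the low-probability failure event we bound the total delay trivially by $O(n^2)\max_i p_i$, which contributes negligibly in expectation after choosing the constant in $t$.

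I would then assemble the pieces: $\Ebb[\alg] \le \sum_j p_j + \Ebb[D(\hat h)] + (\text{sampling overhead})$. Using $\sum_j p_j + D(h^\star) = \A^{(h^\star)} = \min_h \A^{(h)}$ (since the delay decomposition holds for $\A^{(h^\star)}$ run on the full instance, and $d^{(h)}$ being computable is exactly what lets us equate the algorithm's cost with its delay decomposition), and combining the $O(n^{5/3}(\log g)^{1/3}\max_i p_i)$ term from the concentration bound with the $O(mn)\max_i p_i = O(n^{5/3}(\log g)^{1/3}\max_i p_i)$ sampling overhead, both terms are of the stated order; tracking constants with $m = \frac18 n^{2/3}(\log g)^{1/3}$ and the right $t$ yields the coefficient $\tfrac94$. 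The main obstacle I anticipate is the bookkeeping around running the sampled jobs first: one must argue that (a) forcing the $2m$ sampled jobs to complete before switching to $\A^{(\hat h)}$ only hurts the non-sampled jobs by at most $\max_i p_i$ each — because after the sampled jobs finish, the remaining instance is a fresh scheduling instance on $n - O(m)$ jobs and $\A^{(\hat h)}$'s delay decomposition applies there — and (b) the delays \emph{among} sampled jobs, and between a sampled and a non-sampled job, are each $\le \max_i p_i$, so their total contribution is $O(mn)\max_i p_i$; a subtle point is that $d^{(\hat h)}(i,j)$ for pairs where one or both were sampled no longer refers to what $\A^{(\hat h)}$ would have done, so the delay decomposition must be applied to the \emph{actual} schedule, with the sampled-job delays bounded crudely and the non-sampled pairwise delays matching $\A^{(\hat h)}$ exactly.
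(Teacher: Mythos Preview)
Your overall strategy matches the paper's: sample $m$ pairs, compute empirical delays $a(h)$, select $\hat h = \argmin_h a(h)$, and split the cost into a sampling overhead of $O(mn)\max_i p_i$ plus an estimation error comparing $\A^{(\hat h)}$ to $\A^{(h^\star)}$.

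Where you diverge is the concentration step, and there your argument has a real (though easily repaired) gap. You use Hoeffding plus a union bound over the $g$ algorithms to get uniform deviation $t = \Theta\bigl(\max_i p_i\,\sqrt{(\log g)/m}\bigr)$ with high probability, then handle the failure event via the trivial bound $D(\hat h) - D(h^\star) \le O(n^2)\max_i p_i$. But with this choice of $t$ the failure probability is $\Theta(g^{-c})$ for some constant $c$; it does not depend on $n$. The expected failure contribution is therefore $\Theta(g^{-c}) \cdot n^2 \max_i p_i$, which for any fixed $g$ is $\Theta(n^2)\max_i p_i$, strictly larger than the target $n^{5/3}(\log g)^{1/3}\max_i p_i$. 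Increasing $t$ to drive the failure probability down to $O(n^{-1/3})$ introduces an extra $\sqrt{\log n}$ factor in the main term, so no tuning of ``the constant in $t$'' recovers the stated bound.

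The paper avoids this by never passing through a high-probability statement. It sets $Z(h) = a(h) - \Ebb[a(h)]$, observes that each $Z(h)$ is subgaussian with variance factor $m(\max_i p_i)^2$, and applies the standard maximal inequality $\Ebb\bigl[\max_{h\in[g]} Z(h)\bigr] \le \max_i p_i\,\sqrt{2m\log g}$ directly (valid for any $g$ subgaussians, independent or not). Rescaling by $\binom{n}{2}/m$ gives the clean estimation term $n^2 \max_i p_i \sqrt{(\log g)/(2m)}$ in expectation, with no failure event to account for; the constant $\tfrac94$ then falls out of the arithmetic with $m = \tfrac18 n^{2/3}(\log g)^{1/3}$. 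Your argument becomes correct if you replace the high-probability-plus-trivial-bound step with this expected-max inequality (or, equivalently, integrate the Hoeffding tail $\Pr[\max_h|Z(h)|>s]$ over $s\in[0,\infty)$).

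A smaller point: your bookkeeping concern about sampled versus non-sampled pairwise delays is more elaborate than necessary. The paper dispatches it in one line: after phase~1, run $\A^{(\hat h)}$ as on the full instance but skip processing of already-completed jobs; each remaining job then finishes no later than it would under $\A^{(\hat h)}$, so the post-phase objective is at most $\A^{(\hat h)}$, and the phase-1 duration of at most $2m\max_i p_i$ adds at most $2mn\max_i p_i$ to the total. No per-pair analysis of the actual schedule is needed.
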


This bound immediately generalizes to random instances, or to algorithms using random parameters, %
as we explain in \cref{appx:combining-generalization}.
If $\max_i p_i = o(n^{-5/3} \opt)$, then the regret term in \cref{thm:combining-algo} becomes $o(\opt)$.  This condition %
is satisfied as soon as a constant fraction of jobs have size at least $\Omega(n^{-\epsilon} \max_i p_i)$ for some $\epsilon < \nicefrac{1}{3}$, as in that case $\opt = \Omega(n^{2-\epsilon} \max_i p_i) = \omega( n^{5/3} \max_i p_i)$.

\subsection{Application to untrusted progress bars}

We apply %
\cref{thm:combining-algo} to the setting of untrusted progress bars. For each $h \in [g]$, define $\A^{(h)}$ as \cref{alg:robust-smooth} instantiated with $\alpha = \alpha^{(h)}$, ignoring all jumps except the $h^{\text{th}}$ one. Additionally, we combine Round-Robin with the family of algorithms $(\A^{(h)})_h$ to guarantee robustness. Thus, we can set $\rho = 0$ for all algorithms $(\A^{(h)})_h$, which yields the best smoothness bound (\cref{lem:naive-algo-smoothness}).
Using \cref{lem:naive-algo-smoothness}, \cref{thm:combining-algo}, and the fact that Round-Robin is $2$-competitive, gives the following result.

\begin{corollary}
In the non-clairvoyant scheduling problem with progress bars, suppose each job $i$ emits signals after fractions $\beta^{(1)}_i \leq \ldots \leq \beta^{(g)}_i$ of its total processing time, instead of the announced $\alpha^{(1)} \leq \ldots \leq \alpha^{(g)}$. Then there exists an algorithm with expected total completion time at most
\[
\min\left(2\, \opt,\ \min_{h \in [g]} \A^{(h)} \right) + O \left(n^{5/3} (\log g) \cdot \max_{i \in [n]} p_i \right) \ ,
\]
where
$
\A^{(h)} \leq (1+\alpha^{(h)})\, \opt + \sum_{i=1}^n (n-i)(\beta^{(h)}_i - \alpha^{(h)})\, p_i + \sum_{i<j} (p_j - p_i) \cdot  \one(\beta^{(h)}_j p_j < \beta^{(h)}_i p_i) \ .$
\end{corollary}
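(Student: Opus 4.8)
The plan is to apply \cref{thm:combining-algo} essentially as a black box to a carefully chosen collection of $g+1$ algorithms, and then to substitute the smoothness bound of \cref{lem:naive-algo-smoothness} for each of them. Concretely, for each $h \in [g]$ I would let $\A^{(h)}$ be \cref{alg:robust-smooth} run with parameter $\alpha = \alpha^{(h)}$ and $\rho \to 0$ (equivalently, the ``blindly follow the signal'' algorithm of \cref{lem:naive-algo-smoothness}) applied to the single-signal instance obtained by keeping only the $h$-th jump of each progress bar, i.e.\ with signal fractions $\beta_i^{(h)}$. I would additionally set $\A^{(g+1)} = $ Round-Robin. The first thing to check is that all of these algorithms have computable delays: Round-Robin satisfies $d(i,j) = \min\{p_i,p_j\}$, and for the single-signal blind algorithm the delay $d(i,j)$ is an explicit function of $p_i,\beta_i^{(h)},p_j,\beta_j^{(h)}$ — this is exactly what is extracted in the proof of \cref{lem:naive-algo-smoothness}, so I would cite that.

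Next I would invoke \cref{thm:combining-algo} with $m = \tfrac18 n^{2/3}(\log(g+1))^{1/3}$, which gives
\[
\Ebb[\alg] \;\le\; \min\Bigl(\,\A^{(g+1)},\ \min_{h\in[g]} \A^{(h)}\Bigr) \;+\; \tfrac94 n^{5/3} (\log(g+1))^{1/3} \max_{i\in[n]} p_i\;.
\]
Then I would bound each term on the right: $\A^{(g+1)} \le 2\,\opt$ since Round-Robin is $2$-competitive, and for each $h \in [g]$, \cref{lem:naive-algo-smoothness} gives
\[
\A^{(h)} \;\le\; (1+\alpha^{(h)})\,\opt + \sum_{i=1}^n (n-i)(\beta_i^{(h)}-\alpha^{(h)})\,p_i + \sum_{i<j}(p_j-p_i)\cdot\one(\beta_j^{(h)}p_j < \beta_i^{(h)}p_i)\;.
\]
Absorbing $(\log(g+1))^{1/3} = O(\log g)$ into the $O(\cdot)$ and noting $\min(2\opt, \min_h \A^{(h)}) \le \min(2\opt, \min_h (\text{RHS bound on }\A^{(h)}))$ yields exactly the claimed inequality.

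The only genuinely non-routine point — and hence the main obstacle — is the interpretation step: justifying that running the single-signal algorithm on ``only the $h$-th jump'' is well-defined and correctly models a subproblem of the granularity-$g$ instance. One must argue that an algorithm may legitimately ignore the other $g-1$ jumps of the progress bar (they carry only extra information, so discarding them is allowed) and that the $h$-th jump of $\varphi_j$ occurs precisely when the actual progress reaches $\beta_j^{(h)}$, matching the single-signal model with announced level $\alpha^{(h)}$ and true fraction $\beta_j^{(h)}$. Once this correspondence is spelled out, \cref{lem:naive-algo-smoothness} applies verbatim to each $\A^{(h)}$, and the rest is bookkeeping. A secondary minor point is checking that combining $g+1$ (rather than $g$) algorithms only changes the $\log g$ factor by a constant, which is immediate. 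I would also remark, as the surrounding text does, that setting $\rho = 0$ for the $\A^{(h)}$'s is safe precisely because robustness is now supplied externally by including Round-Robin in the pool.
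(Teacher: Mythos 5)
Your proposal is correct and follows essentially the same route as the paper: instantiate $\A^{(h)}$ as \cref{alg:robust-smooth} with $\alpha=\alpha^{(h)}$ and $\rho\to 0$ on the $h$-th jump only, add Round-Robin as a $(g{+}1)$-st algorithm, verify computable delays, and apply \cref{thm:combining-algo} together with \cref{lem:naive-algo-smoothness} and the $2$-competitiveness of Round-Robin. The small bookkeeping points you flag (passing from $g$ to $g+1$ algorithms, and absorbing $(\log(g{+}1))^{1/3}$ into $O(\log g)$) are exactly what the paper silently does as well.
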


\subsection{Application to non-clairvoyant scheduling with predictions}\label{sec:combining-prediction-RR}
In the context of learning-augmented scheduling, \cref{thm:combining-algo}
can be used to combine Round-Robin with any consistent algorithm that has computable delays.
As an illustration, consider scheduling with a permutation prediction $\sigma$ that predicting the optimal job order. The algorithm that blindly trusts this prediction has a total cost of at most
$\opt + \sum_{i<j} (p_j - p_i) \cdot \one(\sigma(j) < \sigma(i))$ \cite{LindermayrM25permutation}, and has computable delays: for all $i \neq j$, the delay that job $i$ incurs to job $j$ is $d(i,j) = p_i \cdot \one(\sigma(i) < \sigma(j))$. %
Consequently, both algorithms can be combined using \cref{alg:combining}, yielding the following result.

\begin{corollary}
Given a permutation prediction $\sigma$, there exists an algorithm with expected total completion time at most
\[
\min\Big(2 \, \opt, \opt + \sum_{i<j} (p_j - p_i) \cdot \one(\sigma(j) < \sigma(i))\Big) + O(n^{5/3} \max_{i \in [n]} p_i) \ .
\]
\end{corollary}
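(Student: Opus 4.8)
The plan is to instantiate \cref{thm:combining-algo} with exactly two algorithms, so $g = 2$, and then trace through what the generic bound says in this special case. Take $\A^{(1)}$ to be Round-Robin, which is $2$-competitive, so $\A^{(1)} \leq 2\,\opt$, and which has computable delays since $d^{(1)}(i,j) = \min\{p_i,p_j\}$ as noted in the text. Take $\A^{(2)}$ to be the algorithm that blindly follows the permutation prediction $\sigma$; by \cite{LindermayrM25permutation} this algorithm has cost at most $\opt + \sum_{i<j}(p_j - p_i)\cdot\one(\sigma(j) < \sigma(i))$, and it has computable delays since $d^{(2)}(i,j) = p_i \cdot \one(\sigma(i) < \sigma(j))$, which depends only on $p_i$ and the (known) prediction $\sigma$, hence is determined once both jobs complete. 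So both hypotheses of \cref{thm:combining-algo} are met.

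Next I would simply plug in. \cref{thm:combining-algo} with $g = 2$ and $m = \tfrac18 n^{2/3}(\log 2)^{1/3}$ gives
\[
\Ebb[\alg] \;\leq\; \min\big(\A^{(1)}, \A^{(2)}\big) + \tfrac94 n^{5/3}(\log 2)^{1/3}\max_{i\in[n]} p_i \;=\; \min\Big(2\,\opt,\ \opt + \sum_{i<j}(p_j-p_i)\cdot\one(\sigma(j)<\sigma(i))\Big) + O\!\big(n^{5/3}\max_{i\in[n]} p_i\big),
\]
where the $\min$ follows from substituting the two cost bounds above, and $(\log 2)^{1/3}$ is absorbed into the constant hidden by the $O(\cdot)$. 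This is exactly the claimed statement.

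There is essentially no obstacle here beyond bookkeeping: the whole content of the corollary is already carried by \cref{thm:combining-algo}, and the only things to verify are (i) that Round-Robin and the permutation-trusting algorithm each satisfy the computable-delays condition — which is immediate from the explicit delay formulas — and (ii) that the claimed per-algorithm cost bounds hold, which are respectively the standard $2$-competitiveness of Round-Robin and the cited bound of \cite{LindermayrM25permutation}. The one place to be slightly careful is to confirm that the delay computation for $\A^{(2)}$ is legitimately ``computable'' in the sense of the framework: the delay $d^{(2)}(i,j)$ must be recoverable after jobs $i$ and $j$ have both finished, and since $\sigma$ is given up front and $p_i$ is revealed upon $i$'s completion, this holds. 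I would also remark (as the paper does right after \cref{thm:combining-algo}) that the regret term is $o(\opt)$ whenever $\max_i p_i = o(n^{-5/3}\opt)$, which recovers the consistency $1+o(1)$, robustness $2+o(1)$ claim for that class of instances, but this is a comment rather than part of the proof.
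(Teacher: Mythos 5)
Your proof is correct and follows exactly the paper's intended argument: the corollary is stated immediately after the paper explains that Round-Robin (with $d^{(1)}(i,j)=\min\{p_i,p_j\}$) and the permutation-following algorithm (with $d^{(2)}(i,j)=p_i\cdot\one(\sigma(i)<\sigma(j))$) both have computable delays and can be fed into \cref{thm:combining-algo} with $g=2$, and you carry out precisely that instantiation, including the correct observation that $\min(\A^{(1)},\A^{(2)})$ is bounded by the minimum of the two individual cost bounds. Nothing to add.
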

This result narrows the gap between known upper and lower bounds bounds on the consistency-robustness tradeoff for learning-augmented scheduling. Specifically, in the lower bound of \cite{WeiZ20}, setting $\lambda = \nicefrac{1}{\sqrt{n}}$ implies that any $(1 + O( \nicefrac{1}{\sqrt{n}}))$-consistent algorithm must have robustness at least $2 + \Omega( \nicefrac{1}{\sqrt{n}})$. %
In striking contrast, the best upper bound on this tradeoff from prior work is $(\frac{1}{\lambda}, \frac{2}{1-\lambda})$ for $\lambda \in (0,1)$ \cite{PurohitSK18}, which we improved to $(\frac{2}{1+\lambda}, \frac{2}{1-\lambda})$ in \cref{sec:better-consistency-robustness}. %
Our approach, for the first time, achieves consistency $1 + o(1)$ and robustness $2 + o(1)$ on instances satisfying $\max_i p_i = o(n^{-5/3} \opt)$, %
providing a first piece of evidence that near-perfect consistency and robustness could be simultaneously achievable in learning-augmented scheduling.

\section{Scheduling with stochastic progress bar}\label{sec:stochastic-progress-bar}
\paragraph{Model.} %
The models studied in the previous sections may appear overly pessimistic, as real-world progress bars are not adversarial but instead provide noisy approximations of the true progress~\cite{weerasiri2017taxonomy,hilman2018task,xie2021two}.
To better capture this, we study a new model of stochastic progress bars that naturally fits applications in resource management and machine learning. For example, consider training a neural network until it reaches a certain accuracy. The training can be seen as a sequence of optimization steps (e.g., gradient descent updates) for which the number of iterations is not known precisely in advance. Yet, at any time, the fraction of the training that has been completed can be estimated from the current validation error, and from the appropriate scaling laws.

Specifically, we consider progress bars $\varphi$ of granularity $ g\in \Nbb$,
where we model the jumps of the progress bar by the first $g$ jumps of a Poisson process with rate $g$, meaning that the elapsed times between jumps are independent exponential random variables with parameter $g$. Poisson processes are a standard tool for modeling random event arrivals in queueing theory and related areas \cite{benevs1957queues, wolff1982poisson, gallager1996poisson}.
Formally, for $j \in [n]$, denote by $ \tilde{\beta}_j^{(1)} \leq \cdots \leq \tilde{\beta}_j^{(g)} $ the first $g$ points of a Poisson point process with rate~$g$ on $\Rbb^+$. Set $ \beta_j^{(h)} = \min(1, \tilde{\beta}_j^{(h)})$ for all $h \in [g]$ and $\beta_j^{(g+1)} = 1$. We define a stochastic progress bar as follows,
\[
\varphi_j(x) = \frac{1}{g+1} \sum_{h=1}^{g+1} \one(x \geq \beta_j^{(h)})\;.
\]

\begin{remark}[Uniform progress bars] An alternative model is the one in which $\{\beta_j^{(1)},\cdots, \beta_j^{(g)}\}$ are drawn i.i.d.\ and uniformly from $[0,1]$, and then arranged in increasing order. This stochastic setting is sometimes called Binomial point process \cite{kallenberg2017random}, because, for any $x\in [0,1]$, the value $g\varphi(x)$ is a Binomial random variable, specifically $g\varphi(x) \sim \Bcal(g,x)$. This contrasts with the Poisson point process model defined above, for which $g\varphi(x)$ is a Poisson random variable, specifically $g\varphi(x) \sim \Pcal(gx)$. Both models are closely related via the well-known Poisson approximation, which is a classical technique in probabilistic analysis \cite{mitzenmacher2017probability}, and which essentially states that $\Bcal(g,x)$ and $\Pcal(gx)$ are statistically indistinguishable for small values of $x\in [0,1]$. The Poisson formulation often allows for cleaner proofs, which is why we focus on this model
in the theorem statements.
\end{remark}
\paragraph{Connection to multi-armed bandits.}
Our setting is analogous to a multi-armed bandit problem with $n$ arms \cite{lattimore2020bandit}, where arms can be pulled in continuous time instead of discrete rounds, and rewards are generated over time---specifically, each jump of a progress bar from one level to the next is treated as a reward. Short jobs %
correspond to arms that emit rewards more frequently, aligning with our scheduling objective: %
complete shorter jobs as quickly as possible.

A key feature of our problem, however, is that the number of jobs decreases over time, resembling the \emph{mortal bandits} setting \cite{chakrabarti2008mortal}.
In mortal bandits, algorithms typically rely on an aggressive exploration phase of the currently alive arms, followed by strong exploitation, contrasting with the more gradual exploration-exploitation tradeoff seen in classical algorithms like $\varepsilon$-greedy \cite{sutton1998reinforcement} or Upper Confidence Bound (UCB) methods \cite{auer2002finite}. This intuition carries over to our setting in which the measure of performance is a competitive ratio: when two jobs have similar processing times, it is preferable to commit early to one of the two jobs rather than to waste time estimating which is the shortest of the two.%

\paragraph{Repeated Explore-Then-Commit algorithm.}
Following this intuition, we propose an algorithm that alternates between exploration and exploitation. It runs Round-Robin on all alive jobs until the displayed progress of one of them reaches a threshold $ \nicefrac{\threshold}{g+1} $ (for a carefully chosen parameter $ \threshold $), at which point the algorithm fully commits to completing that job. It then resumes Round-Robin on the remaining jobs and iterates this process. Thus, the algorithm alternates between phases of exploration (Round-Robin) and aggressive exploitation (committing to a job); see \cref{alg:stochastic-setting}
It can also be seen as a variant of \cref{alg:robust-smooth} that ignores all except the $k$-th signals and where $\alpha \rho \to 0$.

\begin{algorithm}[h!]
\caption{Repeated Explore-Then-Commit with threshold $\threshold$}\label{alg:stochastic-setting}
\SetKwInput{Input}{Input}
\SetKwInOut{Output}{Output}
$\mathcal{J} \gets [n]$ \\
\While{$\mathcal{J} \neq \emptyset$}{
    Round-robin step on $\mathcal{J}$. \\
    \If{$\exists j \in \mathcal{J}: X_j(e_j(t)) \geq \nicefrac{\threshold}{g+1}$}{
    Run job $j$ until completion and set $\mathcal{J} \gets \mathcal{J} \setminus \{j\}$. \\
    }
}
\end{algorithm}

In the next theorem, we prove an upper bound on the competitive ratio of \cref{alg:stochastic-setting}.

\begin{restatable}{theorem}{stochasticExpectation}\label{thm:ETC-expectation}
Let $g \geq 12$. For $\threshold = \lceil (g/2)^{2/3}\rceil + 1$, \cref{alg:stochastic-setting} has a (expected) competitive ratio of at most $1+ (12/g)^{1/3}$ for minimizing the total completion time with a stochastic progress bar.
\end{restatable}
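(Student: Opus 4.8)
\textbf{Proof plan for \cref{thm:ETC-expectation}.}

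The plan is to use the delay decomposition \eqref{eq:delay-decomp} and to bound, in expectation, the total additional delay incurred by \cref{alg:stochastic-setting} over the optimal schedule. Recall that $\opt = \sum_j p_j + \sum_{i<j}\bigl(d^{\opt}(i,j)+d^{\opt}(j,i)\bigr)$ where for the optimal (Smith) order $d^{\opt}(i,j) = p_i\cdot\one(p_i\le p_j)$, so for each pair $i<j$ (recall $p_i\le p_j$) the optimal pairwise contribution is exactly $p_i$. For \cref{alg:stochastic-setting}, the pairwise contribution $d(i,j)+d(j,i)$ is always at least $p_i$ (any schedule must give $i$ at least $p_i$ work, all of which delays $j$ if $i$ finishes first, or symmetrically) — I would first record this and reduce the theorem to bounding $\Ebb\bigl[\sum_{i<j}\bigl(d(i,j)+d(j,i)-p_i\bigr)\bigr] \le (12/g)^{1/3}\,\opt$. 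Since $\opt \ge \sum_{i<j} p_i + \sum_j p_j \ge \sum_{i<j} p_j$ is too weak, I'd instead use $\opt \ge \sum_{i<j} \max\{p_i, \tfrac{?}{}\}$; more precisely the natural target is to show the excess pairwise delay is at most $(12/g)^{1/3}\cdot(\text{something} \le \opt)$, and the cleanest route is a \emph{per-pair} argument: for each pair $\{i,j\}$ with $p_i \le p_j$, show $\Ebb[d(i,j)+d(j,i)] \le p_i + (12/g)^{1/3}\,p_j$ (roughly), then sum, using $\opt \ge \sum_{i<j} p_j$... but that last inequality is false in general. So the accounting must be done globally, comparing to $\opt = \sum_i (n-i+1)p_i$ directly.

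The core of the argument is therefore a single-phase analysis. Consider one exploration phase of Round-Robin on an alive set $\mathcal J$. I would argue: (i) with high probability, the job that first reaches displayed progress $\threshold/(g+1)$ is \emph{nearly shortest} among $\mathcal J$ — quantitatively, if job $a$ triggers the commit after receiving elapsed time $e$, then by the Poisson model $\varphi_a(e/p_a)\ge \threshold/(g+1)$ means the $\threshold$-th Poisson($g$) point of $a$'s process lies at $e/p_a$, so $e/p_a$ concentrates around $\threshold/g$ (mean of the $\threshold$-th arrival is $\threshold/g$, standard deviation $\sqrt{\threshold}/g$); and for \emph{every other} alive job $j$ that has received the same elapsed time $e$ in Round-Robin, $\varphi_j(e/p_j) < \threshold/(g+1)$, so $e/p_j$ is (whp) not much larger than $\threshold/g$, giving $p_j \gtrsim p_a(1 - O(\sqrt{\threshold}/\threshold) - O(1/\threshold))= p_a(1-O(\threshold^{-1/2}))$. (ii) The ``wasted'' work in this phase is at most $e\cdot|\mathcal J| \approx (\threshold/g)\, p_a\, |\mathcal J|$, and $p_a \le \min_{\mathcal J} p_j / (1-O(\threshold^{-1/2}))$, so relative to the $|\mathcal J|\cdot\min_{\mathcal J}p_j$ term that $\opt$ must pay anyway in this phase, the overhead is $O(\threshold/g + \threshold^{-1/2})$. (iii) The committed job is completed out of order but it is within a $(1+O(\threshold^{-1/2}))$ factor of the shortest, so promoting it costs at most an $O(\threshold^{-1/2})$ fraction of its own contribution to $\opt$. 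Summing over all (at most $n$) phases and collecting terms, the total overhead factor is $O(\threshold/g + \threshold^{-1/2})$, which is optimized at $\threshold \asymp g^{2/3}$ to give $O(g^{-1/3})$; chasing the constants with $\threshold = \lceil(g/2)^{2/3}\rceil+1$ and $g\ge 12$ yields the stated $(12/g)^{1/3}$.

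For the concentration estimates in step (i) I would use that the $\threshold$-th arrival time $T_\threshold$ of a rate-$g$ Poisson process is $\mathrm{Gamma}(\threshold, g)$, together with one-sided Chernoff bounds: $\Pbb[T_\threshold \le \threshold/g - s]$ and $\Pbb[T_\threshold \ge \threshold/g + s]$ are both exponentially small in $g s^2/\threshold$, so choosing the tolerance $s \asymp \sqrt{\threshold\log n}/g$ makes the failure probability polynomially small in $n$ and the $O(\sqrt{\log n})$ factor is absorbed since it only affects lower-order terms (for the clean constant one may instead handle the bad event by the trivial bound that any phase wastes at most $O(|\mathcal J|\max_i p_i)$ work and argue this contributes negligibly in expectation — this is where the $g\ge 12$ slack is used). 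On the rare event that $\tilde\beta_a^{(\threshold)} > 1$, i.e.\ job $a$ completes before its bar reaches the threshold, the job is simply finished during Round-Robin with no commit, which only helps; I'd note this reduces to the same bound. The main obstacle is executing the \emph{global} amortization in step (ii)–(iii) cleanly: the phases are not independent and the alive set shrinks adversarially in size, so one has to set up the right potential/charging (charging each phase's overhead against the $\opt$-contribution $\sum_{j\in\mathcal J}(\text{rank of }j)\,p_j$ of the jobs alive in that phase, and checking these charges telescope without double-counting) — this bookkeeping, rather than any single probabilistic estimate, is the delicate part.
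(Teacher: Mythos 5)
Your opening reduction is correct, and the per-pair decomposition you consider (before abandoning it) is exactly the paper's route: if $\tau_i$ denotes the work job $i$ receives before its bar first hits the threshold, then for $p_i\le p_j$ the Repeated-ETC schedule gives
\[
d(i,j)+d(j,i)=p_i+\min(\tau_i,\tau_j)+(p_j-p_i)\,\one(\tau_j<\tau_i)\,.
\]
Since $\Ebb[\tau_i]=(\threshold/g)p_i$ and $\tau_i,\tau_j$ are independent, $\Ebb[\min(\tau_i,\tau_j)]\le(\threshold/g)p_i$. The step you give up on is the cross term: the paper shows $(p_j-p_i)\,\Pbb(\tau_j<\tau_i)\le c\,p_i/\sqrt{\threshold}$ \emph{with $p_i$, not $p_j$, on the right-hand side}. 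The point you miss is that $\tau_i\sim\mathrm{Gamma}(\threshold,g/p_i)$ and $\tau_j\sim\mathrm{Gamma}(\threshold,g/p_j)$, and an explicit integral computation of $\Pbb(\tau_j<\tau_i)$ followed by a calculus bound shows that $(r-1)\,\Pbb(\tau_j<\tau_i)$, with $r=p_j/p_i\ge1$, is uniformly $O(\threshold^{-1/2})$: when $r$ is large the mis-ordering probability is exponentially small and absorbs the $(r-1)$ factor. Hence $\Ebb[d(i,j)+d(j,i)]\le\bigl(1+\threshold/g+O(\threshold^{-1/2})\bigr)p_i$, and summing over pairs against $\opt=\sum_i p_i+\sum_{i<j}p_i$ finishes the proof with no phase accounting whatsoever.

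You dismissed the per-pair route on the grounds that a bound of the shape $p_i+\varepsilon p_j$ would require the false inequality $\opt\ge\sum_{i<j}p_j$; the premise is right, but the conclusion — that one must retreat to a global phase-based charging scheme — is not. The per-pair bound can be stated entirely in terms of $p_i$, precisely because $\{\tau_j<\tau_i\}$ becomes rare in proportion to the job-size gap. This is the genuine gap in your plan: the key probabilistic estimate (a uniform bound on $(p_j/p_i-1)\Pbb(\tau_j<\tau_i)$) is absent, and the phase-based amortization you substitute for it in steps (ii)--(iii) is never executed — you acknowledge it as ``the delicate part'' without supplying it. In addition, the concentration estimates you invoke in step (i) bring in $\sqrt{\log n}$ factors, which cannot appear in the stated bound since the competitive ratio is uniform in $n$; the paper's argument avoids this entirely because it never takes a union bound over jobs.
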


A natural question is whether the convergence rate of $1 + O(g^{-1/3})$ is optimal.
We answer this question affirmatively, and establish the asymptotic optimality of our algorithm.

\begin{restatable}{theorem}{stochasticLB}\label{thm:lower-bound-stochastic}
The (expected) competitive ratio of any scheduling algorithm with a Poisson stochastic progress bar of granularity $g \in \mathbb N$ is at least $1+\frac{1}{36}g^{-1/3}$.
\end{restatable}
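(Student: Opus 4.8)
The plan is to construct a hard instance with just two jobs, force any algorithm to "waste" $\Omega(g^{-1/3})$ fraction of its processing effort on the wrong job, and conclude a competitive-ratio lower bound from the delay decomposition \eqref{eq:delay-decomp}. Concretely, I would take two jobs with processing times $p_1 = 1$ and $p_2 = 1 + \delta$ for a carefully chosen gap $\delta = \Theta(g^{-1/3})$ (or possibly a random perturbation that makes the two jobs symmetric a priori), each equipped with an independent Poisson stochastic progress bar of rate $g$. The optimal cost is $\opt = p_1 + (p_1 + p_2) = 3 + \delta + p_1$, roughly; more importantly, $\opt$ differs from the cost of the better of the two "commit immediately to job $i$" strategies by only $O(\delta)$. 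The algorithm's total cost, via \eqref{eq:delay-decomp} with $n=2$, is $\alg = p_1 + p_2 + d(1,2) + d(2,1)$, and since exactly one of $d(1,2), d(2,1)$ equals its job's full length while the other is the "wasted" processing on the longer job before the shorter one finishes, minimizing $\alg$ amounts to minimizing the expected amount of processing given to job $2$ before job $1$ completes (when $p_1 < p_2$).

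The core of the argument is an information-theoretic / statistical-indistinguishability step. Until a job has been processed for a fraction $x$ of its length, the only information the algorithm has about that job is the restriction of its progress-bar jump process to $[0,x]$, which is a Poisson process of rate $g$ observed on a window of length $x \cdot p_j$. To distinguish $p_1 = 1$ from $p_2 = 1+\delta$ — equivalently, to tell which of two rate-$g$ Poisson processes is "faster" per unit of the job's own length — from observations over a processing window of total length $t$, one needs roughly $t \gtrsim 1/(g\delta^2)$ by a standard estimation lower bound (the KL-divergence between the two observation laws over a window of length $t$ scales like $g t \delta^2$). Any algorithm that commits to the wrong job pays $\Omega(p_2 - p_1) = \Omega(\delta)$ extra; any algorithm that runs Round-Robin-like exploration long enough to be confident pays $\Omega(1/(g\delta^2))$ in wasted symmetric processing on the longer job. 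Balancing $\delta \asymp 1/(g\delta^2)$ gives $\delta \asymp g^{-1/3}$ and an unavoidable additive loss of $\Omega(g^{-1/3})$ over $\opt$, hence competitive ratio $\geq 1 + \Omega(g^{-1/3})$; tracking constants should yield the stated $\tfrac{1}{36}$.

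To make the two-sided tradeoff rigorous without case analysis on the algorithm's behavior, I would use Le Cam's two-point method: put a prior that with probability $\tfrac12$ makes job $1$ the short one and with probability $\tfrac12$ makes job $2$ the short one (swapping the roles of the two progress bars), so the instance is exchangeable and the algorithm cannot exploit job identity. Let $\tau$ be the (random) total processing the algorithm devotes to the two jobs up to the first completion; condition on $\{\tau \leq T\}$ versus $\{\tau > T\}$ for the threshold $T \asymp 1/(g\delta^2)$. On $\{\tau > T\}$ the algorithm has already spent $\Omega(T)$ of symmetric processing, contributing $\Omega(T) = \Omega(g^{-1/3})$ to $\alg - \opt$ in expectation (the "exploration cost"). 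On $\{\tau \leq T\}$, the algorithm's choice of which job to finish first is a test between the two hypotheses based on at most a length-$T$ window of each Poisson process, whose total-variation distance is $O(\sqrt{gT\delta^2}) = O(1)$ and can be made $\leq \tfrac12$ by choosing the constant in $T$; hence it errs with probability $\geq \tfrac14$, contributing $\Omega(\delta) = \Omega(g^{-1/3})$ (the "commitment cost"). Either way $\Ebb[\alg] \geq \opt + \Omega(g^{-1/3})$, and dividing by $\opt = \Theta(1)$ finishes the proof.

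The main obstacle I anticipate is the continuous-time, adaptively-chosen-window nature of the observations: the algorithm decides in an online, history-dependent way how much to process each job, so the "data" it sees is not a fixed-length Poisson sample but an optional-stopping view of the process, and one must argue the KL/TV bound still holds for this adaptive observation scheme. The clean way around this is to note that a Poisson process of rate $g$ run on job $j$ for elapsed time $e_j$ is, in the job's own progress coordinate, a Poisson process of rate $g p_j$ on $[0, e_j/p_j]$, and to bound the likelihood ratio between the two hypotheses pathwise (it depends only on the number of jumps seen and the observed length, via the Girsanov-type formula $(p_1/p_2)^{N}\exp((p_2-p_1)\cdot(\text{rate})\cdot(\text{length}))$), so that a uniform bound over all stopping rules follows. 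Getting the constant $\tfrac{1}{36}$ will then be a matter of bookkeeping in choosing $\delta$ and the exploration threshold.
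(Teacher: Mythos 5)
Your high-level strategy is the same as the paper's: a two-job instance with gap $\Theta(g^{-1/3})$, random labeling so the jobs are exchangeable a priori, and a Poisson KL--Pinsker bound showing the algorithm cannot distinguish the two jobs from the progress bars observed over a window of length $\Theta(g^{-1/3})$, hence must assign the wrong job ``priority'' with constant probability and incur an $\Omega(g^{-1/3})$ additive loss.

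There are two places where the proposal as written diverges from (and is weaker than) the paper. First, your two-case split conditions on the \emph{total processing time until the first job completes}, call it $\tau$, and claims that on $\{\tau > T\}$ the algorithm has necessarily wasted $\Omega(T)$ on symmetric exploration. This is not true: the algorithm that immediately commits to a job and runs it to completion always has $\tau \geq 1 > T$, yet on the half of the instances where it happened to pick the short job it incurs no waste at all; its waste appears only on the complementary event, which your conditioning misclassifies as ``exploration.'' The paper avoids this by defining a single cleaner event, the \emph{preference} event (which job first reaches elapsed time $\tau$, for a fixed $\tau = \frac{1}{4}g^{-1/3}$), and showing directly via Lemma~\ref{lem:ratio-lb} that preferring the long job already implies an additive loss of $\tau$; there is no need for a separate ``exploration cost'' case, because any processing given to the long job before the short one reaches elapsed time $\tau$ is automatically counted in the delay decomposition. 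Second, to handle the adaptive, optional-stopping nature of the observations you propose a Girsanov-type likelihood-ratio bound; the paper instead uses the more elementary fact that for Poisson point processes, the total variation distance of the entire path on $[0,\tau]$ equals that of the terminal count $X(\tau)$ (conditional on the count, jump times are i.i.d.\ uniform), so the count-level TV bound automatically dominates any adaptive observation rule. Both routes can be made rigorous, but the paper's path-vs-endpoint lemma plus the preference/coupling argument of Lemma~\ref{lem:preference} is considerably shorter and does not require defining a threshold $T$ at all.

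So: right instance, right information-theoretic quantity, right rate, but the case split on the first-completion time needs to be replaced by the elapsed-time milestone event, and the adaptive-observation step can be simplified by the path-equals-endpoint TV identity.
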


The proof of this lower bound is similar to lower bound arguments for stochastic bandits, yet applied in the different context of scheduling and of competitive analysis. We show that for two jobs with similar processing times, say $p_1=1$ and $p_2=1+\tau$ with $\tau = g^{-1/3}$, the processes $  e\rightarrow X_1(e)$ and $e\rightarrow X_2(e)$ are statistically indistinguishable with non-negligible probability if  $e < \tau$. %
Morally, this means that the best thing an algorithm can do is running both jobs in parallel until a progress of $\tau$.

Finally, we also prove an upper bound on the competitive ratio with high probability in \cref{sec:stochastic-high-probability}.
\section{Experiments}\label{sec:experiments}

We now present empirical experiments that validate our theoretical results. In the experiments, we consider instances with $n=500$ jobs, where processing times are sampled independently from a Pareto distribution with parameter $1.1$. Each point in the figures represents an average over at least 20 independent trials, with standard deviation indicated.
In this section, we give a short overview of our findings. An extensive description of our setup and results is given in \Cref{app:experiments}.

\begin{figure}[htb]
\centering
\begin{subfigure}[T]{0.32\textwidth}
    \includegraphics[width=\linewidth]{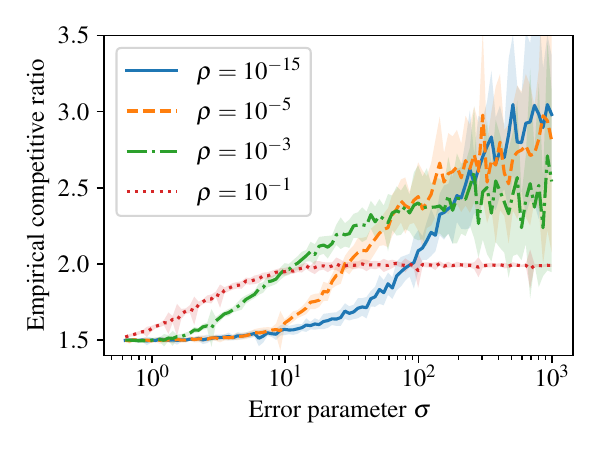}
    \caption{Alg. \ref{alg:robust-smooth} for different values of $\rho$}
    \label{fig:exp-smoothness_rho}
\end{subfigure}
\hfill
\begin{subfigure}[T]{0.32\textwidth}
    \includegraphics[width=\linewidth]{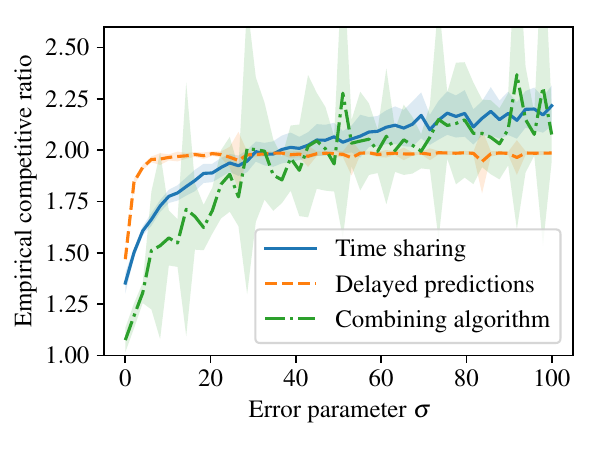}
    \caption{Robustification algorithms}
    \label{fig:exp-robustification}
\end{subfigure}
\hfill
\begin{subfigure}[T]{0.32\textwidth}
    \includegraphics[width=\linewidth]{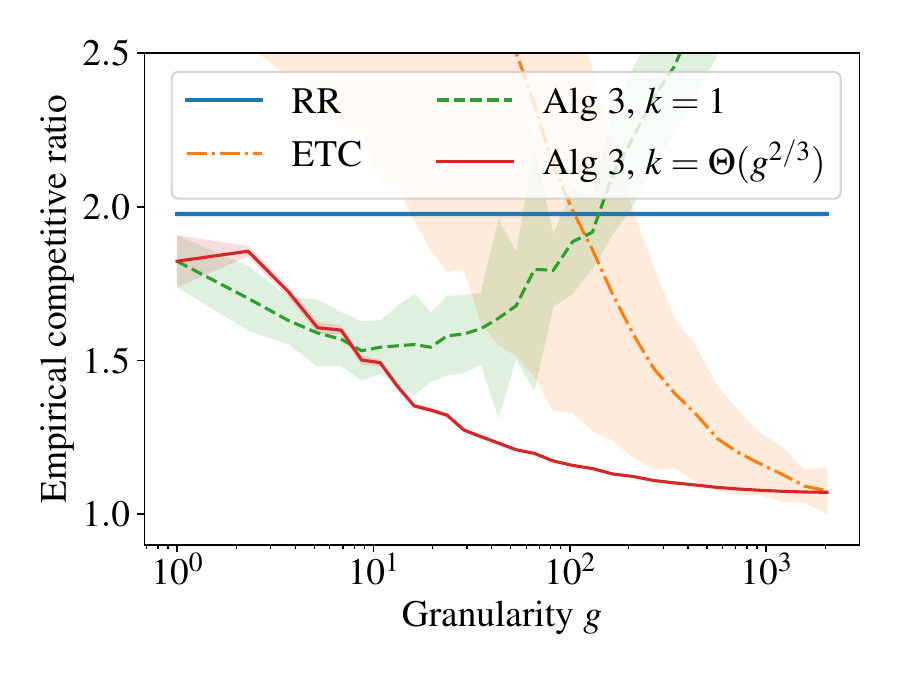}
    \caption{Stochastic Progress Bar}
    \label{fig:exp-stochastic}
\end{subfigure}
\caption{}
\label{fig:experiments}
\end{figure}

In \cref{fig:exp-smoothness_rho,fig:exp-robustification}, the predictions $(\pi_i)_i$ are noisy estimates of the true job sizes $p_i$, with independent Gaussian noise: $\pi_i \sim \mathcal{N}(p_i, \sigma)$.
\cref{fig:exp-smoothness_rho} shows the behavior of \cref{alg:robust-smooth}, with $\alpha = 0.5$, for various values of $\rho$, where signal emission times are computed as $\beta_i = \max(0, \min(1, \pi_i/p_i))$. As \cref{thm:1-signal-main} suggests, setting $\rho$ close to 1 improves robustness, but the algorithm's performance is not smooth. On the other hand, smaller $\rho$ values improve smoothness at the expense of robustness.

\cref{fig:exp-robustification} compares three robustification strategies: \textit{time sharing}~\cite{PurohitSK18}, \textit{delayed predictions} (\cref{sec:better-consistency-robustness}), and the \textit{combining algorithm} (\cref{sec:combining-prediction-RR}). The first two use tuned hyperparameters to ensure the same robustness level. From the figure, delayed predictions appear to guarantee better robustness than time sharing, but the performance is highly sensitive to errors. In contrast, the combining algorithm requires no parameter tuning and achieves both strong robustness and consistency, while keeping good smoothness guarantees. The gap in performance between the combining algorithm and the other strategies becomes more significant for larger values of $n$.

Finally, in \Cref{fig:exp-stochastic}, we compare the performance of \cref{alg:stochastic-setting} (with $k=1$ and $k=\Theta(g^{2/3})$), Round-Robin (RR), and a generic Explore-Then-Commit (ETC) algorithm, which runs RR until all jobs reach displayed progress $\Theta(g^{-1/3})$, and then schedules the jobs sequentially in decreasing order of their displayed progress. Our results show that the algorithm of \Cref{thm:ETC-expectation} clearly outperforms the other algorithms over the full range of granularities. Moreover, the empirical competitive ratio approaches $1$ as $g$ grows, validating the bound of \Cref{thm:ETC-expectation}.

\section{Conclusion and open directions}

We initiated the study of progress bars in the context of non-clairvoyant scheduling. For both adversarial and stochastic progress bars, we developed algorithms and analyzed their performance guarantees.
We believe that this framework is suited to other online problems and that it provides a realistic perspective going beyond worst-case analysis.

Finally, we highlight two open directions. First, in the adversarial single-signal progress bar, there is a gap of $\Theta(\alpha^{-1/2})$ between the upper bound and the lower bound in our analysis of the consistency-robustness tradeoff (cf. \Cref{thm:1-signal-main} and \
\Cref{thm:alpha-clairovyant-lower-bound}). Finding out the exact asymptotic tradeoff is an interesting open problem.
Second, in the stochastic progress bar model, our results do not improve over the competitive ratio of $2$ when the granularity of the progress bar is very small (specifically, for $g\leq 12$). %
Particularly natural is the special setting where each job has a single uniformly distributed signal. We believe that even this minimal information is sufficient to break the competitive ratio of $2$, with experimental evidence suggesting that the competitive ratio could be below $\nicefrac{5}{3}$. %

\printbibliography

\newpage
\appendix

\section{Proofs of \cref{sec:alpha-clairvoyant}}

\subsection{Proof of \cref{thm:alpha-clairvoyant}}

In this section, we prove~\Cref{thm:alpha-clairvoyant}, which we restate below for convenience.

\thmAlphaClairvoyant*

The upper bound of the theorem is a corollary of~\Cref{lem:naive-algo-smoothness}, which we show in the next section, noting that accurate signals imply $\alpha = \beta_j$ for all $j\in [n]$. Hence, here we focus on proving the lower bound. The proof is adapted from \cite{MotwaniPT94} and appears concurrently in \cite{GuptaKLSY25}.

\begin{lemma}
    For every $\alpha \in [0,1]$, every randomized algorithm that receives a single accurate signal at $\alpha$ has a competitive ratio of at least $1 + \alpha$ for minimizing the total completion time.
\end{lemma}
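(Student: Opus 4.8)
The plan is to apply Yao's minimax principle and a hard instance adapted from \cite{MotwaniPT94}. Since the statement is about the (worst‑case) competitive ratio, it suffices to exhibit, for every $\varepsilon>0$, a distribution $\Dcal$ over $n$‑job instances with $\Ebb_{\Dcal}[\alg]\geq(1+\alpha-\varepsilon)\,\Ebb_{\Dcal}[\opt]$ for every \emph{deterministic} algorithm, and then let $n\to\infty$. To make the Yao step clean, $\Dcal$ will only randomize the \emph{identities} of the jobs: fix a spread‑out length profile $p_{(1)}<\dots<p_{(n)}$ (e.g.\ $p_{(i)}=i$), and let $\Dcal$ assign these lengths to the $n$ jobs via a uniformly random permutation. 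Then $\opt=\sum_{i}(n-i+1)p_{(i)}=\Theta(n^3)$ is a constant over the support of $\Dcal$, while $\sum_j p_j=\Theta(n^2)=o(\opt)$; so it is enough to show $\Ebb_{\Dcal}[\alg]\geq(1+\alpha-o(1))\,\opt$.

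The structural point driving the bound is that the accurate signal of job $j$ fires only after $j$ has received an $\alpha$‑fraction $\alpha p_j$ of processing; before that moment, job $j$ has emitted no feedback whatsoever, so a deterministic algorithm cannot distinguish it from any other not‑yet‑signaled, not‑yet‑completed job. Consequently, by symmetry over the random permutation, the algorithm's processing of such jobs is, in expectation, spread uniformly among them --- i.e.\ it must behave like Round‑Robin until signals start to arrive, and it can reorder jobs only after paying the corresponding exploration cost. I would formalize this through the delay decomposition $\alg=\sum_j p_j+\sum_{i\neq j}d(i,j)$ and a per‑pair estimate: for two jobs realized with sorted lengths $a=\min\le b=\max$, in the optimal (SPT) schedule their mutual delay is exactly $a$, and I claim $\Ebb[d(i,j)+d(j,i)]\geq(1+\alpha-o(1))\,a$. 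Indeed, whichever of the two completes first --- call it $x$ --- the algorithm cannot safely complete $x$ before ruling out that the other job $y$ is even shorter; the only feedback that does so is $y$'s signal, which requires $e_y\geq\alpha p_y\geq\alpha a$, contributing an extra $\alpha a$ of delay. The alternative is that the algorithm commits to $x$ without that information, but then --- over the random identities --- it has committed out of order with enough probability that the expected mutual delay is still at least $(1+\alpha)a$; this is exactly where the \emph{spread‑out} profile matters, since it guarantees that premature commitment is, in expectation, no cheaper than waiting for the signal. Summing over all pairs gives $\Ebb_{\Dcal}[\alg]\geq\sum_j p_j+(1+\alpha-o(1))\sum_{i<j}p_{(i)}=(1+\alpha)\opt-\alpha\sum_j p_j=(1+\alpha-o(1))\opt$, which is the desired bound.

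The main obstacle is precisely the per‑pair estimate, i.e.\ ruling out ``clever'' explore‑then‑commit strategies that try to exploit the single accurate signal to finish jobs out of order cheaply --- note that for a \emph{flat} profile (all $p_j$ equal) such a strategy is in fact optimal, so the argument must use the spread of the profile quantitatively. This is the subtlety handled by the Motwani--Phillips--Torng adversary, and I expect the write‑up to follow that argument, choosing the profile (and the scaling of its spread with $n$) so that every early commitment an algorithm could make is, in expectation over the random permutation, no better than the Round‑Robin‑then‑signal behaviour.
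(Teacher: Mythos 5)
Your high-level plan — Yao's principle plus the delay decomposition $\alg=\sum_j p_j+\sum_{i<j}\bigl(d(i,j)+d(j,i)\bigr)$ and a per-pair lower bound — is indeed the structure of the paper's proof, and you are right to flag the per-pair estimate as ``the main obstacle.'' But the proposal has a genuine gap there: you assert $\Ebb[d(i,j)+d(j,i)]\ge(1+\alpha-o(1))\min(p_i,p_j)$ for two jobs with random identities drawn from a fixed spread-out profile, and you do not prove it. Worse, the claim is \emph{false} as you state it. Take two jobs with unordered lengths $\{a,b\}$, $a<b$, assigned uniformly at random, and consider the deterministic algorithm that processes job $1$ alone: with probability $1/2$ job $1$ is short and the mutual delay is $a$; with probability $1/2$ job $1$ is long, the signal fires after $\alpha b$ work and the algorithm switches, giving mutual delay $\alpha b+a$. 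The expected mutual delay is $a+\tfrac{\alpha b}{2}$, which is strictly below $(1+\alpha)a$ whenever $b<2a$. So the per-pair bound does not hold pair by pair in your random-permutation model; at best it might hold \emph{on average} over which lengths the pair receives, and that aggregate argument is exactly what is missing. Your ``cannot safely complete $x$ before ruling out that $y$ is shorter'' reasoning also presumes the algorithm must be safe rather than gamble, which, as the calculation shows, it need not.

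The paper sidesteps all of this by choosing the randomized instance to have i.i.d.\ $\mathrm{Exp}(1)$ processing times rather than a random permutation of a fixed profile. The point of that choice is memorylessness: for any two jobs $i,j$ and any fixed total amount $y$ of processing devoted jointly to them, the probability that neither has yet emitted is $e^{-y/\alpha}$ \emph{regardless of how the algorithm split $y$ between them or what it learned from other jobs}. Integrating gives an algorithm-independent expected cost of $\alpha$ for the pre-signal phase, and after a signal the remaining work to complete one of the two is at least $(1-\alpha)\min(p_i,p_j)$, whose expectation is $(1-\alpha)/2$; meanwhile $\opt$'s pairwise delay has expectation $1/2$. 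The ratio then converges to $1+\alpha$ as $n\to\infty$, with no need to handle explore-then-commit strategies by a case analysis. If you want to push your version through, you would have to prove an averaged per-pair bound and deal with the dependence between job lengths in the permutation model; switching to i.i.d.\ exponentials is the cleaner route and is what the reference you cite (and this paper) actually does.
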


\begin{proof}
    Using Yao's principle, we fix a deterministic algorithm and consider a randomized instance with $n$ jobs $1,\ldots,n$
    with processing times $p_j$ drawn independently from the exponential distribution with scale $1$. That is, every job has a size of at least $x$ with probability $e^{-x}$, and $\EX[p_j] = 1$.

    Fix two arbitrary jobs $i$ and $j$.
    We compute their expected pairwise delay $\EX[d(i,j) + d(j,i)]$ in the algorithm's schedule. To this end, consider for any fixed processing amount $y$
    the event $\Ecal(y)$ that neither $i$ nor $j$ has emitted its signal after the algorithm processed a total amount of $y$ on both $i$ and $j$ together. Let $y_i$ denote the part of $y$ used for processing job $i$. Note that
    \[
        \Pb[\Ecal(y)] = \Pb[\alpha p_i > y_i] \cdot \Pb[\alpha p_j > y - y_i] = e^{-y/\alpha} \ .
    \]
    Therefore, the expected pairwise delay of $i$ and $j$ while both have not yet emitted their signal is equal to $\int_{0}^{\infty} \Pb(\Ecal(y)) \,dy = \alpha$.
    Furthermore, the expected pairwise delay of $i$ and $j$ while at least one already emitted its signal is equal to
    \[
        \EX[\min\{(1-\alpha) p_i,(1-\alpha) p_j\}] = (1-\alpha) \int_{0}^{\infty} \Pb(p_i > x) \cdot \Pb(p_j > x) \,dx = \frac{1}{2}(1-\alpha) \ .
    \]
    Thus, $\EX[d(i,j) + d(j,i)] = \frac{1}{2}(1+\alpha)$. Using \eqref{eq:delay-decomp}, we conclude
    that $\EX[\alg] \geq n + \sum_{j=1}^n \sum_{i=1}^{j-1} \frac{1}{2}(1+\alpha)$.

    We now study the optimal objective value $\opt$ for this instance. In this scenario, an optimal schedule schedules job in non-decreasing order of their processing times.
    Thus, the expected pairwise delay of $i$ and $j$ in the optimal schedule is equal to $\EX[\min\{p_i,p_j\}] = \frac{1}{2}$. Hence, $\EX[\opt] = n + \sum_{j=1}^n \sum_{i=1}^{j-1} \frac{1}{2}$.

    We conclude that the competitive ratio is at least
    \[
        \frac{n + \sum_{j=1}^n \sum_{i=1}^{j-1} \frac{1}{2}(1+\alpha)}{n + \sum_{j=1}^n \sum_{i=1}^{j-1} \frac{1}{2}} \ ,
    \]
    which approaches $1+\alpha$ as $n$ goes to $\infty$, and thus, implies the statement.
\end{proof}

\subsection{Proof of \cref{lem:naive-algo-smoothness}}

Recall that \enquote{blindly following the signals} refers to the algorithm of~\Cref{thm:alpha-clairvoyant}. That is, to the algorithm that runs round-robin over all jobs and, whenever a job $j$ emits a signal, runs only this job until completion before returning to round-robin.

\naiveAlgSmooth*

\begin{proof}
Let $i<j \in [n]$, i.e., $p_i \leq p_j$.
Recall that the algorithm processes both jobs with the same rate until one of them emits its signal. Hence, the term $\min\{\beta_i p_i, \beta_j p_j\}$ describes the amount of processing each of the two jobs receive until one of the emits its signal.

If $\beta_i p_i < \beta_j p_j$ then job $i$ emits its signal first, when both jobs have been executed each for $\beta_i p_i$ units of time, then it is executed until completion. Thus the mutual delay caused by the two jobs to each other is
\[
d(i,j) + d(j,i)
= p_i + \beta_i p_i
= (1+\alpha)p_i + (\beta_i - \alpha) p_i\;.
\]

On the other hand, if $\beta_i p_i > \beta_j p_j$, then we obtain similarly that
\begin{align*}
d(i,j) + d(j,i)
&= \beta_j p_j + p_j\\
&= (1+\alpha)p_i + (\beta_j p_j - \alpha p_i) + (p_j - p_i)\\
&\leq (1+\alpha)p_i + (\beta_i - \alpha) p_i + (p_j - p_i)
\end{align*}
where we used in the last inequality that $\beta_i p_i > \beta_j p_j$. In both cases, we can write that
\[
d(i,j) + d(j,i) \leq (1+\alpha)p_i + (\beta_i - \alpha) p_i + (p_j - p_i)\one(\beta_i p_i > \beta_j p_j)\;,
\]
and it follows that
\begin{align*}
\alg
&= \sum_{i=1}^n p_i + \sum_{i<j} d(i,j) + d(j,i)\\
&\leq \sum_{i=1}^n p_i + \sum_{i<j} \left((1+\alpha)p_i + (\beta_i - \alpha) p_i + (p_j - p_i)\one(\beta_i p_i > \beta_j p_j) \right)\\
&= \sum_{i=1}^n p_i + (1+\alpha)\sum_{i=1}^n (n-i)p_i + \sum_{i=1}^n (n-i)(\beta_i - \alpha) p_i + \sum_{i<j} (p_j - p_i) \one(\beta_j p_j < \beta_i p_i)\\
&\leq (1+\alpha) \OPT + \sum_{i=1}^n (n-i)(\beta_i - \alpha) p_i + \sum_{i<j} (p_j - p_i) \one(\beta_j p_j < \beta_i p_i)\;.
\end{align*}

\end{proof}

The inversion error in the previous lemma can be upper bounded using the absolute signal timing errors, resulting in the following bound.

\begin{corollary}
Blindly following the signals yields a total completion time of at most
\[
(1+\alpha)\OPT + \left(1+\frac{1}{\alpha} \right) n \sum_{i=1}^n |\beta_i - \alpha| p_i\;.
\]
\end{corollary}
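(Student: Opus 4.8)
The plan is to derive the corollary directly from \Cref{lem:naive-algo-smoothness} by bounding the two error terms appearing there—the signal timing error $\sum_{i=1}^n (n-i)(\beta_i - \alpha) p_i$ and the inversion error $\sum_{i<j} (p_j - p_i)\one(\beta_j p_j < \beta_i p_i)$—each by a multiple of $n \sum_{i=1}^n |\beta_i - \alpha| p_i$. The first term is immediate: $\sum_{i=1}^n (n-i)(\beta_i - \alpha) p_i \leq \sum_{i=1}^n (n-i)|\beta_i - \alpha| p_i \leq n \sum_{i=1}^n |\beta_i - \alpha| p_i$, which already accounts for the ``$1\cdot$'' part of the factor $(1 + \frac{1}{\alpha})$.

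The main work is bounding the inversion error by $\frac{1}{\alpha} n \sum_{i=1}^n |\beta_i - \alpha| p_i$. The key observation is that an inversion between $i < j$ (i.e.\ $p_i \le p_j$) occurs only when $\beta_j p_j < \beta_i p_i$; I want to show that in this case $p_j - p_i$ is controlled by the timing errors of $i$ and $j$. Starting from $\beta_j p_j < \beta_i p_i$, I would compare both sides to $\alpha p_i$: write $\alpha p_i - \alpha p_j \le \alpha p_i - \beta_j p_j = (\alpha p_i - \beta_i p_i) + (\beta_i p_i - \beta_j p_j)$, and since $\beta_i p_i - \beta_j p_j > 0$ is not directly helpful, instead bound $\alpha(p_j - p_i) \le \beta_j p_j - \alpha p_i + (\alpha - \beta_j) p_j$… The cleaner route: from $p_i \le p_j$ and $\beta_j p_j < \beta_i p_i$ we get $\alpha (p_j - p_i) \le \alpha p_j - \alpha p_i$, and I bound this using $\alpha p_i > \beta_j p_j - (\beta_i - \alpha) p_i$ is getting circular, so let me instead directly write $p_j - p_i \le \frac{1}{\alpha}\bigl((\beta_i - \alpha) p_i + (\alpha - \beta_j) p_j\bigr) \le \frac{1}{\alpha}\bigl(|\beta_i - \alpha| p_i + |\beta_j - \alpha| p_j\bigr)$, which follows because $\alpha(p_j - p_i) = \alpha p_j - \alpha p_i < \alpha p_j - \beta_j p_j + \beta_i p_i - \alpha p_i = (\alpha - \beta_j)p_j + (\beta_i - \alpha) p_i$, using $\beta_i p_i > \beta_j p_j$ in the middle step.

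Summing this pointwise bound over all pairs $i < j$ with an inversion gives
\[
\sum_{i<j} (p_j - p_i)\one(\beta_j p_j < \beta_i p_i) \le \frac{1}{\alpha}\sum_{i<j}\bigl(|\beta_i-\alpha|p_i + |\beta_j - \alpha|p_j\bigr) \le \frac{1}{\alpha}\, n \sum_{i=1}^n |\beta_i - \alpha| p_i,
\]
where the last step uses that each index appears in at most $n$ pairs. Adding the two bounds and invoking \Cref{lem:naive-algo-smoothness} yields $\alg \le (1+\alpha)\opt + n\sum_i |\beta_i - \alpha| p_i + \frac{1}{\alpha} n \sum_i |\beta_i - \alpha| p_i = (1+\alpha)\opt + (1 + \frac{1}{\alpha}) n \sum_{i=1}^n |\beta_i - \alpha| p_i$. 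The only mildly delicate point is getting the sign manipulations in the pointwise inversion bound right—in particular correctly exploiting $\beta_i p_i > \beta_j p_j$ to cancel the unsigned cross terms—but once that inequality is in place the rest is just summation bookkeeping.
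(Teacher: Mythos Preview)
Your proof is correct and follows essentially the same approach as the paper: the pointwise bound $p_j - p_i \le \frac{1}{\alpha}\bigl(|\beta_i-\alpha|p_i + |\beta_j-\alpha|p_j\bigr)$ under the inversion condition $\beta_j p_j < \beta_i p_i$ is exactly the paper's key step (the paper derives it by adding the nonnegative quantity $\frac{\beta_i p_i - \beta_j p_j}{\alpha}$ to $p_j - p_i$, which is algebraically equivalent to your manipulation), and the subsequent summation and combination with \Cref{lem:naive-algo-smoothness} are identical. The exploratory false starts in your write-up should be trimmed, but the final argument is the paper's argument.
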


\begin{proof}
For all $i<j$, if $\beta_j p_j < \beta_i p_i$ then
\begin{align*}
p_j - p_i
&\leq p_j - p_i + \frac{\beta_i p_i - \beta_j p_j}{\alpha}\\
&= p_j - \frac{\beta_j p_j}{\alpha} + \frac{\beta_i p_i}{\alpha} - p_i\\
&\leq \frac{1}{\alpha}|\beta_j - \alpha|p_j + \frac{1}{\alpha}|\beta_i - \alpha| p_i\;.
\end{align*}
It follows that
\begin{align*}
\sum_{i<j} (p_j - p_i) \one(\beta_j p_j < \beta_i p_i)
&\leq \sum_{i<j} \frac{1}{\alpha}|\beta_j - \alpha|p_j + \frac{1}{\alpha}|\beta_i - \alpha| p_i\\
&= \sum_{i = 1}^n \sum_{j \neq i} \frac{1}{\alpha}|\beta_i - \alpha| p_i\\
&= \frac{n-1}{\alpha} \sum_{i = 1}^n |\beta_i - \alpha| p_i\;.
\end{align*}
We deduce from~\Cref{lem:naive-algo-smoothness} that
\begin{align*}
\alg
&\leq (1+\alpha) \OPT +  (n-1) \sum_{i=1}^n |\beta_i - \alpha| p_i + \frac{n-1}{\alpha} \sum_{i = 1}^n |\beta_i - \alpha| p_i\\
&\leq (1+\alpha)\OPT + \left(1+\frac{1}{\alpha} \right) n \sum_{i=1}^n |\beta_i - \alpha| p_i \;.
\end{align*}
\end{proof}

\subsection{Proof of the brittleness of \cref{alg:robust-smooth} with $\rho = 1$}
\begin{proposition}\label{prop:brittle}
If $\alpha < 1$, then for all $\varepsilon > 0$, there is an instance of job sizes $(p_j)_{j \in [n]}$ and signal emission fractions $(\beta_j)_{j \in [n]}$ satisfying $\sum_{j = 1}^n |\beta_j - \alpha| \leq \varepsilon$, for which \cref{alg:robust-smooth} with $\rho = 1$ satisfies
\[
\frac{\alg}{\OPT} \geq 2\;.
\]
This means that an arbitrarily small error in the signal emission times suffices to deviate from the consistency bound $1+\alpha$.
\end{proposition}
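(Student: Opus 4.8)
The plan is to construct a two-job instance (or a small instance reducible to two jobs) where, for $\rho = 1$, Algorithm~\ref{alg:robust-smooth} grants preferential execution to the \emph{longer} job first, runs it essentially to completion, and only then turns to the short job—causing the short job's completion time to be roughly the sum of both processing times, which is the worst-case ratio $2$. The key is that with $\rho = 1$, when a job emits its signal at time $t$ it is run alone for $(\nicefrac{1}{\alpha} - 1) e_j(t)$ units, which is \emph{exactly} the remaining time if $\beta_j = \alpha$; so a tiny perturbation making one job signal a hair too early relative to its true fraction will cause that job to miss completion in its preferential phase, get excluded from SETF, and still block the others.

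First I would take two jobs with processing times $p_1 = 1$ and $p_2 = 1 + c$ for a suitable small constant $c > 0$ (depending on $\alpha$ but not on $\varepsilon$), and set the signal fractions so that $|\beta_1 - \alpha| + |\beta_2 - \alpha|$ is as small as we like — e.g. $\beta_2 = \alpha$ exactly and $\beta_1 = \alpha - \delta$ for an arbitrarily small $\delta$, or a symmetric variant. The point is to arrange the \emph{timing race}: under SETF both jobs are processed equally until the first signal fires. I would choose parameters so that job~$2$ (the long one) emits first in elapsed-processing order, i.e. $\beta_2 p_2 < \beta_1 p_1$; with $\beta_1 = \alpha - \delta$ and $p_2$ slightly larger than $p_1$ this can be tuned. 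Then job~$2$ gets preferential execution for $(\nicefrac1\alpha - 1)\beta_2 p_2 = (\nicefrac1\alpha - 1)\alpha p_2 = (1-\alpha)p_2$ units, which completes it exactly (since $\beta_2 = \alpha$). So job~$2$ finishes at time $\approx 2 \beta_2 p_2 + (1-\alpha)p_2$, and job~$1$ has received only $\beta_2 p_2 = \alpha p_2$ of processing so far; it then finishes at roughly $C_1 \approx C_2 + (1 - \alpha p_2)$. Computing $\alg = C_1 + C_2$ and $\opt = p_1 + 2 p_2$ (wait—$p_1 \le p_2$ so $\opt = 2p_1 + p_2$), and letting $c \to 0$ so $p_1, p_2 \to 1$, I would check that $\alg/\opt \to 2$. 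Actually to get a \emph{clean} ratio of exactly $2$ rather than $2 - o(1)$, I would replicate the instance: take $n/2$ copies of each job type, or pad with many tiny jobs, so that the blocking effect compounds across pairs and the additive lower-order terms vanish; alternatively phrase the claim as achieving ratio $\ge 2$ in the limit $n \to \infty$, matching the Round-Robin worst case.

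The main obstacle I anticipate is getting the SETF dynamics exactly right: SETF equalizes elapsed times among non-excluded jobs, so after job~$2$ is excluded (if it fails to complete in its preferential window) or completed, the bookkeeping of which job is "behind" and by how much needs care, and the preferential-execution window length $(\nicefrac1\alpha - 1)e_j(t)$ depends on the elapsed time \emph{at the moment of signaling}, which itself depends on the SETF schedule up to that point. I would therefore first nail down the precise sequence of events for the two-job instance as a function of $(\alpha, \delta, c)$, verify that for all small $\delta$ the long job is served first and fully, and only then do the arithmetic for $\alg$ and $\opt$. A secondary subtlety: the proposition allows $\beta_j = \alpha$ for some jobs (only the \emph{sum} of deviations is $\le \varepsilon$), so I have freedom to make all but one job perfectly accurate, which simplifies the construction considerably — I will exploit that to keep the instance as close as possible to the consistent case while still triggering the catastrophic ordering. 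Once the event sequence is pinned down, the calculation that $\alg/\opt \to 2$ is routine.
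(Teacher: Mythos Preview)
Your opening paragraph names the right mechanism—a job signaling a hair early misses completion in its preferential window and then blocks others—but your actual construction abandons it for an unworkable idea: arranging the \emph{long} job to be served first and \emph{completed}. The parameters you write down are already inconsistent (with $\beta_2=\alpha$, $p_2=1+c$, $\beta_1=\alpha-\delta$, $p_1=1$, one has $\beta_1 p_1=\alpha-\delta<\alpha(1+c)=\beta_2 p_2$, so the short job emits first), and even after repair the scheme cannot reach ratio $2$. To have the long job emit first under SETF while keeping $\beta_2=\alpha$ you need $\beta_1>\alpha M$ where $M=p_2/p_1$, so $|\beta_1-\alpha|\ge \alpha(M-1)$ and also $M<1/\alpha$ (since $\beta_1\le 1$). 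A direct computation then gives $C_2=(1+\alpha)M$, $C_1=M+1$, hence $\alg/\opt=\frac{(2+\alpha)M+1}{M+2}$, which is increasing in $M$ and at $M=1/\alpha$ equals $\frac{2(1+\alpha)}{1+2\alpha}<2$ for every fixed $\alpha\in(0,1)$. Replicating to $m$ short and $m$ long jobs does not rescue this: each short job still needs $|\beta_j-\alpha|\ge\alpha(M-1)$, so $\sum_j|\beta_j-\alpha|\ge m\alpha(M-1)$, which cannot be made $\le\varepsilon$ unless $M\to 1$, and in that limit the mis-ordering costs nothing and the ratio collapses back to $1+\alpha$.

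The paper exploits your opening insight, but applied to \emph{every} job simultaneously rather than to reorder a pair. Take $m$ jobs of size $1$ and $m$ jobs of size $1/\alpha$, and set $\beta_j=\alpha-\delta$ for all $2m$ jobs (total error $2m\delta$, arbitrarily small). Each preferential execution brings job $j$ only to elapsed time $\beta_j p_j/\alpha=(1-\delta/\alpha)p_j<p_j$, so \emph{no} job completes during its preferential phase. After all $2m$ jobs have cycled through their preferential executions, total time $m(1+1/\alpha)(1-\delta/\alpha)$ has elapsed with zero completions, giving $\alg\ge 2m\cdot m(1+1/\alpha)(1-\delta/\alpha)$ while $\opt\sim m^2\bigl(\tfrac{3}{2}+\tfrac{1}{2\alpha}\bigr)$; the ratio tends to $\tfrac{4(\alpha+1)}{3\alpha+1}>2$ for $\alpha<1$. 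The crucial difference from your plan: brittleness here is not caused by swapping two jobs but by every job just barely failing to finish, so the whole instance burns through all preferential phases before anything completes.
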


\begin{proof}
Let $\varepsilon > 0$, $m \geq 1$, and $\delta \leq \frac{\varepsilon}{2m}$. Consider an instance with $n = 2m$ jobs, with sizes given by $p_j = 1$ for $j \leq m$ and $p_j = 1/\alpha$ for $j > m$. For all $j \in [2m]$, let $\beta_j = \alpha - \delta$, so that the total deviation from the announced value $\alpha$ is
$\sum_j |\beta_j - \alpha| = 2m\delta \leq \varepsilon$.

Running \cref{alg:robust-smooth} with parameter $\rho = 1$, the initial phase is Round-Robin execution until each job has been processed for $\alpha - \delta$ units of time. At this point, the jobs with $j \leq m$ emit their signals by definition of $\beta_j$. To resolve simultaneous emissions, assume infinitesimal perturbations to ensure that signal times are distinct. These $m$ jobs, each with size 1, then receive preferential execution one after the other, until reaching an elapsed time of $\beta_j/\alpha = 1 - \delta/\alpha$. In particular, none of these jobs completed yet, as their size is 1.

At this stage, the second set of $m$ jobs ($j > m$), each of size $1/\alpha$, have received less processing and are now scheduled using Round-Robin, until they reach elapsed times of $1 - \frac{\delta}{\alpha}$ each, at which point they emit their signals and receive preferential execution, until they have an elapsed time of $\beta_j p _j / \alpha = 1/\alpha - \delta/\alpha^2$. This happens at time $m(1-\delta/\alpha) + m(1/\alpha-\delta/\alpha^2) = m(1 + 1/\alpha)(1-\delta/\alpha)$. Since no job has been completed by this time, the sum of completion times incurred by the algorithm satisfies:
\[
\ALG \geq 2m \cdot m(1 + 1/\alpha)(1-\delta/\alpha)
= 2m^2(1 + 1/\alpha)(1-\delta/\alpha)\;.
\]
On the other hand,
\begin{align*}
\OPT
&= \sum_{j=1}^m j + \sum_{j=1}^m (m + j/\alpha)
= m^2 \left(\frac{3}{2} + \frac{1}{2\alpha}\right) + o(m^2)\;.
\end{align*}
Therefore
\begin{align*}
\frac{\ALG}{\OPT}
&\geq (1-\delta/\alpha)\cdot \frac{4\alpha+ 4}{3\alpha + 1} - o(1)\;.
\end{align*}

Note that, for $\alpha \in [0,1)$, we have $\frac{4\alpha+ 4}{3\alpha + 1} > 2$. Hence, for $\delta$ sufficiently small and $m$ sufficiently large, the lower bound above becomes larger than $2$, which concludes the proof.
\end{proof}

\subsection{Proof of \cref{thm:1-signal-main}}

\singleSignalMain*

\begin{proof}
As in other proofs, we assume that $p_1 \leq \ldots \leq p_n$. Let $k = \frac{1}{\rho\alpha} >  \frac{1}{\alpha}$, and $i<j \in [n]$. We will analyze the mutual delay between jobs $i$ and $j$, depending on their job sizes and the times when their signals are emitted. To this end, we distinguish between multiple cases and, for each case prove a robustness bound of the form $d(i,j) + d(j,i) \leq (1+k)p_i$ and an error-dependent bound of
$$
d(i,j) + d(j,i) \leq (1+\alpha) p_i + \frac{k+1}{\alpha-1/k} \left(|\beta_i-\alpha|p_i + |\beta_j-\alpha|p_j \right).
$$

\begin{description}
    \item[Case 1 $k\beta_i \le 1$:] Note that $k\beta_i \le 1$ implies that $i$ either does not finish during its preferential execution ($k\beta_i < 1$) or the preferential execution suffices to \emph{exactly} finish $i$ ($k\beta_i = 1$).
    For this case, we can observe that proving $d(i,j) + d(j,i) \leq (1+k)p_i$ immediately implies an error-dependent bound, just by using $\beta_i \le \frac{1}{k}$:
    \begin{align*}
    d(i,j) + d(j,i)
    &\leq (1+k) p_i
    = (1+\alpha)p_i + (k-\alpha)p_i
    \leq (1+\alpha)p_i + \frac{k-\alpha}{\alpha-1/k}(\alpha-\beta_i) p_i\\
     &\leq (1+\alpha) p_i + \frac{k+1}{\alpha-1/k} \left(|\beta_i-\alpha|p_i + |\beta_j-\alpha|p_j \right).
    \end{align*}

    Hence, it suffices to show $d(j,i) + d(i,j) \le (1+k) p_i$. We do so by distinguishing two sub-cases.

    \begin{description}
        \item[Case 1.1 $\beta_jp_j \ge C_i$:] If $j$ emits its signal after $i$ completes, then the SETF-part of the algorithm ensures that the delay is at most $d(j,i) + d(i,j) \le 2p_i \le (1+k) \cdot p_i$.
    \end{description}

    \begin{description}
        \item[Case 1.2 $\beta_jp_j < C_i$:]  For $j$ to emit its signal before $i$ completes, we must have $\beta_j p_j \le p_i$. The maximum delay of $i$ caused by $j$ is $d(j,i) = \max\{k\beta_j p_j, p_i \} \le  \max\{kp_i, p_i \} = k p_i$, where the inequality follows from $\beta_j p_j \le p_i$. Note that the first term $k\beta_j p_j$ in the maximum is the maximum delay in case that $j$ completes during its preferential execution, which is composed of a delay of $\beta_jp_j$ before $j$ emits and a delay of $(k-1) \beta_jp_j$ during the preferential treatment. Since $d(i,j) \le p_i$ holds trivially, we can conclude with $d(j,i) + d(i,j) \le (1+k) p_i$.
    \end{description}
\end{description}

\begin{description}
    \item[Case 2 $k\beta_i > 1$ and $\beta_jp_j > \beta_i p_i$:] In this case, $i$ emits its signal before $j$ due to $\beta_jp_j > \beta_i p_i$. Furthermore, $i$ completes during its preferential execution due to $k\beta_i > 1$. Thus, the delay of $i$ by $j$ is only caused during the parallel execution before $i$ emits its signal. Hence $d(j,i) \le \beta_i p_i$. Since $d(i,j) \le p_i$ holds trivially, we conclude $d(j,i) + d(i,j) \le (1+\beta_i) p_i \le (1+k) p_i$. This immediately implies an error-dependent bound:
    \begin{align*}
    d(i,j) + d(j,i)
    &\leq (1+\beta_i)p_i
    \leq (1+\alpha)p_i + |\beta_i - \alpha|p_i\\
     &\leq (1+\alpha) p_i + \frac{k+1}{\alpha-1/k} \left(|\beta_i-\alpha|p_i + |\beta_j-\alpha|p_j \right).
    \end{align*}
\end{description}

\begin{description}
    \item[Case 3 $k \beta_i > 1$ and $\beta_jp_j \le \beta_i p_i$:] In this case, $j$ emits before or at the same time as $i$ ($\beta_jp_j \le \beta_i p_i$) and $i$ completes during its preferential execution ( $k \beta_i > 1$). We start with the general inequality
    \begin{equation}\label{eq:pj-pi<[beta-alpha]}
    p_j - p_i
    \leq p_j - \frac{\beta_j p_j}{\alpha} + \frac{\beta_i p_i}{\alpha} - p_i
    \leq \frac{1}{\alpha}|\beta_j - \alpha|p_j + \frac{1}{\alpha}|\beta_i - \alpha|p_i\;,
    \end{equation}
    which will be useful below. Next, we distinguish between three sub-cases.

    \begin{description} %
        \item[Case 3.1 $k\beta_j \geq 1$:] Job $j$ is the first to emit its signal and is then executed until completion due to $k\beta_j \geq 1$. Thus, $d(i,j) = \beta_j p_j$ as $i$ only delays $j$ during the parallel execution before $j$ emits its signal. Furthermore, $d(j,i) = p_j$ as $j$ finishes before $i$ during the preferential execution.
        Using $\beta_j p_j \leq \beta_i p_i \leq p_i$, we obtain
        \[
        d(i,j) + d(j,i)
        = \beta_j p_j + p_j
        \leq (1+k)\beta_j p_j
        \leq (1+k)p_i\;.
        \]
        Furthermore, $\beta_j p_j \leq \beta_i p_i$ and \eqref{eq:pj-pi<[beta-alpha]} give
        \begin{align*}
        d(i,j) + d(j,i)
        &= \beta_j p_j + p_j\\
        &= (1+\alpha)p_i + (\beta_j p_j - \alpha p_i) + (p_j - p_i)\\
        &\leq (1+\alpha)p_i + (\beta_i - \alpha)p_i + \frac{1}{\alpha}|\beta_i - \alpha|p_i + \frac{1}{\alpha}|\beta_j - \alpha|p_j\\
        &\leq (1+\alpha)p_i + \left(1 + \frac{1}{\alpha}\right)|\beta_i - \alpha|p_i + \frac{1}{\alpha}|\beta_j - \alpha|p_j\\
        &\leq (1+\alpha)p_i + \frac{k}{\alpha-1/k}|\beta_i - \alpha|p_i + \frac{1}{\alpha}|\beta_j - \alpha|p_j\\
         &\leq (1+\alpha) p_i + \frac{k+1}{\alpha-1/k} \left(|\beta_i-\alpha|p_i + |\beta_j-\alpha|p_j \right).
        \end{align*}
        where we used in the second to last inequality that
        \[
        \alpha + 1 \leq \alpha + \frac{1}{\alpha} \leq k + \frac{1}{k} \leq k+\frac{1}{k} + \frac{1}{\alpha k}\;,
        \]
        which is equivalent to $\left(\alpha - \frac{1}{k}\right)\left(1+\frac{1}{\alpha}\right) \leq k$.
    \end{description} %
    \begin{description} %
        \item[Case 3.2 $k\beta_j < 1$ and $k \beta_j p_j > \beta_i p_i$:]
        Job $j$ again emits its signal first and receives preferential execution, but fails to terminate during the preferential execution as $k\beta_j < 1$. Since $k\beta_j p_j > \beta_i p_i$, job $i$ emits its signal before catching up with the elapsed time of $j$, and is then executed until completion. Thus $d(i,j) = p_i$ and $d(j,i) = k\beta_j p_j$. We have
        \[
        d(i,j) + d(j,i)
        = p_i + k\beta_j p_j
        \leq p_i + k\beta_i p_i
        \leq (1+k)p_i\;.
        \]
        Additionally, it follows from $k\beta_j < 1$ that
        \begin{align*}
        d(i,j) + d(j,i)
        &\leq p_i + k\beta_j p_j\\
        &= (1+\alpha)p_i + k\beta_j p_j - \alpha p_i\\
        &= (1+\alpha)p_i + (k\beta_j p_j - p_j) + (p_i - \alpha p_i) + (p_j-p_i)\\
        &\leq (1+\alpha)p_i + (1 - \alpha)p_i + (p_j-p_i)\\
        &\leq (1+\alpha)p_i + \frac{1-\alpha}{\alpha - 1/k}(\alpha - \beta_j)p_j + (p_j - p_i) \ ,
        \end{align*}
        where we used in the last inequality that $k\beta_j < 1$ and $p_i \leq p_j$. Finally, \eqref{eq:pj-pi<[beta-alpha]} gives
        \begin{align*}
        d(i,j) + d(j,i)
        &\leq (1+\alpha)p_i +  \frac{1-\alpha}{\alpha - 1/k}(\alpha - \beta_j)p_j + \frac{1}{\alpha}|\beta_i - \alpha|p_i + \frac{1}{\alpha}|\beta_j - \alpha|p_j\\
        &= (1+\alpha)p_i + \frac{1}{\alpha}|\beta_i - \alpha|p_i + \frac{2-\alpha}{\alpha-1/k}|\beta_j - \alpha|p_j\\
                 &\leq (1+\alpha) p_i + \frac{k+1}{\alpha-1/k} \left(|\beta_i-\alpha|p_i + |\beta_j-\alpha|p_j \right).
        \end{align*}
        \end{description} %

        \begin{description}
            \item[Case 3.3 $k\beta_j < 1$ and $k\beta_j p_j \leq \beta_i p_i$:] In this case, the elapsed time of job $i$ reaches that of job $j$ before job $i$ emits its signal, both jobs are then processed equally until job $i$ emits its signal, then it receives preferential execution until completion. Hence $d(i,j) = p_i$ and $d(j,i) = \beta_i p_i$. It follows immediately that
            \[
            d(i,j) + d(j,i) = (1+\beta_i)p_i
            \leq (1+k) p_i\;,
            \]
            and
            \[
            d(i,j) + d(j,i) = (1+\alpha)p_i + (\beta_i - \alpha) p_i \leq (1+\alpha) p_i + \frac{k+1}{\alpha-1/k} \left(|\beta_i-\alpha|p_i + |\beta_j-\alpha|p_j \right).\;
            \]
        \end{description}
\end{description} %

From analyzing all the possible cases, we deduce that, for all values of $p_i \leq p_j$ and $\beta_i, \beta_j$, the mutual delay $d(i,j) + d(j,i)$ is at most $(1+k)p_i$. Using~\eqref{eq:delay-decomp}, it follows that
\begin{align*}
\alg= \sum_{i=1}^n p_i + \sum_{i<j} (d(i,j) + d(j,i)) \leq \sum_{i=1}^n p_i + (1+k)\sum_{i<j} p_i \leq (1+k) \OPT\;,
\end{align*}
which proves that the robustness ratio is at most $1+1/\rho \alpha$ if $k = 1/\rho \alpha$ for some $\rho \in (0,1]$.

We will now deduce the smoothness bound. To lighten the notation, we denote by $\Delta_i = |\beta_i - \alpha|p_i$ for all $i\in [n]$. In the case-analysis above, we have already shown the following error-dependent bound for all $i<j$:
\begin{align*}
d(i,j) + d(j,i)
&\leq (1+\alpha)p_i + \frac{k+1}{\alpha-1/k}\left(\Delta_i + \Delta_j \right)\;,
\end{align*}
which yields after summation that
\begin{align*}
\alg
&= \sum_{i=1}^n p_i + \sum_{i<j} (d(i,j) + d(j,i))\\
&\leq \sum_{i=1}^n p_i + (1+\alpha)\sum_{i<j} p_i + \frac{k+1}{\alpha-1/k} \sum_{j=1}^n \sum_{i=1}^{j-1} (\Delta_i + \Delta_j)\\
&\leq (1+\alpha)\OPT + \frac{k+1}{\alpha-1/k}\left( \sum_{i=1}^n (n-i)\Delta_i + \sum_{j=1}^n (j-1) \Delta_j \right)\\
&\leq (1+\alpha)\OPT + \frac{k+1}{\alpha-1/k} n \sum_{i=1}^n \Delta_i\;.
\end{align*}
Finally, it holds for $k = 1/(\rho \alpha)$ that
\[
\frac{k+1}{\alpha-1/k}
= \frac{1/(\rho \alpha) + 1}{(1-\rho)\alpha}
= \frac{1 + \rho \alpha}{\rho(1-\rho) \alpha^2}
\leq \frac{2}{\rho(1-\rho) \alpha^2}\;,
\]
which concludes the proof.
\end{proof}

Even for $\rho = 1$, we have no tight example for the robustness bound of \cref{alg:robust-smooth}. The best example that we have shows that the robustness of this algorithm is in $1 + \Omega(\sqrt{\nicefrac{1}{\alpha}})$. Thus, we leave as open question whether the robustness analysis of the algorithm can be improved.

\subsection{Proof of \cref{thm:alpha-clairovyant-lower-bound}}

\tradeoffTheorem*

\begin{proof}
    Fix a deterministic $\alpha$-clairvoyant algorithm.
    Consider an instance $I_1$ composed of $n$ jobs with unit processing times. Let $t$ denote the first time when a job completes. We assume that $1 = e_1 \geq \ldots \geq e_n$, where $e_j = e_j(t)$ denotes the total progress of job $j$ until time $t$. Thus, job $1$ completes at time $t$. We assume the algorithm does not idle, and have $t = \sum_{i=1}^n e_i$.
    Note that the best strategy for any algorithm at time $t$ is to finish the jobs in the order of their index as $1 - e_1 \leq \ldots \leq 1 - e_n$. Since the algorithm is by assumption $(1+\alpha)$-competitive, it must hold that
    \[
        nt + \sum_{j=1}^n (n-j+1) (1 - e_j) \leq \alg \leq (1+\alpha) \opt = (1+\alpha) \frac{n(n+1)}{2} \ .
    \]
    Since $t = \sum_{j=1}^n e_j$, the above can only be true if
    \begin{equation}
        \sum_{j=1}^n (j-1) e_j \leq \alpha \frac{n(n+1)}{2}  \ . \label{eq:cons}
    \end{equation}

    Let $\varepsilon > 0$ be sufficiently small. We consider another instance $I_2$ with processing times $p_1 = 1$ and $p_j = e_j + \varepsilon$ for every job $2 \leq j \leq n$. The jobs emits their signal until time $t$ as they have done in instance $I_1$.
    Note that these signals may not emit after an $\alpha$-fraction of the jobs are done, which is possible because we are in the robustness case.
    Thus, the algorithm processes $I_1$ and $I_2$ equivalently until time $t$, as it is deterministic. We have for $\varepsilon \to 0$ that
    \[
        \frac{\alg}{\opt} \geq \frac{n \left( \sum_{j=1}^n e_j \right)}{\sum_{j=1}^n j \cdot e_j} = \frac{n \left(e_1 + \sum_{j=2}^n e_j \right)}{e_1 + \sum_{j=2}^n j \cdot e_j} \ .
    \]

    Now, by considering this ratio as a function of $e_2,\ldots,e_n$ subject to $\sum_{j=2}^n e_j = t - e_1$ and $e_{2} \geq \ldots \geq e_n$, note that it is minimized if and only if $e_2 = \ldots = e_n = \frac{t - e_1}{n-1}$ for all jobs $j$. Thus, the above is at least
    \[
        \frac{n t}{e_1 + \frac{1}{2}(n+2)(t-e_1)} \ .
    \]

    Since this ratio is minimized if $t$ is as large as possible, and the left-hand side of \eqref{eq:cons} is equal to $n \frac{t-e_1}{2}$, at this minimum \eqref{eq:cons} must be tight and
    we have that $t = \alpha (n+1) + e_1$. Thus, the above is at least
    \[
        \frac{n (\alpha (n+1) + e_1)}{e_1 + \frac{1}{2}(n+2) \cdot \alpha (n+1)}
        = \frac{n + \alpha n (n+1)}{1 + \frac{1}{2}\alpha(n+1)(n+2)  }   \ .
    \]

    Finally, choosing $n = \Theta(\sqrt{1/\alpha})$ implies that the robustness ratio of the algorithm is at least
    \[
        \Omega\left( \frac{n + \alpha n^2}{\alpha n^2} \right) = \Omega(n) = \Omega(\sqrt{1/\alpha}) \ ,
    \]
    which concludes the proof.
\end{proof}

While our algorithm does not match this lower bound, it is possible to improve the robustness by relaxing the consistency guarantee. This can be achieved by selecting a parameter $\alpha' \in [\alpha, 1]$ and running \cref{alg:robust-smooth} with $\alpha'$ in place of $\alpha$, using $\rho = \alpha' / \alpha$. Specifically, if a job $j$ emits a signal at time $t$, the algorithm waits until the job has been processed for $\frac{\alpha'}{\alpha} e_j(t)$ before treating the signal as emitted. This modification yields a degraded consistency of $1+\alpha'$, and improved robustness of $1 + \nicefrac{1}{\alpha'}$.

\section{Proofs of \cref{sec:adv-progress-bar}}

\mainCombining*

\begin{proof}
Let $p_{\max} = \max_{i \in [n]} p_i$.
Since $p_i \in [0,p_{\max}]$
for all $i \in [n]$, the total delay caused by the first phase of the algorithm, where the jobs $\{u_k,v_k\}_{k=1}^m$ are completed, is at most $2mn \cdot p_{\max}$. Algorithm $\A^{(\hat{h})}$ is then used to schedule the remaining jobs, with a total objective value of at most $\A^{(\hat{h})}$, since having some jobs completed in the first phase can only improve the objective function. Therefore,
\begin{equation}\label{eq:comb-alg-as-ub}
\alg \leq \A^{(\hat{h})} + 2mn \cdot p_{\max}\;.
\end{equation}

For all $h\in [g]$ and $i \neq j \in [n]$, we denote by $M^{(h)}(i,j) = d^{(h)}(i,j) + d^{(h)}(j,i)$ the mutual delay caused by jobs $i,j$ to each other when scheduled with algorithm $\A^{(h)}$.
For all $h \in [g]$, recall that  $a(h) = \sum_{k=1}^m (d^{(h)}(u_k,v_k) + d^{(h)}(v_k,u_k))$ as defined in~\Cref{alg:combining}, and note that
\[
\Ebb[a(h)]
= m \Ebb[M^{(h)}(u_1,v_1)]
= \frac{2m}{n(n-1)} \sum_{j=1}^n \sum_{i=1}^{j-1} M^{(h)}(i,j)\;.
\]
Let $Z(h) := a(h) - \Ebb[a(h)]$ and $h^* = \argmin \A^{(h)}$. It holds that
\begin{align*}
\A^{(\hat{h})}
&= \sum_{i=1}^n p_i + \sum_{j=1}^n \sum_{i=1}^{j-1} M^{(\hat{h})}(i,j)\\
&= \sum_{i=1}^n p_i + \frac{n(n-1)}{2m} \Ebb[a(\hat{h})]\\
&= \sum_{i=1}^n p_i + \frac{n(n-1)}{2m} a(\hat{h}) - \frac{n(n-1)}{2m} Z(\hat{h})\\
&\leq \sum_{i=1}^n p_i + \frac{n(n-1)}{2m} a(h^*) - \frac{n(n-1)}{2m} Z(\hat{h})\\
&= \sum_{i=1}^n p_i + \frac{n(n-1)}{2m} \Ebb[a(h^*)]  + \frac{n(n-1)}{2m} Z(h^*) - \frac{n(n-1)}{2m} Z(\hat{h})\\
&\leq \A^{(h^*)} + \frac{n(n-1)}{2m}(\max_{h \in [g]} \{Z(h)\} +\max_{h \in [g]}\{- Z(h)\})\;.
\end{align*}
Combining this with Equation \eqref{eq:comb-alg-as-ub}, and by definition of $h^*$, we obtain in expectation that
\[
\Ebb[\alg] \leq \min_{h \in [g]} \A^{(h)} + \frac{n(n-1)}{2m} \left( \Ebb[\max_{h \in [g]} \{Z(h)\}] + \Ebb[\max_{h \in [g]}\{- Z(h)\}] \right) + 2mn \cdot p_{\max} \;.
\]

Let us now bound $\Ebb[\max_{h \in [g]} \{Z(h)\}]$ and $\Ebb[\max_{h \in [g]} \{-Z(h)\}]$. We have that $a(h)$ is the sum of $m$ independent random variables $(M^{(h)}(u_k, v_k))_{k=1}^m$, all bounded in $[0,2 p_{\max}]$: $M^{(h)}(i,j) = d^{(h)}(i,j) + d^{(h)}(j,i) \leq p_i + p_j \leq 2 p_{\max}$. Thus, $a(h)$ is a subgaussian random variable with variance factor $\sigma^2 = \sum_{k=1}^m (2p_{\max} - 0)^2/4 = m p_{\max}^2$ \cite{boucheron2003concentration}. Therefore, $\max_h Z(h)$ is the maximum of $g$ subgaussian random variables all with variance factor $m p_{\max}^2$ (although not independent), hence, using the result of \cite{boucheron2003concentration}, we have
\[
\Ebb[\max_{h \in [g]} Z(h)] \leq p_{\max} \sqrt{2m \log g}\;.
\]
Similarly, we obtain the same upper bound on $\Ebb[\max_{h \in [g]} \{ -Z(h) \}]$. We deduce that
\[
\Ebb[\alg] \leq \min_{h \in [g]} \A^{(h)} + \frac{n^2 p_{\max}}{2m} \sqrt{2m\log g} + 2mn \cdot p_{\max}
= \min_{h \in [g]} \A^{(h)} + \left(n^2 \sqrt{\frac{\log g}{2m}} + 2mn\right) p_{\max}\;.
\]

Finally, for $m = \frac{1}{8}n^{2/3} (\log g)^{1/3}$, the above bound becomes
\[
\Ebb[\alg]
\leq \min_{h \in [g]} \A^{(h)} + \frac{9}{4} n^{5/3} (\log g)^{1/3} p_{\max}\; ,
\]
which proves the theorem.
\end{proof}

\begin{remark}
If we are unable to compute $d^{(h)}(i,j)$ exactly for all $h$ after completing two jobs $i$ and $j$, but can instead compute an upper bound $\Bar{d}^{(h)}(i,j) \geq d^{(h)}(i,j)$, then the upper bound stated in the theorem still applies by replacing $\A^{(h)}$ with $\Bar{\A}^{(h)}$ for each $h$, where
\[
\Bar{\A}^{(h)} = \sum_{i=1}^n p_i + \sum_{i<j} \left( \Bar{d}^{(h)}(i,j) + \Bar{d}^{(h)}(j,i) \right)\;.
\]
This guarantees that the combining algorithm achieves a total completion time at most the minimum of these theoretical upper bounds, up to the same regret term.
\end{remark}

\subsection{Random instances and randomized algorithms}\label{appx:combining-generalization}
The result of \cite{EliasKMM24} holds with high probability. While we can also prove an upper bound w.h.p.,  our upper bound on the expectation is particularly useful as it allows an extension to random job sizes and randomized algorithms, as we show below.

\paragraph{Generalization to random instances.}
If the job sizes are random, then the same choice of $m = \frac{1}{8} n^{2/3} (\log g)^{1/3}$ in \cref{alg:combining} gives
\[
\Ebb[\alg]
\leq \min_{h \in [g]}\Ebb[ \A^{(h)}] + \left(\frac{9}{4} n^{5/3} (\log g)^{1/3} \right) \Ebb[\max_{i \in [n]} p_i] \;.
\]
Indeed, assuming that the job sizes are random, we obtain by~\Cref{thm:combining-algo}, conditionally on the job sizes, that
\[
\Ebb[\alg \mid p_1,\ldots,p_n]
\leq \min_{h \in [g]} \A^{(h)} + \left(\frac{9}{4} n^{5/3} (\log g)^{1/3} \right) \max_{i \in [n]} p_i\;,
\]
hence, using that $\Ebb[\min_{h \in [g]} \A^{(h)}] \leq \min_{h \in [g]} \Ebb[\A^{(h)}]$, we find
\begin{align*}
\Ebb[\alg]
\leq \min_{h \in [g]} \Ebb[\A^{(h)}] + \left(\frac{9}{4} n^{5/3} (\log g)^{1/3} \right) \Ebb[\max_{i \in [n]} p_i]\;.
\end{align*}
For example, if the job sizes are independently sampled from an exponential distribution with parameter $1$, then
\begin{align*}
\Ebb[\alg]
&\leq \min_{h \in [g]} \Ebb[\A^{(h)}] + \frac{9}{4} (n^{5/3} \log n) (\log g)^{1/3}\\
&= \min_{h \in [g]} \Ebb[\A^{(h)}] + o(\Ebb[\opt])\;.
\end{align*}

\paragraph{Generalization to randomized algorithms.}
The result can be easily generalized to algorithms using random parameters that can be sampled before starting the execution of the jobs (for e.g.\ choosing a random permutation). Assuming that each algorithm $\A^{(h)}$ uses a random vector $\xi^{(h)}$ as a parameter, that can be sampled before starting the algorithm, then the bound of~\Cref{thm:combining-algo} gives
\begin{align*}
\Ebb[\alg \mid \xi^{(1)}, \ldots, \xi^{(g)}]
&\leq \min_{h \in [g]} \A^{(h)}(\xi^{(h)}) + \frac{9}{4} n^{5/3} (\log g)^{1/3} \max_{i \in [n]} p_i\;,
\end{align*}
hence
\begin{align*}
\Ebb[\alg]
&\leq \Ebb[\min_{h \in [g]} \A^{(h)}(\xi^{(h)})] + \frac{9}{4} n^{5/3} (\log g)^{1/3} \max_{i \in [n]} p_i\\
&\leq \min_{h \in [g]} \Ebb[\A^{(h)}(\xi^{(h)})] + \frac{9}{4} n^{5/3} (\log g)^{1/3} \max_{i \in [n]} p_i\;.
\end{align*}

\section{Proofs of \cref{sec:stochastic-progress-bar}}

\subsection{Proof of \cref{thm:ETC-expectation}}
\stochasticExpectation*

\begin{proof}
For all $j \in [n]$, let $\tau_j$ denote the total processing time allocated to job $j$ by the point at which its progress bar reaches the threshold $\tfrac{\threshold}{g+1}$.

Let $i\neq j$. If $\tau_i < \tau_j$, then both jobs $i, j$ are executed with similar rates until job $i$ emits its signal, after which it is executed until completion. Therefore, $d(i,j) = p_i$ and $d(j,i) = \tau_i$.

Assume that $p_1 \leq \ldots \leq p_n$, and let $i<j \in [n]$. It holds that
\begin{align*}
d(i,j) + d(j,i)
&= (p_i + \tau_i) \one(\tau_i \leq \tau_j) + (p_j + \tau_j) \one(\tau_j < \tau_i)\\
&= p_i\one(\tau_i \leq \tau_j) + p_j\one(\tau_j < \tau_i) + \big( \tau_i\one(\tau_i \leq \tau_j) + \tau_j\one(\tau_j < \tau_i) \big)\\
&= p_i + \min(\tau_i, \tau_j) + (p_j - p_i)\one(\tau_j < \tau_i)\;,
\end{align*}

Recalling that $\Ebb[\tau_i] = \threshold / \mu_i = \frac{\threshold}{g}p_i$, and by independence of $\tau_i$ and $\tau_j$, we obtain directly that $\Ebb[\min(\tau_i, \tau_j)]
\leq \frac{\threshold}{g} \min(p_i, p_j) = \frac{\threshold}{g} p_i$, and it follows that
\begin{align}
\frac{1}{p_i}\Ebb[d(i,j) + d(j,i)]
&\leq 1  + \frac{\threshold}{g} + (p_j/p_i - 1) \Pbb(\tau_j < \tau_i) \nonumber\\
&\leq 1 + \frac{\threshold}{g}  + (p_j/p_i - 1) \cdot \frac{\threshold-1/2}{\sqrt{\pi (\threshold-1)}} \int_{\frac{p_j - p_i}{p_j + p_i}}^1 \left(1-x^2 \right)^{\threshold-1} dx \label{aligneq:Pr(tauj<taui)<integral}\\
&\leq  1 + \frac{\threshold}{g} + \frac{\threshold-1/2}{\sqrt{\pi (\threshold-1)}} \sup_{r\geq 1} \left\{(r-1) \int_{\frac{r-1}{r+1}}^1 \left(1-x^2 \right)^{\threshold-1} dx\right\} \label{aligneq:integral<sup_integral}\\
&\leq 1 + \frac{\threshold}{g} + \frac{\threshold-1/2}{\sqrt{\pi (\threshold-1)}} \cdot \frac{1}{\threshold-2}\label{aligneq:sup_integral<1/(m-1)}\\
&\leq 1 + \frac{\threshold}{g} + \frac{1}{\sqrt{\threshold - 1}}\;. \nonumber
\end{align}
where \eqref{aligneq:Pr(tauj<taui)<integral} follows from \cref{lem:Pr(tauj<taui))} taking $m = k-1 \geq 3$, \eqref{aligneq:integral<sup_integral} is obtained immediately by taking $r = p_j/p_i$, \eqref{aligneq:sup_integral<1/(m-1)} follows from \cref{lem:sup<1/sqrt(m)}, and the last inequality holds because $k \geq 4$.

Finally, by definition of $\threshold$, we obtain that
\begin{align*}
\frac{1}{p_i}\Ebb[d(i,j) + d(j,i)]
&\leq 1 + \frac{(g/2)^{2/3} + 2}{g} + \frac{1}{(g/2)^{1/3}}
= 1 + \frac{2}{g} + \frac{3}{(4g)^{1/3}}
\leq 1 + \left( \frac{12}{g} \right)^{1/3}\;,
\end{align*}
and it follows from \eqref{eq:delay-decomp} that
\begin{align*}
\Ebb[\alg]
&= \sum_{i=1}^n p_i + \sum_{j=1}^n \sum_{i=1}^{j-1} \Ebb[d(i,j) + d(j,i)]\\
&\leq \left(\sum_{i=1}^n p_i + \sum_{j=1}^n \sum_{i=1}^{j-1} p_i\right) + \left(\frac{12}{g} \right)^{1/3} \sum_{j=1}^n \sum_{i=1}^{j-1} p_i
\leq \opt + \left(\frac{12}{g} \right)^{1/3}\opt\;.
\end{align*}

\end{proof}

\begin{lemma}\label{lem:Pr(tauj<taui))}
If $\tau_1, \tau_2$ are independent random variables, following respectively Gamma distributions with parameters $(m+1,\mu_1)$ and $(m+1,\mu_2)$, then
\[
\Pbb(\tau_2 < \tau_1) \leq \frac{m+1/2}{\sqrt{\pi m}} \int_{\frac{\mu_1 - \mu_2}{\mu_1 + \mu_2}}^1 \left(1-x^2 \right)^m dx \;.
\]
\end{lemma}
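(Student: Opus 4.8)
The natural route is to reduce the comparison of two Gamma variables with different rates to a single Beta random variable. First I would rescale: setting $\tilde\tau_i = \mu_i \tau_i$, the variables $\tilde\tau_1,\tilde\tau_2$ are i.i.d.\ with distribution $\Gamma(m+1,1)$, and since $\mu_1,\mu_2>0$ the event $\{\tau_2<\tau_1\}$ is exactly $\{\mu_1\tilde\tau_2<\mu_2\tilde\tau_1\}$. Dividing by $\tilde\tau_1+\tilde\tau_2$ and writing $B := \tilde\tau_1/(\tilde\tau_1+\tilde\tau_2)$, which has a $\mathrm{Beta}(m+1,m+1)$ law by the standard Gamma--Beta relation, this event becomes $\{B > \mu_1/(\mu_1+\mu_2)\}$. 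Hence
\[
\Pbb(\tau_2 < \tau_1) \;=\; \frac{1}{\Betafunc(m+1,m+1)} \int_{\mu_1/(\mu_1+\mu_2)}^{1} t^m (1-t)^m \, dt\;.
\]

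Next I would apply the substitution $t = (1+x)/2$, under which $t(1-t) = (1-x^2)/4$, $dt = dx/2$, and the lower limit $\mu_1/(\mu_1+\mu_2)$ maps to $(\mu_1-\mu_2)/(\mu_1+\mu_2)$. This rewrites the integral as $\frac{1}{2\cdot 4^m}\int_{(\mu_1-\mu_2)/(\mu_1+\mu_2)}^{1}(1-x^2)^m\,dx$, so it only remains to bound the constant, i.e.\ to check $\frac{1}{2\cdot 4^m\,\Betafunc(m+1,m+1)} \le \frac{m+1/2}{\sqrt{\pi m}}$. Using $\Betafunc(m+1,m+1) = (m!)^2/(2m+1)!$, the left-hand side equals $\frac{2m+1}{2}\cdot 4^{-m}\binom{2m}{m}$, so the desired inequality is equivalent to the classical central-binomial estimate $\binom{2m}{m}\le 4^m/\sqrt{\pi m}$, valid for all $m\ge 1$.

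The calculation is essentially bookkeeping once the reduction to a Beta variable is set up; the only non-routine ingredient is $\binom{2m}{m}\le 4^m/\sqrt{\pi m}$, which I would cite, or derive in one line from the identity $\binom{2m}{m}4^{-m} = \Gamma(m+\tfrac12)/(\sqrt\pi\,\Gamma(m+1))$ together with $\Gamma(m+\tfrac12)/\Gamma(m+1) \le 1/\sqrt{m}$, the latter being immediate from log-convexity of $\Gamma$ (so $\Gamma(m+\tfrac12)\le\sqrt{\Gamma(m)\Gamma(m+1)}$). The one place requiring care is tracking the factors $2$ and $4^m$ through the change of variables, so that the prefactor comes out as exactly $\frac{m+1/2}{\sqrt{\pi m}}$ rather than a slightly weaker constant; everything else is direct.
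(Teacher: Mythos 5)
Your proof is correct, and it takes a genuinely different route from the paper's. The paper computes $\Pbb(\tau_2<\tau_1)$ directly from the Gamma densities: it writes the probability as a double integral, integrates out the inner variable, and arrives after a change of variables at exactly the same intermediate identity
\[
\Pbb(\tau_2<\tau_1)\;=\;\frac{(2m+1)!}{2\cdot 4^m\,(m!)^2}\int_{\frac{\mu_1-\mu_2}{\mu_1+\mu_2}}^1(1-x^2)^m\,dx\;,
\]
before invoking the central binomial bound $\binom{2m}{m}\le 4^m/\sqrt{\pi m}$. You instead rescale to i.i.d.\ $\Gamma(m+1,1)$ variables, pass to $B=\tilde\tau_1/(\tilde\tau_1+\tilde\tau_2)\sim\mathrm{Beta}(m+1,m+1)$, and recognize $\Pbb(\tau_2<\tau_1)=\Pbb(B>\mu_1/(\mu_1+\mu_2))$, after which the substitution $t=(1+x)/2$ produces the same integral and prefactor. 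The two derivations are algebraically equivalent, but your Beta-variable route is more conceptual: it makes transparent why the answer depends on $\mu_1,\mu_2$ only through $\mu_1/(\mu_1+\mu_2)$, and it sidesteps the double-integral bookkeeping. You also supply a self-contained derivation of $\binom{2m}{m}\le 4^m/\sqrt{\pi m}$ via $\binom{2m}{m}4^{-m}=\Gamma(m+\tfrac12)/(\sqrt\pi\,\Gamma(m+1))$ and log-convexity of $\Gamma$, whereas the paper cites the bound; this is a small but welcome gain in self-containment. One minor remark: the hypothesis $m\ge 1$ is needed for $\sqrt{m}>0$ and for the $\Gamma(m)$ in your log-convexity step, which is implicit here but worth stating if you wanted the argument to stand alone.
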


\begin{proof}
For $i \in \{1,2\}$, the probability density function of $\tau_i$ is
\[
t \in [0,\infty) \mapsto \frac{\mu_i^{m+1}}{m!}  t^m e^{-\mu_i t}\;,
\]
hence
\begin{align*}
\Pbb(\tau_2 < \tau_1)
&= \int_0^\infty \frac{\mu_1^{m+1}}{m!}  t^m e^{-\mu_1 t} \int_0^t \frac{\mu_2^{m+1}}{m!} u^m e^{-\mu_2 u} du dt\\
&= \int_0^\infty \frac{\mu_1^{m+1}}{m!} t^m e^{-\mu_1 t} \int_0^1 \frac{\mu_2^{m+1}}{m!} (tu)^m e^{-\mu_2 u t} t du dt\\
&= \frac{(\mu_1 \mu_2)^{m+1}}{(m!)^2} \int_0^1 u^m \int_0^\infty t^{2m+1} e^{-(\mu_1 + \mu_2 u)t} dt du\\
&= \frac{(\mu_1 \mu_2)^{m+1}}{(m!)^2} \int_0^1 u^m \int_0^\infty \frac{t^{2m+1}}{(\mu_1 + \mu_2 u)^{2m+1}} e^{-t} \frac{dt}{\mu_1 + \mu_2 u} du\\
&= \frac{(\mu_1 \mu_2)^{m+1}}{(m!)^2} \int_0^1 \frac{u^m}{(\mu_1 + \mu_2 u)^{2m+2}} \left(\int_0^\infty t^{2m+1} e^{-t} dt \right) du\;.
\end{align*}
The inner integral corresponds to the Gamma function evaluated at $2m+2$, which equals $(2m+1)!$ as $2m+2$ is integer, hence
\begin{align*}
\Pbb(\tau_2 < \tau_1)
&=  \frac{(2m+1)!}{(m!)^2} (\mu_1 \mu_2)^{m+1} \int_0^1 \frac{u^m}{(\mu_1 + \mu_2 u)^{2m+2}} du\;.
\end{align*}
Making the substitution $x = \frac{2\mu_1}{\mu_1 + \mu_2 u} - 1$, we have that $dx = - \frac{2\mu_1 \mu_2}{\mu_1 + \mu_2 u} du$ and
$
1-x^2 = \frac{4 \mu_1 \mu_2 u}{(\mu_1 + \mu_2 u)^2}\;,
$
hence
\[
\Pbb(\tau_2 < \tau_1) = \frac{(2m+1)!}{(m!)^2} \cdot \frac{1/2}{4^m} \int_{\frac{\mu_1 - \mu_2}{\mu_1 + \mu_2}} (1-x^2)^m dx\;.
\]
Finally, using a well-known inequality on the central binomial coefficient \cite{Mercer2023Wallis}, we obtain that
\[
\frac{(2m+1)!}{(m!)^2}
= (2m+1) {2m \choose m}
\leq (2m+1) \frac{4^m}{\sqrt{\pi m}}\;,
\]
and it follows that
\[
\Pbb(\tau_2 < \tau_1) \leq \frac{m+1/2}{\sqrt{\pi m}} \int_{\frac{\mu_1 - \mu_2}{\mu_1 + \mu_2}}^1 \left(1-x^2 \right)^m dx \;.
\]
\end{proof}

\begin{lemma}\label{lem:sup<1/sqrt(m)}
If $m\geq 3$, then it holds that
\[
\sup_{r \geq 1} \left\{ (r-1) \int_{\frac{r-1}{r+1}}^1 \left(1-x^2 \right)^m dx\right\} \leq \frac{1}{m-1}\;.
\]
\end{lemma}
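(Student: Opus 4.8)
The plan is to turn the supremum over $r \ge 1$ into a clean one-variable optimization on a bounded interval. The substitution $a = \frac{r-1}{r+1}$ is a bijection between $r \in [1,\infty)$ and $a \in [0,1)$, under which $r-1 = \frac{2a}{1-a}$ (and the lower limit of the integral becomes exactly $a$). Hence it suffices to show
\[
\phi(a) \;:=\; \frac{2a}{1-a}\int_a^1 (1-x^2)^m\,dx \;\le\; \frac{1}{m-1}\qquad\text{for all }a\in[0,1).
\]

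Next I would locate the maximizer of $\phi$. Since $\int_a^1 (1-x^2)^m\,dx \le (1-a)(1-a^2)^m$, we get $\phi(a) \le 2a(1-a^2)^m \to 0$ as $a \to 1^-$, so $\phi$ extends continuously to the compact interval $[0,1]$ with $\phi(1) = 0$; together with $\phi(0) = 0$ and $\phi \ge 0$, this shows $\phi$ attains its maximum at some $a^\star \in [0,1]$. If $a^\star \in \{0,1\}$ the maximum is $0$ and there is nothing to prove, so assume $a^\star \in (0,1)$, where $\phi'(a^\star) = 0$. Direct differentiation gives
\[
\phi'(a) \;=\; \frac{2}{(1-a)^2}\int_a^1 (1-x^2)^m\,dx \;-\; 2a(1-a)^{m-1}(1+a)^m,
\]
so $\phi'(a^\star) = 0$ is equivalent to $\int_{a^\star}^1 (1-x^2)^m\,dx = a^\star (1-a^\star)^{m+1}(1+a^\star)^m$. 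Substituting this identity back into $\phi(a^\star)$ collapses the supremum to a single monomial:
\[
\sup_{a\in[0,1)}\phi(a) \;=\; \phi(a^\star) \;=\; 2(a^\star)^2\bigl(1-(a^\star)^2\bigr)^m.
\]

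Finally I would bound this uniformly. Writing $t := (a^\star)^2 \in [0,1)$, elementary calculus shows $t(1-t)^m$ is maximized at $t = \frac{1}{m+1}$, giving $2t(1-t)^m \le \frac{2m^m}{(m+1)^{m+1}}$. It then remains to verify $\frac{2m^m}{(m+1)^{m+1}} \le \frac{1}{m-1}$, i.e.\ $2(m-1) \le (m+1)\bigl(1+\tfrac1m\bigr)^m$; since $\bigl(1+\tfrac1m\bigr)^m \ge 1 + m\cdot\tfrac1m = 2$ by retaining the first two (nonnegative) terms of the binomial expansion, the right-hand side is at least $2(m+1) > 2(m-1)$. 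This closes the argument, in fact for all integers $m \ge 1$, and in particular for $m \ge 3$.

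The only genuinely delicate point is guaranteeing that the supremum is attained at an \emph{interior} critical point rather than escaping to the boundary $a \to 1$; this is exactly what the continuous extension to $[0,1]$ handles. Everything else is routine, and it is worth noting that one never needs to solve for $a^\star$ explicitly — the critical-point relation only serves to rewrite $\phi(a^\star)$ as $2(a^\star)^2(1-(a^\star)^2)^m$, after which a single-variable maximization finishes the job.
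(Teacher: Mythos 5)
Your proof is correct, and it takes a genuinely different (and cleaner) route than the paper's. The paper splits into the two regimes $a\le 1/2$ and $a>1/2$ after the same substitution $a = \frac{r-1}{r+1}$: for large $a$ it uses the crude bound $\phi(a)\le 2a(1-a^2)^m$ and a monotonicity argument to reduce to the single value $(3/4)^m$; for small $a$ it replaces $(1-x^2)^m$ by $e^{-mx^2}$, estimates the Gaussian tail $\int_a^\infty e^{-mx^2}\,dx \le \frac{e^{-ma^2}}{2ma}$, and then maximizes $\frac{e^{-mt^2}}{1-t}$ over $[0,1/2]$. Your argument avoids the case split entirely: you verify that $\phi$ vanishes at both endpoints (your bound $\phi(a)\le 2a(1-a^2)^m$ supplies continuity at $a=1$), then use the first-order condition at an interior maximizer to eliminate the integral and collapse $\phi(a^\star)$ to the explicit monomial $2(a^\star)^2\bigl(1-(a^\star)^2\bigr)^m$, which is then bounded by a one-variable maximization. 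Two minor remarks. First, you state ``the supremum collapses to $\phi(a^\star)$''; strictly there may be several critical points, but since your subsequent bound $2t(1-t)^m \le \frac{2m^m}{(m+1)^{m+1}}$ applies to \emph{every} $t\in[0,1]$, it bounds $\phi$ at every critical point simultaneously, so the logic is sound. Second, your argument in fact establishes the inequality for all real $m\ge 1$ (using Bernoulli's inequality for $(1+\tfrac1m)^m \ge 2$), which is stronger than the $m\ge 3$ hypothesis — the paper's restriction arises from its monotonicity step requiring $\frac{1}{\sqrt{2m+1}} < \frac12$ and the numerics of $(3/4)^m \le \frac{1}{m-1}$, neither of which appears in your proof.
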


\begin{proof}
With a variable change $u = \frac{r-1}{r+1}$ we obtain that

\begin{align}
\sup_{r \geq 1} \left\{ (r-1) \int_{\frac{r-1}{r+1}}^1 \left(1-x^2 \right)^m dx\right\} \nonumber
= \sup_{u \in [0,1)} \left\{
\frac{2u}{1-u} \int_u^1 \left(1-x^2 \right)^m dx \right\}\;, \label{aligneq:sup(r-1)f(r)}
\end{align}
Define for all $u \in [0,1)$ the function $f(u) = \frac{2u}{1-u} \int_u^1 \left(1-x^2 \right)^m dx$. We will show the upper bound separately for $u > 1/2$ and $u \leq 1/2$

\paragraph{Upper bound for large $u$.}
Let $u > 1/2$. Then, $f(u)$ can be simply upper bounded as follows
\[
f(u)
\leq \frac{2u}{1-u} (1-u^2)^m \int_{u}^1  dx
= 2u(1-u^2)^m\;,
\]
With an immediate derivative analysis we have that $u \mapsto 2u(1-u^2)^m$ is decreasing on $[\frac{1}{\sqrt{2m+1}}, 1]$. In particular, it holds that $\frac{1}{\sqrt{2m+1}} < \frac{1}{2}$ for $m \geq 3$, hence $u \mapsto 2u(1-u^2)^m$ is maximal on $[1/2,1]$ for $u = 1/2$, which gives
\begin{equation}\label{eq:f-upper-bound-large-u}
\forall u \geq \frac{1}{2}: \qquad
f(u) \leq 2 u(1-u^2)^m \leq (3/4)^m \leq \frac{1}{m-1}\;.
\end{equation}

\paragraph{Upper bound for small $u$.}
Consider now $u \leq 1/2$. It holds that
\begin{align*}
\int_u^1 \left( 1 - x^2 \right)^m dx
&\leq \int_u^1 e^{-mx^2} dx
\leq \int_u^\infty e^{-mx^2} dx\\
&\leq \int_u^\infty \frac{x}{u} e^{-mx^2} dx
= \frac{1}{2mu} \int_u^\infty 2mx e^{-mx^2}dx
= \frac{e^{-mu^2}}{2mu}\;,
\end{align*}
hence
\[
f(u)
\leq \frac{2u}{1-u} \cdot \frac{e^{-mu^2}}{2mu}
\leq \frac{1}{m} \cdot \sup_{t \in [0,\frac{1}{2}]}\frac{e^{-mt^2}}{1-t}\;.
\]
$t \mapsto \frac{e^{-mt^2}}{1-t}$ attains its maximum on $[0,1/2]$ for $t_m = \frac{1}{2} - \frac{1}{2}\sqrt{1-\frac{2}{m}}$, hence
\[
\sup_{t \in [0,\frac{1}{2}]} \frac{e^{-mt^2}}{1-t}
= \frac{e^{-mt_m^2}}{1-t_m}
\leq \frac{2}{1 + \sqrt{1-\frac{2}{m}}}
\leq \frac{2}{2 - \frac{2}{m}}
= \frac{m}{m-1}\;,
\]
and we deduce that
\begin{equation}\label{eq:f-upper-bound-small-u}
\forall u \leq \frac{1}{2}: \qquad
f(u) \leq \frac{1}{m} \cdot \frac{m}{m-1} = \frac{1}{m-1}\;.
\end{equation}

Combining \eqref{eq:f-upper-bound-large-u} and \eqref{eq:f-upper-bound-small-u}, we conclude that
$\sup_{u \in [0,1)} f(u) \leq \frac{1}{m-1}$.
\end{proof}

\subsection{Proof of \cref{thm:lower-bound-stochastic}}

We start by stating two classical results from information theory on the Kullback-Leibler (KL) divergence and total variation (TV) distance.

\begin{proposition}[see e.g.~\cite{CoverT06}]\label{lem:kl-poisson}
The KL divergence between two Poisson random variables is bounded as follows
$$D_{KL}(\Pcal(\lambda),\Pcal(\mu)) = \lambda\log \left( \frac{\lambda}{\mu} \right) - (\lambda - \mu) \leq \frac{(\lambda-\mu)^2}{\mu} \ ,$$
where we used for the inequality $\forall x\geq -1: \ln(1+x)\leq x$.
\end{proposition}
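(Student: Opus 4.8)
The plan is to prove the equality by a direct series computation and then extract the inequality from the elementary bound $\ln(1+x)\le x$.

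First I would expand the KL divergence as a series over the nonnegative integers, substituting the Poisson probability mass functions $k\mapsto e^{-\lambda}\lambda^k/k!$ and $k\mapsto e^{-\mu}\mu^k/k!$:
\[
D_{KL}(\Pcal(\lambda),\Pcal(\mu)) = \sum_{k=0}^{\infty} e^{-\lambda}\frac{\lambda^k}{k!}\,\log\frac{e^{-\lambda}\lambda^k/k!}{e^{-\mu}\mu^k/k!}\;.
\]
The log-ratio simplifies to $(\mu-\lambda)+k\log(\lambda/\mu)$, which is affine in $k$. Hence the sum separates into $(\mu-\lambda)\sum_{k}e^{-\lambda}\lambda^k/k!$ plus $\log(\lambda/\mu)\sum_k k\,e^{-\lambda}\lambda^k/k!$; the first sum equals $1$ (total mass of the Poisson law) and the second equals $\lambda$ (its mean). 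This yields $D_{KL}(\Pcal(\lambda),\Pcal(\mu))=\lambda\log(\lambda/\mu)-(\lambda-\mu)$, the stated equality. The degenerate case $\lambda=0$ is covered with the convention $0\log 0=0$, and one tacitly assumes $\mu>0$, since otherwise the divergence is infinite.

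Next I would establish the inequality. Setting $x=(\lambda-\mu)/\mu\ge-1$, so that $\lambda=\mu(1+x)$ and $\lambda-\mu=\mu x$, the left-hand side becomes $\mu\bigl((1+x)\log(1+x)-x\bigr)$ while the right-hand side becomes $(\lambda-\mu)^2/\mu=\mu x^2$; thus it suffices to show $(1+x)\log(1+x)-x\le x^2$ for all $x\ge-1$. Since $1+x\ge0$, multiplying the elementary inequality $\log(1+x)\le x$ (valid for $x>-1$, and trivially for $x=-1$ with the same convention) by $1+x$ preserves its direction, giving $(1+x)\log(1+x)\le(1+x)x=x+x^2$, which rearranges to the claim.

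I do not expect any real obstacle here — both steps are routine. The only points requiring mild attention are keeping track of the domain $x\ge-1$ so that multiplication by $1+x$ is legitimate, and treating $\lambda=0$ (and the assumption $\mu>0$) as an edge case in the series step.
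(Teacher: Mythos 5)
Your proof is correct and matches the approach the paper indicates: the paper gives no detailed derivation (it cites Cover--Thomas), but the footnote "where we used $\ln(1+x)\le x$" is exactly the inequality you invoke, applied to the same substitution $x=(\lambda-\mu)/\mu$; multiplying that inequality by $1+x=\lambda/\mu\ge 0$ is equivalent to the direct estimate $\lambda\log(\lambda/\mu)\le\lambda(\lambda-\mu)/\mu$, after which subtracting $\lambda-\mu$ gives $(\lambda-\mu)^2/\mu$. The series computation of the equality and the edge-case remarks ($\mu>0$, $\lambda=0$ with $0\log 0=0$) are also standard and correct.
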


\begin{proposition}[Pinsker's inequality, see e.g.~\cite{CoverT06}]\label{lem:pinsker}
    $$D_{TV}(\Pcal(\lambda),\Pcal(\mu)) \leq \sqrt{\frac{1}{2}D_{KL}(\Pcal(\lambda),\Pcal(\mu))}.$$
\end{proposition}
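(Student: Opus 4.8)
The plan is to prove the general form of Pinsker's inequality, $D_{TV}(P,Q)\le\sqrt{\tfrac12\,D_{KL}(P\|Q)}$ for any pair of probability distributions $P,Q$ on a common countable space, and then specialize to $P=\Pcal(\lambda)$ and $Q=\Pcal(\mu)$ on $\Nbb$. The argument splits into a one-dimensional (Bernoulli) estimate and a reduction of the general case to that estimate via a partitioning/data-processing step, so no properties specific to the Poisson law are actually needed beyond mutual absolute continuity.

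\textbf{Step 1 (Bernoulli case).} For $p,q\in(0,1)$ write $d(p\|q)=p\ln\tfrac pq+(1-p)\ln\tfrac{1-p}{1-q}$. I would show $d(p\|q)\ge 2(p-q)^2$ by fixing $q$ and analyzing $\psi(p):=d(p\|q)-2(p-q)^2$. One checks directly that $\psi(q)=0$, $\psi'(q)=0$, and $\psi''(p)=\tfrac{1}{p(1-p)}-4\ge 0$ since $p(1-p)\le\tfrac14$ on $[0,1]$. Hence $\psi$ is convex with a stationary point at $q$, so $\psi\ge 0$ on $(0,1)$; the boundary cases $q\in\{0,1\}$ or $p\in\{0,1\}$ follow by taking limits (both sides are then infinite unless $p=q$).

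\textbf{Step 2 (reduction to the Bernoulli case).} Let $A=\{x:P(x)\ge Q(x)\}$, so that $D_{TV}(P,Q)=P(A)-Q(A)$. Applying the log-sum inequality (an instance of convexity of $t\mapsto t\ln t$) separately to the index blocks $A$ and $A^c$ gives
\[
\sum_{x\in A}P(x)\ln\frac{P(x)}{Q(x)}\ge P(A)\ln\frac{P(A)}{Q(A)},\qquad
\sum_{x\in A^c}P(x)\ln\frac{P(x)}{Q(x)}\ge P(A^c)\ln\frac{P(A^c)}{Q(A^c)},
\]
and summing the two yields $D_{KL}(P\|Q)\ge d\big(P(A)\,\|\,Q(A)\big)$. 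Combining with Step 1,
\[
D_{KL}(P\|Q)\ \ge\ d\big(P(A)\|Q(A)\big)\ \ge\ 2\big(P(A)-Q(A)\big)^2\ =\ 2\,D_{TV}(P,Q)^2,
\]
which rearranges to the claimed inequality. Taking $P=\Pcal(\lambda)$ and $Q=\Pcal(\mu)$ with $\lambda,\mu>0$ (both put positive mass on every $x\in\Nbb$, so all the logarithms above are finite) gives Proposition~\ref{lem:pinsker}.

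\textbf{Main obstacle.} There is no genuine obstacle here — this is a classical bound — but the step requiring the most care is the scalar estimate in Step 1: verifying the sign of $\psi''$ and handling the boundary/limit cases cleanly. The only other point worth flagging is the log-sum inequality invoked in Step 2, which is itself just Jensen's inequality for $t\mapsto t\ln t$ applied to the two blocks; one should also note that for the Poisson specialization no division-by-zero issue arises precisely because $\Pcal(\mu)$ charges every nonnegative integer. If one instead wanted a fully self-contained Poisson-only proof, one could bound $D_{TV}(\Pcal(\lambda),\Pcal(\mu))$ directly, but the two-step route above is shorter and reuses the Bernoulli lemma verbatim.
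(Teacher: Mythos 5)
Your proof is correct, and it is the standard textbook argument for Pinsker's inequality (Bernoulli case via a second-derivative calculation, then reduction by the log-sum inequality applied to the decomposition $\{A, A^c\}$ with $A=\{x: P(x)\ge Q(x)\}$). The paper itself does not prove this proposition — it is stated as a known fact with a citation to Cover and Thomas — so there is nothing to compare against; your write-up supplies the proof that the paper deliberately omits. Two small remarks: (i) your Step 2 implicitly uses the identity $D_{TV}(P,Q)=\sup_{B}(P(B)-Q(B))=P(A)-Q(A)$, i.e., the ``half-$\ell_1$'' normalization of total variation, which matches the normalization in the paper's statement and in the way Lemma \ref{lem:apply-tv} is subsequently used, so the constants are consistent; and (ii) for the Poisson specialization the mutual absolute continuity you invoke requires $\lambda,\mu>0$, which is guaranteed in the paper's application since there $\lambda=g\tau$ and $\mu=g\tau/(1+\tau)$ with $g,\tau>0$.
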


We now apply these results to Poisson random variables with specific parameterization.

\begin{lemma}\label{lem:apply-tv}
Let $\tau,g\in \Rbb^+$, we have
\begin{equation*}
D_{KL}\left(\Pcal(g\tau),\Pcal\left(g\tau/(1+\tau)\right)\right)\leq g\tau^3 \ ,
\end{equation*}
and by Pinsker's inequality, the TV distance between the two distributions is bounded by $\sqrt{\frac{1}{2}g\tau^3}$.
\end{lemma}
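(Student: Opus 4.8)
The plan is to apply \Cref{lem:kl-poisson} directly with the substitution $\lambda = g\tau$ and $\mu = g\tau/(1+\tau)$, and then reduce everything to elementary algebra. First I would compute the gap between the two rates:
\[
\lambda - \mu = g\tau - \frac{g\tau}{1+\tau} = g\tau\cdot\frac{\tau}{1+\tau} = \frac{g\tau^2}{1+\tau}\;.
\]
Squaring and dividing by $\mu$ gives
\[
\frac{(\lambda-\mu)^2}{\mu} = \frac{g^2\tau^4/(1+\tau)^2}{g\tau/(1+\tau)} = \frac{g\tau^3}{1+\tau}\;,
\]
and since $\tau \ge 0$ implies $1+\tau \ge 1$, this is at most $g\tau^3$. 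Invoking the inequality $D_{KL}(\mathcal P(\lambda),\mathcal P(\mu)) \le (\lambda-\mu)^2/\mu$ from \Cref{lem:kl-poisson} then yields the first claimed bound.

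For the second claim I would simply plug the first bound into Pinsker's inequality (\Cref{lem:pinsker}): $D_{TV}(\mathcal P(g\tau),\mathcal P(g\tau/(1+\tau))) \le \sqrt{\tfrac12 D_{KL}(\mathcal P(g\tau),\mathcal P(g\tau/(1+\tau)))} \le \sqrt{\tfrac12 g\tau^3}$, using that $x \mapsto \sqrt{x}$ is nondecreasing.

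There is no real obstacle here: the statement is essentially a one-line corollary of \Cref{lem:kl-poisson,lem:pinsker}, and the only thing to be careful about is the bookkeeping in the algebraic simplification of $(\lambda-\mu)^2/\mu$ and the harmless use of $1+\tau \ge 1$ to drop the denominator. The reason the lemma is stated separately is presumably just to package this particular parameterization for reuse in the proof of \Cref{thm:lower-bound-stochastic}, where $\tau = g^{-1/3}$ will make the TV bound a constant bounded away from $1$.
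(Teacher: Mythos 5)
Your proof is correct and takes the same route as the paper: apply \Cref{lem:kl-poisson} with $\lambda = g\tau$, $\mu = g\tau/(1+\tau)$, simplify $(\lambda-\mu)^2/\mu$ to $g\tau^3/(1+\tau)$, drop the denominator via $1+\tau\ge 1$, and finish with Pinsker. Your algebra is in fact slightly cleaner than the paper's, which writes the intermediate quantity as $g\tau\bigl(1-\tfrac{1}{1+\tau}\bigr)^2 = g\tau^3/(1+\tau)^2$, apparently dropping a factor of $(1+\tau)$ relative to $(\lambda-\mu)^2/\mu$ — a harmless slip since both intermediate expressions are at most $g\tau^3$.
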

\begin{proof}
Using the \cref{lem:kl-poisson} on the divergence between Poisson random variables,
$$D_{KL}\left(\Pcal(g\tau),\Pcal\left(g\tau/(1+\tau)\right)\right)\leq g\tau\left(1-\frac{1}{1+\tau}\right)^2\leq g\tau^3,$$
where we use $\forall x\geq0 : \frac{1}{1+x}\geq 1-x.$
\end{proof}

\begin{lemma} Let $\tau \in \Rbb$. We consider two Poisson point processes $X$ and $Y$, with fixed rates. We denote by $X(\tau)\in \Nbb$ the value of the process $X$ at time $\tau$ and by $X(\leq \tau)$ the value of the process on the interval $[0,\tau]$, i.e., $X(\leq \tau):[0,\tau]\rightarrow \Nbb$. Then, it holds that
    $$D_{TV}(X(\leq \tau),Y(\leq \tau)) = D_{TV}(X(\tau),Y(\tau)).$$
\end{lemma}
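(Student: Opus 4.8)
The plan is to show that a Poisson point process is completely determined, in distribution, by its terminal value, because conditionally on the number of points in $[0,\tau]$ those points are i.i.d.\ uniform on $[0,\tau]$ regardless of the rate. This will let us transfer the total variation distance from the ``path'' object $X(\le \tau)$ to the scalar $X(\tau)$ via the data processing inequality applied in both directions.

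First I would note that $X(\tau)$ is a (deterministic) function of the path $X(\le \tau)$, namely $X(\tau) = X(\le\tau)(\tau)$, so the data processing inequality for total variation immediately gives $D_{TV}(X(\tau),Y(\tau)) \le D_{TV}(X(\le\tau),Y(\le\tau))$. The substance is the reverse inequality. For that I would invoke the classical conditional-uniformity property of Poisson processes: for a Poisson process of any rate $\lambda$ on $[0,\tau]$, conditionally on $\{X(\tau) = k\}$, the ordered arrival times are distributed as the order statistics of $k$ i.i.d.\ $\mathrm{Uniform}[0,\tau]$ random variables. Crucially this conditional law does \emph{not} depend on $\lambda$. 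Hence one can construct the path $X(\le\tau)$ from $X(\tau)$ by a single randomized map (a Markov kernel) $K$ that is the \emph{same} kernel for both $X$ and $Y$: sample $k = X(\tau)$, then draw $k$ uniform points in $[0,\tau]$ and form the counting path. Since $X(\le\tau) = K(X(\tau))$ and $Y(\le\tau) = K(Y(\tau))$ in distribution with a common kernel $K$, the data processing inequality for total variation in the other direction yields $D_{TV}(X(\le\tau),Y(\le\tau)) \le D_{TV}(X(\tau),Y(\tau))$. Combining the two inequalities gives equality.

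The main obstacle is purely a matter of rigor rather than difficulty: one must be careful that ``the path'' $X(\le\tau)$ is a well-defined random element (e.g.\ of the space of right-continuous non-decreasing integer-valued step functions on $[0,\tau]$, or equivalently of finite point configurations on $[0,\tau]$), and that the conditional-uniformity statement is being applied in that space so that $K$ is genuinely a measurable Markov kernel. Once the state space is fixed, the argument is just two applications of the data processing inequality for total variation, $D_{TV}(f(U),f(V)) \le D_{TV}(U,V)$ for any measurable $f$ (and more generally for any Markov kernel), which is standard.

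I would therefore organize the proof as: (i) recall the data processing inequality for total variation under a common (possibly randomized) map; (ii) apply it with the deterministic evaluation map $\omega \mapsto \omega(\tau)$ to get ``$\ge$''; (iii) recall the conditional-uniformity characterization of Poisson processes and package it as a rate-independent kernel $K$ reconstructing the path from the count; (iv) apply the data processing inequality with $K$ to get ``$\le$''; (v) conclude equality. No displayed multi-line computation is needed.
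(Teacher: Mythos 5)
Your proof is correct and takes essentially the same approach as the paper. The paper phrases the reverse inequality as an explicit coupling (maximal coupling of the counts, then a shared sequence of uniform random variables used to place the points), whereas you invoke the data processing inequality with a rate-independent Markov kernel, but these are two presentations of the same underlying idea — conditional uniformity lets one reconstruct the path from the count by a randomized map that does not depend on the rate.
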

\begin{proof}[Proof Sketch]
Any coupling of $X(\leq \tau)$ and $Y(\leq \tau)$ induces a coupling of $X(\tau)$ and $Y(\tau)$. Thus, $D_{TV}(X(\tau),Y(\tau)) \leq D_{TV}(X(\leq \tau),Y(\leq \tau))$. For the other inequality, consider a coupling $(X,Y)$ of random variables such that $X\sim X(\tau)$ and  $Y\sim Y(\tau)$ and $\Pbb(X\neq Y) = D_{TV}(X(\tau),Y(\tau))$, as well as $U_1,U_2,\dots$ a list of uniform random variables in $[0,\tau]$. Then define
\begin{align*}
    X(t) &= \#\{i\in \{1,\dots, X\} : U_i \leq t\}, \text{ and } \\
    Y(t) &= \#\{i\in \{1,\dots, Y\} : U_i \leq t\}.
\end{align*}
Note that if $X=Y$ then, forall $t\leq \tau, X(t) = Y(t)$. Also, it is a standard observation that $X:t\in[0,\tau]\rightarrow X(t)$ and $Y:t\in[0,\tau]\rightarrow Y(t)$ are two coupled Poisson point processes with the desired intensities. This shows the reverse inequality $D_{TV}(X(\leq \tau),Y(\leq \tau))\leq D_{TV}(X(\tau),Y(\tau))$.
\end{proof}
\begin{lemma}\label{lem:ratio-lb} Let $\tau \in (0,1)$. Consider an instance with two jobs $\{1,2\}$ with processing times $\{1, 1+\tau\}$. We say that a scheduling algorithm $\Acal$ has a preference for job $1$ if an elapsed time of $\tau$ is attained on job $1$ before being attained on job $2$. Let $q$ be the probability that $\Acal$ has a preference for the longest job. Then, the competitive ratio of $\Acal$ is at least
$$1+\frac{q}{3}\tau - \frac{2}{9}\tau^2.$$
\end{lemma}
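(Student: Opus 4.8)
The plan is to lower-bound the expected total completion time of $\Acal$ on the two-job instance $\{1, 1+\tau\}$ by conditioning on the ``preference'' event, and to compare it with $\opt = 1 + (1+\tau) + \min\{1, 1+\tau\} = 3 + \tau$. First I would write $\alg = p_1 + p_2 + d(1,2) + d(2,1)$ via \eqref{eq:delay-decomp}, so that everything reduces to controlling the mutual delay $d(1,2) + d(2,1)$. The key structural observation is that up to the moment one job reaches elapsed time $\tau$, the algorithm cannot distinguish the instance $(1, 1+\tau)$ from the ``swapped'' instance $(1+\tau, 1)$ with non-negligible probability: on the window $[0,\tau]$ of elapsed time, job $j$ has processing time $p_j$, so its displayed progress is the value at time (elapsed)$/p_j \le \tau$ of a Poisson process, and by \cref{lem:apply-tv} together with the preceding coupling lemma, the law of the pair of observed progress traces under the two instances differs in total variation by at most $\sqrt{\tfrac12 g\tau^3}$ — but actually for the lemma we only need the cruder fact that the algorithm's behavior on $[0,\tau]$, and in particular which job it drives to elapsed time $\tau$ first, is governed by a random variable whose ``preference for the longer job'' has probability exactly $q$ by hypothesis.

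Next I would lower-bound the delay in each of the two cases. If $\Acal$ attains elapsed time $\tau$ on job $2$ (the long job) first — which happens with probability $q$ — then job $1$ has received \emph{at most} $\tau$ units of processing by that time, so $d(2,1) \ge$ (the processing job $2$ received while job $1$ was still unfinished)$\ge \tau$ — more carefully, $C_1 \ge $ the time at which job $2$ first reaches elapsed time $\tau$, and at that time $e_2 = \tau$, giving $d(2,1) = e_2(C_1) \ge \tau$. Combined with $d(1,2) \ge \min\{p_1, p_2\} \cdot$(fraction of $p_2$ done before $C_1$)\ldots{} here I would instead argue more directly: regardless of preference, $d(1,2) + d(2,1) \ge \min\{p_1,p_2\} = 1$ always (this is the optimal-schedule lower bound on mutual delay), and \emph{in addition}, on the preference-for-long event, the long job delays the short one by at least $\tau$ beyond what the optimal schedule does, because job $2$ must have been run for at least $\tau$ units while job $1$ was incomplete, whereas the optimum runs job $2$ for at most $1$ unit total before job $1$ finishes. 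So on that event $d(1,2)+d(2,1) \ge 1 + \tau$, and unconditionally $d(1,2) + d(2,1) \ge 1$. Taking expectations, $\Ebb[d(1,2)+d(2,1)] \ge 1 + q\tau$, hence $\Ebb[\alg] \ge (2 + \tau) + (1 + q\tau) = 3 + \tau + q\tau$. Dividing by $\opt = 3 + \tau$ gives a competitive ratio at least $1 + \frac{q\tau}{3+\tau} \ge 1 + \frac{q\tau}{3}\bigl(1 - \frac{\tau}{3}\bigr) \ge 1 + \frac{q}{3}\tau - \frac{q}{9}\tau^2 \ge 1 + \frac{q}{3}\tau - \frac{2}{9}\tau^2$, using $\frac{1}{3+\tau} \ge \frac{1}{3}(1-\frac{\tau}{3})$ for $\tau \in (0,1)$ and $q \le 1$ (and noting $q \le 1$ makes the final constant $\tfrac29$ a valid, if slightly loose, bound).

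The main obstacle I anticipate is making the ``extra delay of $\tau$'' step fully rigorous: I need to argue that if elapsed time $\tau$ is reached on job $2$ before job $1$, then in $\Acal$'s schedule job $2$ genuinely delays job $1$ by at least $\tau$ more than in the optimal schedule — i.e., that $d(2,1) \ge \tau$ while in OPT the long job's delay to the short one is its full contribution $\min\{1, 1+\tau\} = 1$ but the \emph{excess} is measured correctly. The cleanest route is probably to bound $d(2,1) + d(1,2)$ from below directly: we always have $d(1,2) \ge e_1(C_2) \ge$ something, and when job $2$ hits elapsed $\tau$ first at some time $t_\tau$, we have $C_1 \ge t_\tau$ and $e_2(t_\tau) = \tau$, while also $C_1 \ge t_\tau \ge e_1(t_\tau) + e_2(t_\tau) = e_1(t_\tau) + \tau$; combined with the fact that after $t_\tau$ job $1$ still needs to complete (it has received $e_1(t_\tau) \le \tau < 1$), one gets $d(2,1) = e_2(C_1) \ge \tau$ and separately $d(1,2) + d(2,1) \ge p_1 = 1$, but to get the \emph{sum} up to $1+\tau$ one observes $d(1,2) \ge e_1(C_2)$ and $C_2 \ge C_1$ so $e_1(C_2) = p_1 = 1$ once job 1 is done — hence $d(1,2) = 1$ whenever $C_1 \le C_2$, and then $d(1,2) + d(2,1) \ge 1 + \tau$ on the preference-for-long event follows cleanly; if instead $C_2 < C_1$ then job 2 finished first so $d(2,1) \ge p_2 = 1+\tau$ directly. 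Either way the bound $\Ebb[d(1,2)+d(2,1)] \ge 1 + q\tau$ holds, and the rest is the elementary estimate above.
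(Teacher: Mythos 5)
Your proof is correct and takes essentially the same route as the paper's: condition on the preference-for-the-long-job event to show $\Ebb[\alg] \ge 3+(1+q)\tau$, then divide by $\opt = 3+\tau$ and apply $\frac{1}{1+x}\ge 1-x$ with $q\le 1$. The only cosmetic differences are that you phrase the completion-time lower bound via the delay decomposition \eqref{eq:delay-decomp} rather than arguing about $C_1,C_2$ directly, and your opening digression on total variation and coupling really belongs to \cref{lem:preference} rather than here, which you yourself notice.
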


\begin{proof}
For the instance considered, we always have $\opt = 3+ \tau$. If the algorithm has a preference for the longest job, then $\alg \geq 3+2\tau$ (consider the two options: either the algorithm goes on and finishes the longest job first, or the algorithm finishes the shortest job first but was delayed by the longest job). Denoting by $q$ the probability that the algorithm has a preference for the longest job, we thus have $\Ebb(\alg) \geq 3 + (1+q)\tau$. Therefore, the competitive ratio is at least
\begin{equation*}
    \Ebb\left(\frac{\alg}{\opt}\right) \geq \frac{3+(1+q)\tau}{3+\tau}\geq \left(1+\frac{1+q}{3}\tau\right)\left(1-\frac{1}{3}\tau\right)\geq 1+\frac{q}{3}\tau - \frac{2}{9}\tau^2,
\end{equation*}
where we used $\forall x\geq -1: \frac{1}{1+x}\geq 1-x$ and $q\leq 1$.
\end{proof}

\begin{lemma}\label{lem:preference}
Consider the instance with two jobs $\{1,2\}$ of processing times $\{1,1+\tau\}$, assigned at random (i.e., $\Pbb((p_1,p_2) = (1,1+\tau)) = 1/2$ and $\Pbb((p_1,p_2) = (1+\tau,1)) = 1/2$). For any scheduling algorithm $\Acal$ (deterministic or randomized) the probability $q$ that $\Acal$ has a preference for the longest job is at least equal to $1/2-d$, where $d = D_{TV}(\Pcal(g\tau),\Pcal(g\tau/(1+\tau)))$.
\end{lemma}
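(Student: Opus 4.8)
The plan is to reduce the statement to a data-processing / indistinguishability argument. First I would observe that ``having a preference for the longest job'' is a random event determined by the algorithm's internal randomness together with the realization of the two progress-bar processes $e \mapsto X_1(e)$ and $e \mapsto X_2(e)$, but only up to the moment when one of the two jobs has received an elapsed time of $\tau$. Indeed, before that moment the scheduler has seen nothing but the two displayed-progress trajectories restricted to $[0,\tau]$, and its decision of which job to run is a (possibly randomized) measurable function of that observed data. So the preference bit is a function of the pair of restricted processes $X_1(\le \tau)$, $X_2(\le \tau)$ and independent external randomness. Here $X_j(e)$ is governed by a Poisson process with rate $g/p_j$, so on job $j$ the count $X_j(\le e)$ is a Poisson process with rate $g/p_j$ evaluated up to elapsed time $e$; in particular the relevant object on $[0,\tau]$ is a Poisson process with parameter $g\tau/p_j$ at the endpoint, which equals $g\tau$ if $p_j = 1$ and $g\tau/(1+\tau)$ if $p_j = 1+\tau$.

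Next I would set up the coupling / hypothesis-testing frame. Let $H$ be the hidden assignment, i.e.\ $H=1$ when $(p_1,p_2)=(1,1+\tau)$ and $H=2$ when $(p_1,p_2)=(1+\tau,1)$, each with probability $1/2$. Under $H=1$ the data $(X_1(\le\tau),X_2(\le\tau))$ has law $\Pcal(g\tau)\otimes\Pcal(g\tau/(1+\tau))$ (restricted to $[0,\tau]$), and under $H=2$ it has the swapped law. ``Preference for the longest job'' corresponds exactly to the event that the algorithm's preference choice disagrees with $H$ (it prefers job $2$ when $H=1$, or job $1$ when $H=2$). Since the algorithm must commit to a preference (whichever job first reaches elapsed time $\tau$, or a tie broken by its own rule), its preference is a $\{1,2\}$-valued test $T$ of $H$ based on the data, and $q = \Pbb(T \ne H)$. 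Standard Le Cam two-point bounds then give $\Pbb(T\ne H) \ge \tfrac12\big(1 - D_{TV}(P_1,P_2)\big)$ where $P_1,P_2$ are the two data laws. By the preceding lemma in the excerpt, the TV distance between these product laws equals the TV distance between their ``differing'' coordinates, and between the processes on $[0,\tau]$ it equals the TV distance at the endpoint, namely $D_{TV}(\Pcal(g\tau),\Pcal(g\tau/(1+\tau))) = d$. Hence $q \ge \tfrac12 - d$.

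Concretely the steps in order are: (i) argue the preference is decided using only the two displayed-progress trajectories up to the first time elapsed-time $\tau$ is attained on either job, so it is a measurable randomized function of $(X_1(\le\tau),X_2(\le\tau))$; (ii) identify these restricted processes' laws under the two assignments as swapped copies of $\Pcal(g\tau)\otimes\Pcal(g\tau/(1+\tau))$, using that a Poisson process of rate $g/p_j$ run for elapsed time $\tau$ is $\Pcal(g\tau/p_j)$; (iii) recognize ``preference for the longest job'' as the error event of a binary test of the symmetric hypothesis $H$, and invoke the two-point lower bound $q = \Pbb(T\ne H) \ge \tfrac12(1 - D_{TV})$, which for a prior-$\tfrac12$ symmetric two-point problem is just the elementary identity that the Bayes error equals $\tfrac12(1-D_{TV})$ and any test does at least as badly; (iv) use the two lemmas already proved to rewrite the relevant TV distance as $d = D_{TV}(\Pcal(g\tau),\Pcal(g\tau/(1+\tau)))$, concluding $q \ge \tfrac12 - d$. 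The main obstacle I anticipate is step (i): being careful that ``stopping the first time elapsed time $\tau$ is reached on either job'' is a legitimate stopping time for the observed filtration and that the scheduler genuinely has no further information (no side channel, e.g.\ it does not learn $p_j$) that would let its preference depend on more than the two restricted trajectories; this requires spelling out the information model precisely, after which the rest is the textbook Le Cam argument combined with the already-established TV computation.
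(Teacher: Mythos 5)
Your overall strategy is the same as the paper's: reduce the preference bit to a binary hypothesis test on the restricted progress trajectories $(X_1(\le\tau),X_2(\le\tau))$ under the symmetric prior on the hidden assignment, and bound the error by a total-variation argument. The paper phrases this as a coupling of the two product laws (couple the short/long processes across the two hypotheses and take a union bound), while you invoke the Le~Cam two-point lower bound directly; these are two faces of the same computation.

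There is, however, a concrete error in your step (iv). You assert that
$D_{TV}\bigl(\mu\otimes\nu,\ \nu\otimes\mu\bigr) = D_{TV}(\mu,\nu) = d$,
with $\mu=\Pcal(g\tau)$ and $\nu=\Pcal(g\tau/(1+\tau))$. This is false in general: both coordinates change when you swap, so the ``product-of-laws TV equals one-coordinate TV'' identity you seem to be appealing to (which holds when exactly one factor changes) does not apply. The correct relation is $d \le D_{TV}(\mu\otimes\nu,\nu\otimes\mu) \le 2d$; a numerical check with Poisson laws already shows strict inequality on both sides. The lemma from the paper that you cite only collapses a single restricted Poisson process to its endpoint value; it says nothing about the swap of a two-coordinate product. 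If your claimed identity were true you would be proving the strictly stronger bound $q\ge\tfrac12-\tfrac{d}{2}$, which is a signal that something is off.

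The fix is immediate and leads exactly to the stated bound: by the triangle inequality,
\begin{equation*}
D_{TV}(\mu\otimes\nu,\ \nu\otimes\mu)
\;\le\; D_{TV}(\mu\otimes\nu,\ \mu\otimes\mu)+D_{TV}(\mu\otimes\mu,\ \nu\otimes\mu)
\;=\; 2\,D_{TV}(\mu,\nu)\;=\;2d\;,
\end{equation*}
and Le~Cam then gives $q\ge\tfrac12(1-2d)=\tfrac12-d$. This $\le 2d$ bound is precisely what the paper's proof encodes via its two coordinatewise couplings and a union bound, yielding the ``good'' event $A$ of probability $1-2d$. So the two arguments agree after the correction; your step (i) about the preference being a measurable function of the trajectories up to the stopping time where elapsed time $\tau$ is first reached is fine and matches the paper's setup.
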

\begin{proof}
We consider $X^S$ the Poisson process of the shortest job (rate $1$) and $X^L$ the Poisson process of the longest job (rate $1/(1+\tau)$). We consider two pairs of Poisson point processes on $[0,\tau]$, with rates $1$ and
$\frac{1}{1+\tau}$, further denoted by $(X^S,X^L)$ and $(\bar X^S,\bar X^L)$ that are coupled in a way such that under event $A$, with $\Pbb(A) = 1-2d$,
we have $X^S  = \bar X^L$ and $X^L = \bar X^S$. Note that this coupling can be defined by considering two independent couplings of $X^S$ and $\bar X^L$ and of $X^L$ and $\bar X^S$, letting $d = D_{TV}(\Pcal(g\tau),\Pcal(g\tau/(1+\tau)))$, and using the union bound.

We then introduce the random variable $B \in \{S,L\}$ saying if job $1$ is short (S) or long (L). Note that $B$ is sampled independently at random from all the other random variables.  We also introduce the notation $S^c = L$ and $L^c=S$.

The variables $(X_1,X_2)$, corresponding to the progress bars of jobs $1$ and $2$ until elapsed time of $\tau$, are obtained from $(B,X^S,X^L)$ by letting $(X_1,X_2) = (X^B,X^{B^c})$. Any scheduling algorithm $\Acal$ induces a (possibly random) function $f$ such that $f(X_1,X_2)\in \{S,L\}$ which outputs $S$ if $1$ was preferred by algorithm $\Acal$ or $L$ if $2$ was preferred by algorithm $\Acal$. The probability that the longest job by algorithm $\Acal$ is preferred is equal to
\begin{equation}\label{eq:longest-preference}
     q = \Pbb_{(B,X^S,X^L)}(f(X^B,X^{B^c}) = B^c).
\end{equation}
Conversely, the probability that the shortest job is preferred is equal to
\begin{align}
    1-q  &= \Pbb_{(B,X^S,X^L)}(f(X^B,X^{B^c})=B),\nonumber\\
    &= \Pbb_{(B,X^S,X^L)}(f(X^{B^c},X^{B})=B^c),\label{eq:shortest-preference}
\end{align}
where we used the fact that $(B,X^S,X^L)$ has the same law as $(B^c,X^S,X^L)$.

We now condition on the event $A$, of probability $1-2d$, for which we can use, $X^B,X^{B^c} = \bar X^{B^c}, \bar X^B,$
and thus $f(X^B,X^{B^c}) = f(\bar X^{B^c},\bar X^B)$, which implies
\begin{equation}
 \Ebb(\one(A)\one(f(X^B,X^{B^c}) = B^c)) =     \Ebb(\one(A)\one(f(\bar X^{B^c},\bar X^B) = B^c)).
\end{equation}
We conclude by noting that the left-hand side is smaller than $q$ by \eqref{eq:longest-preference} and that the right-hand side is greater than $1-q-2d$ by \eqref{eq:shortest-preference} and a union bound. We get that,
$$q\geq 1-q-2d$$
and thus,
$$q\geq 1/2-d.$$
\end{proof}

\stochasticLB*

\begin{proof}
Let $\tau = \frac{1}{4}g^{-1/3}$, which satisfies $\sqrt{\frac{1}{2}g\tau^3} = 2^{-7/2} \leq \frac{1}{4}$. We consider the instance composed of two jobs $\{1,2\}$ of lengths permuted at random from $\{1,1+\tau\}$. By \cref{lem:preference}, the scheduling algorithm has a preference for the longest job with probability $q \geq \frac{1}{2} - d$, where $d = D_{TV}(\Pcal(g\tau),\Pcal(g\tau/(1+\tau)))$ satisfies $d\leq \sqrt{\frac{1}{2}g\tau^3}$, by \cref{lem:apply-tv}. Thus, by \cref{lem:ratio-lb}, the competitive ratio is at least $1+\frac{q}{3}\tau - \frac{2}{9}\tau^2 \geq 1+\frac{1}{12}\tau-\frac{1}{18}\tau\geq 1+\frac{1}{36}\tau$, where we used $q\geq \frac{1}{4}$ and $\tau \leq \frac{1}{4}$.
\end{proof}

\subsection{Bound with high probability}\label{sec:stochastic-high-probability}
In this section, we study the tail risk of the competitive ratio of our \cref{alg:combining}. While high-probability bounds for the competitive ratio is relatively, this approach was also recently investigated by \cite{dinitz2024controlling} for the ski rental problem. We start with two classical concentration lemmas on Poisson random variables and on Poisson processes.
\begin{lemma}\label{lem:mult-chernoff}
    Let $\lambda\in \Rbb$ and $\epsilon\in[0,1]$. For $X\sim \Pcal(\lambda)$, we have that
    \begin{align*}
        &\Pbb(X\geq (1+\epsilon)\lambda) \leq h_1(\epsilon)^\lambda,\quad \text{where}\quad h_1(\epsilon) = \frac{e^\epsilon}{(1+\epsilon)^{1+\epsilon}}\leq \exp(-\epsilon^2/3),\\
        &\Pbb(X\leq \lambda/(1+\epsilon)) \leq  h_2(\epsilon)^\lambda, \quad \text{where}\quad h_2(\epsilon) = \left(\frac{1+\epsilon}{e^\epsilon}\right)^{\frac{1}{1+\epsilon}}\leq  \exp(-\epsilon^2/7). %
    \end{align*}
\end{lemma}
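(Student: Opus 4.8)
The plan is to apply the standard Cramér--Chernoff exponential-moment method, using the explicit moment generating function of a Poisson law, and then to verify the two analytic estimates on $h_1$ and $h_2$ by elementary calculus. Recall that for $X\sim\Pcal(\lambda)$ one has $\Ebb[e^{tX}] = \exp\!\big(\lambda(e^{t}-1)\big)$ for all $t\in\Rbb$, which follows directly from the series expansion of the exponential.

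For the upper tail, I would fix $t>0$, apply Markov's inequality to $e^{tX}$, and obtain
\[
\Pbb\big(X\ge (1+\epsilon)\lambda\big)\le e^{-t(1+\epsilon)\lambda}\,\Ebb[e^{tX}] = \exp\!\Big(\lambda\big(e^{t}-1-t(1+\epsilon)\big)\Big).
\]
The exponent is minimized at $t=\ln(1+\epsilon)\ge 0$, giving $e^{t}-1-t(1+\epsilon)=\epsilon-(1+\epsilon)\ln(1+\epsilon)=\ln h_1(\epsilon)$, hence $\Pbb(X\ge(1+\epsilon)\lambda)\le h_1(\epsilon)^{\lambda}$. Symmetrically, for the lower tail I would fix $t>0$ and apply Markov's inequality to $e^{-tX}$, obtaining
\[
\Pbb\big(X\le \lambda/(1+\epsilon)\big)\le e^{t\lambda/(1+\epsilon)}\,\Ebb[e^{-tX}] = \exp\!\Big(\lambda\big(e^{-t}-1+\tfrac{t}{1+\epsilon}\big)\Big),
\]
whose exponent is again minimized at $t=\ln(1+\epsilon)\ge0$, yielding $e^{-t}-1+\tfrac{t}{1+\epsilon}=\tfrac{\ln(1+\epsilon)-\epsilon}{1+\epsilon}=\ln h_2(\epsilon)$, hence $\Pbb(X\le\lambda/(1+\epsilon))\le h_2(\epsilon)^{\lambda}$.

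It remains to prove the two closed-form bounds for $\epsilon\in[0,1]$. For $h_1$, set $\psi(\epsilon)=\epsilon-(1+\epsilon)\ln(1+\epsilon)+\tfrac{\epsilon^{2}}{3}$; then $\psi(0)=0$, $\psi'(\epsilon)=\tfrac{2\epsilon}{3}-\ln(1+\epsilon)$ with $\psi'(0)=0$, and $\psi''(\epsilon)=\tfrac{2}{3}-\tfrac{1}{1+\epsilon}$ changes sign only once on $[0,1]$; since $\psi'(0)=0$ and $\psi'(1)=\tfrac23-\ln2<0$, a sign analysis gives $\psi'\le0$ on $[0,1]$, so $\psi\le0$, i.e.\ $\ln h_1(\epsilon)\le-\epsilon^{2}/3$. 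For $h_2$, it suffices to show $\epsilon-\ln(1+\epsilon)\ge\tfrac{\epsilon^{2}(1+\epsilon)}{7}$; letting $\phi$ denote the difference of the two sides, one has $\phi(0)=0$, $\phi'(\epsilon)=\epsilon\big(\tfrac{1}{1+\epsilon}-\tfrac{2+3\epsilon}{7}\big)$, which is $\ge0$ precisely while $3\epsilon^{2}+5\epsilon-5\le0$ (true for $\epsilon$ up to roughly $0.7$) and then negative, so $\phi$ increases then decreases on $[0,1]$; checking the endpoint $\phi(1)=1-\ln2-\tfrac27>0$ shows $\phi\ge0$ throughout, giving $\ln h_2(\epsilon)\le-\epsilon^{2}/7$.

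The routine part is the Chernoff computation; the only slightly delicate point is that the auxiliary functions $\psi$ and $\phi$ are not monotone on $[0,1]$, so the argument must combine a second-derivative sign analysis with explicit evaluation at the endpoint $\epsilon=1$ rather than relying on monotonicity alone. (One should also implicitly restrict to $\lambda\ge0$, where the Poisson law is defined.)
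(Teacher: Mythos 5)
Your proof is correct and amounts to filling in the standard Cram\'er--Chernoff argument that the paper only cites (the paper's ``proof'' is just a pointer to \cite{mitzenmacher2017probability}). Both the MGF optimization (with optimal $t=\ln(1+\epsilon)$ for both tails) and the two elementary calculus verifications of $\ln h_1(\epsilon)\le-\epsilon^2/3$ and $\ln h_2(\epsilon)\le-\epsilon^2/7$ check out, so this is essentially the same approach, written out in full.
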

\begin{proof}
    These bounds are also known as multiplicative Chernoff bounds for Poisson variables, and are well-documented, see e.g. \cite[Th. 5.4]{mitzenmacher2017probability}.
\end{proof}

\begin{lemma}\label{lem:PPP}
    Let $g\in \Nbb$, $\mu\in \Rbb$ and $\epsilon \in[0,1]$. Consider a Poisson point process $X(\cdot)$ of rate $\mu$, i.e., a stochastic process such that $:t\rightarrow X(t)$ is increasing and at any time $t: X(t)\sim\Pcal(t\mu)$. Then, consider $\tau = X^{-1}(g) = \inf \{t\in \Rbb : X(t) \geq g\}$. We have,
    \begin{equation}
        \Pbb\left(\mu \not \in \left[\frac{1}{1+\epsilon}\frac{g}{\tau},(1+\epsilon)\frac{g}{\tau}\right]\right)\leq 2\exp(-\epsilon^2g/7),
    \end{equation}
    where the probability is taken over the randomness in $\tau$. We shall also call $\hat \mu = \frac{g}{\tau}$ the estimator of $\mu$.
\end{lemma}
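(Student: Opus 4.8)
The plan is to reduce both deviation events to statements about the value of the Poisson point process $X$ at a fixed deterministic time, and then invoke the multiplicative Chernoff bounds of \cref{lem:mult-chernoff}. The key elementary fact is that, since $X$ is integer-valued, non-decreasing, and satisfies $X(t)\sim\Pcal(t\mu)$, one has the equivalence $\{\tau\le t\}=\{X(t)\ge g\}$ for every $t$, and hence $\{\tau> t\}=\{X(t)\le g-1\}$. Writing $\hat\mu=g/\tau$, the bad event is $\{\hat\mu>(1+\epsilon)\mu\}\cup\{\hat\mu<\mu/(1+\epsilon)\}$, and I would bound each of the two pieces separately before taking a union bound.

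For the upper deviation, I would observe that $\hat\mu>(1+\epsilon)\mu$ is equivalent to $\tau<\frac{g}{(1+\epsilon)\mu}$, hence is contained in $\{X(t_1)\ge g\}$ with $t_1:=\frac{g}{(1+\epsilon)\mu}$. Since $X(t_1)\sim\Pcal(\lambda_1)$ with $\lambda_1:=t_1\mu=\frac{g}{1+\epsilon}$ and $g=(1+\epsilon)\lambda_1$, the first bound of \cref{lem:mult-chernoff} gives $\Pbb(X(t_1)\ge g)\le h_1(\epsilon)^{\lambda_1}\le\exp(-\epsilon^2\lambda_1/3)=\exp\bigl(-\tfrac{\epsilon^2 g}{3(1+\epsilon)}\bigr)\le\exp(-\epsilon^2 g/7)$, where the last inequality uses $\epsilon\le 1$ so that $3(1+\epsilon)\le 6\le 7$.

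For the lower deviation, $\hat\mu<\mu/(1+\epsilon)$ is equivalent to $\tau>\frac{g(1+\epsilon)}{\mu}$, hence is contained in $\{X(t_2)\le g-1\}\subseteq\{X(t_2)\le g\}$ with $t_2:=\frac{g(1+\epsilon)}{\mu}$. Now $X(t_2)\sim\Pcal(\lambda_2)$ with $\lambda_2:=t_2\mu=g(1+\epsilon)$, and $g=\lambda_2/(1+\epsilon)$, so the second bound of \cref{lem:mult-chernoff} yields $\Pbb(X(t_2)\le g)\le h_2(\epsilon)^{\lambda_2}\le\exp(-\epsilon^2\lambda_2/7)=\exp\bigl(-\tfrac{\epsilon^2 g(1+\epsilon)}{7}\bigr)\le\exp(-\epsilon^2 g/7)$ since $1+\epsilon\ge 1$. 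Adding the two estimates gives the claimed $2\exp(-\epsilon^2 g/7)$. The argument is entirely routine; the only point requiring any care is the translation between the hitting time $\tau$ and the process value $X(t)$, together with choosing the comparison times $t_1,t_2$ so that the Poisson means become exactly $g/(1+\epsilon)$ and $g(1+\epsilon)$, which is precisely what makes the two Chernoff exponents come out in the required form. There is no substantial obstacle here.
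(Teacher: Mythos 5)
Your proof is correct and follows essentially the same route as the paper's: translate the deviation events for $\hat\mu=g/\tau$ into hitting-time events for $\tau$, rewrite those as tail events for the Poisson random variable $X(t)$ at the appropriately chosen deterministic time (so that the Poisson mean becomes $g/(1+\epsilon)$ or $g(1+\epsilon)$), apply the two multiplicative Chernoff bounds from \cref{lem:mult-chernoff}, and finish with a union bound. The only cosmetic difference is that the paper writes the equivalences $\{\tau<t\}$ vs.\ $\{X(t)\ge g\}$ as equalities (which hold almost surely, since $\tau$ has a density), whereas you are slightly more careful and use one-sided containments, which is equally valid.
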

\begin{proof}
First, we bound,
\begin{align*}
    \Pbb\left(\mu < \frac{1}{1+\epsilon}\frac{g}{\tau}\right)
    & = \Pbb\left(\tau <\frac{1}{1+\epsilon}\frac{g}{\mu}\right),\\
    & =\Pbb\left(X\left(\frac{1}{1+\epsilon}\frac{g}{\mu}\right)\geq g\right),\\
    &  =\Pbb\left(\Pcal\left(\frac{g}{1+\epsilon}\right)\geq g\right),\\
    & \leq h_1(\epsilon)^{g/(1+\epsilon)}\leq \exp(-\epsilon^2g/6).
\end{align*}
Then, we bound,
\begin{align*}
    \Pbb\left(\mu > (1+\epsilon)\frac{g}{\tau}\right)
    & = \Pbb\left(\tau >(1+\epsilon)\frac{g}{\mu}\right),\\
    & =\Pbb\left(X\left((1+\epsilon)\frac{g}{\mu}\right)< g\right),\\
    &  =\Pbb\left(\Pcal((1+\epsilon)g)< g\right),\\
    & \leq h_2(\epsilon)^{(1+\epsilon)g}\leq \exp(-\epsilon^2 g/7).
\end{align*}
\end{proof}

\begin{theorem}\label{thm:ETC}
The Repeated Explore-then-Commit algorithm with threshold $\threshold\leq g$ has competitive ratio bounded by
\begin{equation}
    1+3\epsilon +4\frac{\threshold}{g},
\end{equation}
with probability at least $1-2n\exp(-\epsilon^2 \threshold/7)$, for any $\epsilon\in [0,1]$.

Thus for $\alpha>0$, in the regime where $\frac{\sqrt{\alpha \log n}}{g^{1/3}}<1$, taking $\threshold = g^{2/3}$ leads to a competitive ratio in $1+O\left(\frac{\sqrt{\alpha \log n}}{g^{1/3}}\right)$ with probability at least $1-O(\frac{1}{n^{\alpha}})$.
\end{theorem}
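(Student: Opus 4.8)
The plan is to upgrade the expectation argument behind \cref{thm:ETC-expectation} to a high‑probability statement by controlling the random commit times through the Poisson concentration bound of \cref{lem:PPP}. Order the jobs so that $p_1 \le \dots \le p_n$. For $j \in [n]$, in units of elapsed time the displayed progress of job $j$ is driven by a Poisson point process of rate $\mu_j := g/p_j$ (since actual progress equals $e/p_j$ and the bar's jumps form a rate‑$g$ process in actual progress). Let $\tilde\tau_j$ denote the elapsed time of the $\threshold$-th jump of this process — a Gamma variable of shape $\threshold$ and rate $\mu_j$, with mean $\threshold p_j/g$ — and let $\tau_j := \min(\tilde\tau_j, p_j)$ be the processing that job $j$ receives before it commits or completes, exactly the quantity used in the proof of \cref{thm:ETC-expectation}.

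First I would apply \cref{lem:PPP} with its parameter $g$ replaced by $\threshold$ and its rate set to $\mu_j$: with probability at least $1 - 2\exp(-\epsilon^2\threshold/7)$ the estimator $\threshold/\tilde\tau_j$ lies in $[\mu_j/(1+\epsilon),\,(1+\epsilon)\mu_j]$, i.e.
\[
\frac{\threshold\,p_j}{(1+\epsilon)\,g} \;\le\; \tilde\tau_j \;\le\; \frac{(1+\epsilon)\,\threshold\,p_j}{g}.
\]
Taking a union bound over $j \in [n]$ defines a good event $\mathcal G$, of probability at least $1 - 2n\exp(-\epsilon^2\threshold/7)$, on which these inequalities hold for all $j$ simultaneously. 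Since $\threshold \le g$ implies $\frac{\threshold p_j}{(1+\epsilon)g} \le p_j$, the truncation in $\tau_j=\min(\tilde\tau_j,p_j)$ is harmless, so on $\mathcal G$ we also have $\frac{\threshold p_j}{(1+\epsilon)g} \le \tau_j \le \frac{(1+\epsilon)\threshold p_j}{g}$ for every $j$.

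Next I would use the deterministic pairwise delay identity from the proof of \cref{thm:ETC-expectation}: for $i < j$,
\[
d(i,j) + d(j,i) = p_i + \min(\tau_i,\tau_j) + (p_j - p_i)\,\one(\tau_j < \tau_i).
\]
On $\mathcal G$ we bound $\min(\tau_i,\tau_j) \le \tau_i \le \tfrac{(1+\epsilon)\threshold}{g} p_i$; moreover $\tau_j < \tau_i$ together with the bounds forces $\tfrac{\threshold p_j}{(1+\epsilon)g} < \tfrac{(1+\epsilon)\threshold p_i}{g}$, hence $p_j < (1+\epsilon)^2 p_i$ and $(p_j - p_i)\one(\tau_j<\tau_i) \le ((1+\epsilon)^2 - 1)\,p_i \le 3\epsilon\,p_i$ for $\epsilon \le 1$. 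Using also $(1+\epsilon)\le 2$, this gives $d(i,j)+d(j,i) \le \big(1 + 3\epsilon + \tfrac{4\threshold}{g}\big)p_i$ on $\mathcal G$. Plugging this into \eqref{eq:delay-decomp} and using $\opt = \sum_{i}p_i + \sum_{i<j}p_i$ yields, on $\mathcal G$,
\[
\alg \;\le\; \sum_i p_i + \Big(1 + 3\epsilon + \tfrac{4\threshold}{g}\Big)\sum_{i<j}p_i \;\le\; \Big(1 + 3\epsilon + \tfrac{4\threshold}{g}\Big)\opt,
\]
which is the first claim. For the asymptotic corollary I would take $\threshold = \lceil g^{2/3}\rceil$ (so $\threshold/g = \Theta(g^{-1/3})$) and $\epsilon = \sqrt{7(\alpha+1)\log n/\threshold} = \Theta\!\big(\sqrt{\alpha\log n}\,g^{-1/3}\big)$, which is at most $1$ precisely in the stated regime $\sqrt{\alpha\log n}/g^{1/3} = O(1)$; then $2n\exp(-\epsilon^2\threshold/7) = 2n\cdot n^{-(\alpha+1)} = O(n^{-\alpha})$, and the competitive ratio is $1 + 3\epsilon + 4\threshold/g = 1 + O(\sqrt{\alpha\log n}\,g^{-1/3})$, the $O(g^{-1/3})$ term being absorbed.

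The summation and the union bound are routine; the one delicate point is transferring the concentration of \cref{lem:PPP}, which is stated for the hitting time of an untruncated Poisson process, to the truncated commit time $\tau_j = \min(\tilde\tau_j, p_j)$ — and this is exactly where the hypothesis $\threshold \le g$ is used. I do not anticipate a genuine obstacle beyond this bookkeeping.
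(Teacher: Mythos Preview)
Your proposal is correct and follows essentially the same approach as the paper: both define the same concentration event via \cref{lem:PPP} with a union bound over the $n$ jobs, and then bound $\alg$ deterministically on this event. The only difference is cosmetic bookkeeping in the cost step---the paper decomposes each $C_j$ into its own processing, the processing of jobs that committed earlier, and the Round-Robin exploration of jobs that commit later (then uses a permutation argument through the estimates $\pi_j=(g/\threshold)\tau_j$), whereas you reuse the pairwise delay identity from the expectation proof and bound each $d(i,j)+d(j,i)$ directly; your route is slightly more direct and actually yields $2\threshold/g$ rather than $4\threshold/g$ in the exploration term before loosening to match the statement.
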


Note that unlike in \cref{thm:ETC-expectation}, the high-probability competitive ratio scales with $n$, in contrast to the expected competitive ratio, which remains independent of $n$. This difference arises because the high-probability guarantee requires concentration bounds that hold for each job $j \in [n]$. We thus expect that a granularity that increases (logarithmically) with $g$ is needed to obtain high probability bounds in such regimes.
\begin{proof}
We consider jobs ordered by processing times $p_1 \leq \ldots \leq p_n$.
For $j\in [n]$, we write $\mu_j = \frac{g}{p_i}$ and denote by $\tau_i\in [0,p_i]$ the instant at which the $\threshold$-th signal is observed for job $j$.

We shall focus on the the event $E$ in which $\forall j\in [n] : \mu_j \in \left[\frac{1}{1+\epsilon}\frac{\threshold}{\tau_j},(1+\epsilon)\frac{\threshold}{\tau_j}\right]$. We note that by union bound, and by applying \cref{lem:PPP}, we have that $\Pbb(E^c) \leq 2n\exp(-\epsilon^2 \threshold/7)$. %

We now bound the competitive ratio under event $E$. For all jobs $j$, we note that $\pi_j = \frac{g}{\threshold}\tau_j$ is a $(1+\epsilon)$ estimate of $p_j$. We also decompose the completion time $C_j$ as follows (1) the time spent finishing $p_j$ (2) the time spent doing round robin on other arms, which equals $\sum_{i\neq j} \min\{\tau_i,\tau_j\}$, (3) the time spent finishing other jobs $\sum_{i: \tau_i<\tau_j}(p_i-\tau_i)$. We can also equivalently decompose as follows (2') the time spent on jobs finished before $j$, $\sum_{i: \tau_i<\tau_i}p_i$ and (3') the time spent on other jobs that will finish after $j$, $\sum_{i: \tau_i> \tau_j} \tau_j$. We start by bounding the contribution of (3'),
\begin{align}
\sum_{j=1}^n\sum_{i=1}^{n} \one[\tau_i>\tau_j] \cdot \tau_j
\leq \sum_{j=1}^n \sum_{i=1}^n \one[i \neq j] \cdot \min\{\tau_i, \tau_j\}
&\leq 2 \sum_{j=1}^n\sum_{i=1}^{j-1}\min\{\tau_i,\tau_j\}\nonumber\\
&\leq 2\sum_{j=1}^n\sum_{i=1}^{j-1} (1+\epsilon)\frac{\threshold}{g}\min\{p_i,p_j\}\nonumber\\
&\leq 2(1+\epsilon)\frac{\threshold}{g}\sum_{j=1}^n\sum_{i=1}^{j-1}p_i\nonumber\\
&\leq 2(1+\epsilon)\frac{\threshold}{g} \text{OPT}.\label{eq:first-term}
\end{align}
We then look at the contribution due to (2') under $E$, and we denote by $\sigma$ the permutation for which the values of $\tau_{\sigma(i)}$ are ordered $\tau_{\sigma(1)} \leq \ldots \leq \tau_{\sigma(n)}$, %
\begin{align*}
    \sum_{j=1}^n\sum_{i: \tau_i<\tau_j}p_i
    \leq  (1+\epsilon)\sum_{j=1}^n\sum_{i: \tau_i<\tau_j} \pi_i
    &=(1+\epsilon)\sum_{j=1}^n \sum_{i=1}^{j-1}\pi_{\sigma(i)}\\
    &\leq (1+\epsilon)^2\sum_{j=1}^n \sum_{i=1}^{j-1}p_i
\end{align*}
where the last inequality comes from $\forall j: \pi_{\sigma(j)}\leq (1+\epsilon)p_j$ under event $E$, since at least $j$ elements in $\{\pi_i : i\in [n]\}$ are smaller than $(1+\epsilon)p_j$ (specifically $\pi_1,\dots, \pi_j$). %
Therefore, the combined contribution of (1) and (2') is at most
\begin{equation}\label{eq:second-term}
    \sum_{i=1}^n p_i + \sum_{j=1}^n\sum_{i: \tau_i<\tau_j}p_i \leq (1+\epsilon)^2 \text{OPT}.
\end{equation}
We conclude by assembling \eqref{eq:first-term} and \eqref{eq:second-term} that, under event $E$, the competitive ratio is bounded by
\begin{equation*}
(1+\epsilon)^2+2(1+\epsilon)\frac{\threshold}{g} \leq 1+3\epsilon+4\frac{\threshold}{g}.
\end{equation*}
The last claim of the theorem statement follows from taking $\epsilon = \sqrt{\alpha \log n}\frac{\threshold}{g}$, assuming that this value is in $[0,1]$.
\end{proof}
\section{Experiments}\label{app:experiments}

We give a more detailed overview of our experimental setup and results.
Unless stated otherwise, we consider instances with $n=500$ jobs, where processing times are sampled independently from a Pareto distribution with shape parameter $1.1$. This distribution is commonly used in literature to model job sizes in scheduling problems~\cite{PurohitSK18,LindermayrM25permutation,BenomarP24nonclarivoyant}.

\paragraph{\cref{alg:robust-smooth}, robustness vs smoothness.}
In the first set of experiments, we compare the performance of \cref{alg:robust-smooth} with a fixed parameter $\alpha = 0.5$ and varying values of $\rho$. As established in \cref{thm:1-signal-main}, the parameter $\rho$ controls the tradeoff between robustness and smoothness. Predictions are generated as described in \cref{sec:experiments}.

The two figures below illustrate the ratio of the algorithm's cost to that of $\opt$ as a function of the error parameter $\sigma$, where $\sigma^2$ denotes the variance of the generated predictions. Results are shown for $\rho \in \{10^{-15}, 10^{-5}, 10^{-3}, 10^{-1}\}$. The left figure displays this ratio for $\sigma \in [0, 150]$, while the right figure presents the same metric on a logarithmic scale for $\sigma \in [5 \times 10^{-1}, 10^3]$.

As shown in \cref{thm:1-signal-main}, all values of $\rho$ ensure the same level of consistency, namely $1 + \alpha = 3/2$. However, larger values of $\rho$ improve robustness at the expense of smoothness. This tradeoff is particularly evident when comparing $\rho = 10^{-15}$ with $\rho = 10^{-1}$. For instance, the left figure illustrates that with $\rho = 10^{-1}$, the performance ratio increases abruptly from the consistency bound of $1.5$ to a higher value of $2$ for moderate values of $\sigma$. This effect becomes even more pronounced for larger values of $\rho$; specifically, for $\rho = 1$, the algorithm exhibits brittle behavior, as formally established in \cref{prop:brittle}. Larger values of $\rho$ ensure a better smoothness, hence a better overall performance when the prediction error is small, but weaker robustness guarantees, attained when the prediction error is important. The robustness values can be compared more easily in the right figure.

\begin{minipage}{0.49\textwidth}
    \includegraphics[width=\linewidth]{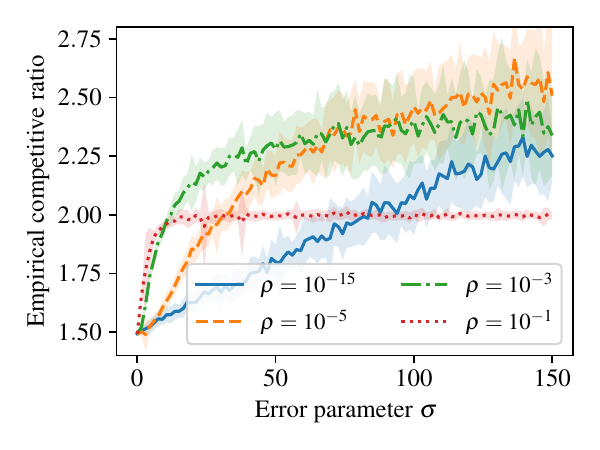}
    \captionof{figure}{Algorithm 1 with different values of $\rho$, $n=500$, $\sigma \in [0,150]$}
    \end{minipage}
    \hfill
    \begin{minipage}{0.49\textwidth}
    \includegraphics[width=\linewidth]{figures/smoothness_rho_n500_logscale.pdf}
    \captionof{figure}{Algorithm 1 with different values of $\rho$, $n=500$, $\sigma \in [5\times 10^{-1},10^3]$}
    \end{minipage}

\paragraph{Robustification strategies.}
In the second set of experiments, we evaluate and compare various \emph{robustification strategies}, which are  \textit{time sharing}~\cite{PurohitSK18}, \textit{delayed predictions} (\cref{sec:better-consistency-robustness}), and the \textit{combining algorithm} (\cref{sec:combining-prediction-RR}). as shown in \cref{thm:combining-algo}, the performance of the \emph{combining algorithm} incurs a regret term that diminishes as the sample size $n$ grows. To highlight this behavior, we compare the three robustification strategies for different values of $n$. Predictions are again generated according to \cref{sec:experiments}, and the hyperparameters of the time-sharing and delayed-predictions strategies are chosen to guarantee the same level of robustness $3$.

The figures below reveal that for different tested values of $n$, using delayed predictions yields a better robustness on the considered instances of job sizes. However, this strategy lacks smoothness compared to time-sharing.
for $n = 50$, the combining algorithm performs similarly to time-sharing. However, as $n$ increases, the difference between the two approaches becomes more apparent. Notably, for $n = 1000$, the combining algorithm achieves nearly perfect consistency (ratio approaching $1$) and robust performance on the tested instances. A key advantage of the combining algorithm is that it automatically balances consistency and robustness, without requiring manual tuning of hyperparameters to adjust this tradeoff.

\begin{minipage}{0.325\textwidth}
    \includegraphics[width=\linewidth]{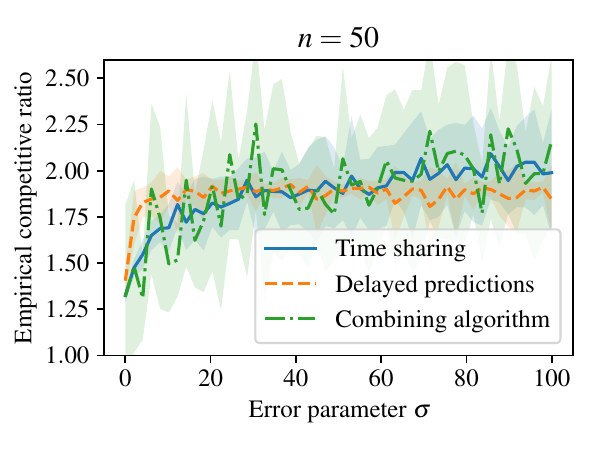}
    \captionof{figure}{Robustification strategies, for $n = 50$}
    \end{minipage}
    \hfill
    \begin{minipage}{0.325\textwidth}
    \includegraphics[width=\linewidth]{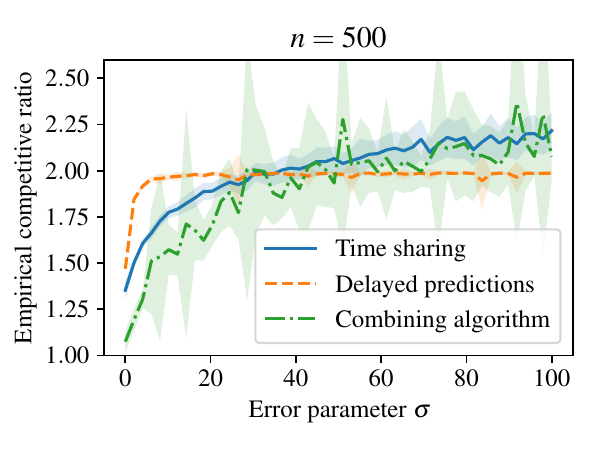}
    \captionof{figure}{Robustification strategies, for $n = 500$}
    \end{minipage}
    \hfill
    \begin{minipage}{0.325\textwidth}
    \includegraphics[width=\linewidth]{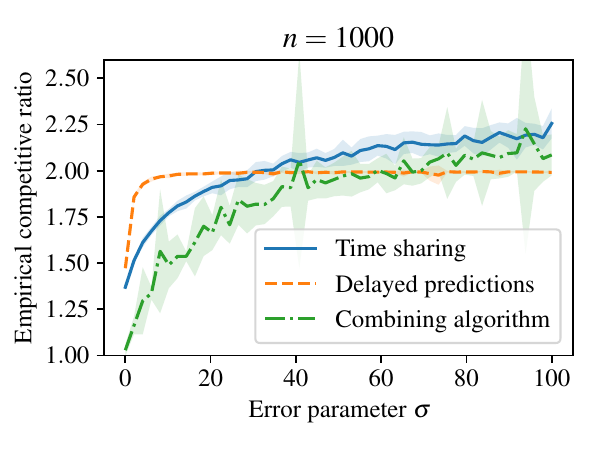}
    \captionof{figure}{Robustification strategies, for $n = 1000$}
    \end{minipage}

\paragraph{Stochastic progress bar.} For the stochastic experiments, which are given in \Cref{fig:exp-stochastic}, we average over 50 instances. For each instance, and for each considered value of $g$, we sample for each job independently progress bar signals, and then simulate the algorithms.

\section{Single Untrusted Signal on Multiple Machines}
\label{app:multi}

In this section, we provide preliminary results on the non-clairvoyant scheduling problem with a single untrusted signal on \emph{multiple identical machines}. In this setting, we are given $m$ identical machines. At any time $t$, each of the $m$ machines can process at most one job, and each job $j$ can be processed on at most one machine. Using the possibility of preempting jobs arbitrarily often, the machines can simulate the parallel execution of multiple jobs with a lower processing rate. More precisely, we can re-interpret the problem in the following way: At each time $t$, we assign a rate $R_j^t$ with $0 \le R_j^t \le 1$ to each not yet completed job $j$. The assigned rates have to satisfy $\sum_{j \in [n]} R_j^t \le m$ (for the sake of convenience, we assume that already completed jobs have a rate of zero) to not exceed the processing capacities of the machines. Note that the assigned rates do not actually specify on which machine to schedule the jobs. This is without loss of generality as we can use McNaughton's wrap-around-rule~\cite{mcnaughton1959scheduling} to transform these rates to an actual schedule that processes any job at any point in time on at most one machine. In this formulation of the problem, the total processing that a job $j$ receives until time $t$ (also called the \emph{elapsed time} of $j$ at $t$) is given by $e_j(t) = \int_{0}^t R_j^{t'} \, \mathrm{d}t'$. The completion time $C_j$ of a job $j$ is still the earliest time $t'$ with $e_j(t') \ge p_j$, and the objective remains to assign rates to minimize $\sum_{j=1}^n C_j$. As the result of this section, we prove the following theorem.

\begin{theorem}
\label{thm:multi}
   There is a $(1+\alpha)$-consistent and $(1+\frac{1}{\alpha})$-robust for non-clairvoyant scheduling with a single untrusted signal on parallel identical machines.
\end{theorem}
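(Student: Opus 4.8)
I would run the natural $m$-machine analogue of \cref{alg:robust-smooth} with $\rho=1$. Maintain the set $\mathcal{J}$ of unfinished jobs and apply a \emph{fair rule} that splits the $m$ machines equally among $\mathcal{J}$, so every alive job gets rate $\min\{1,m/|\mathcal{J}|\}$ (this is the $m$-machine Round-Robin, which is $2$-competitive, and plays the role of SETF for exploration). Whenever a job $j$ emits its signal at time $t$, grant it preferential execution: process $j$ at rate $1$ until it has received a total of $e_j(t)/\alpha$ processing (which is exactly its true length if the signal is accurate) or completes, while the remaining machines keep running the fair rule on the other alive jobs; if $j$ completes in this window, remove it from $\mathcal{J}$, otherwise freeze $j$ (rate $0$) until every other alive job either reached elapsed time $e_j$ or finished, then re-insert $j$. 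The one genuinely new bookkeeping issue is that more than $m$ jobs can be in a preferential window at once (possible only when many jobs have near-identical sizes); then the machines are split fairly among those jobs. Since \cref{thm:multi} only asks for consistency and robustness, the smooth variant ($\rho<1$) is not needed.

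\textbf{Pairwise delay bound.} The heart of the single-machine proof of \cref{thm:1-signal-main} is that for any two jobs $i<j$ (so $p_i\le p_j$) one has $d(i,j)+d(j,i)\le(1+\alpha)p_i$ when signals are accurate and $d(i,j)+d(j,i)\le(1+\nicefrac1\alpha)p_i$ for arbitrary signals, proved by a case analysis on $p_i,p_j,\beta_i,\beta_j$ (with $k=\nicefrac1\alpha$). I would replay this analysis verbatim in the $m$-machine setting, using that whenever $i$ and $j$ are both co-active under the fair rule (or under the freeze/catch-up mechanism) they accumulate processing at the \emph{same} rate, and a preferential window changes the rate of its recipient only; hence the relative evolution of the pair $\{i,j\}$ — and therefore each of $d(i,j),d(j,i)$ — is governed by exactly the same relations as on a single machine, independently of the other jobs and of the absolute (possibly shared) processing rate.

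\textbf{From pairwise delays to the objective.} For any schedule that never idles a machine while $\ge m$ jobs are alive, one has the $m$-machine identity $mC_j=p_j+\sum_{i\ne j}d(i,j)+\delta_j$, where $\delta_j$ is the total idle machine-time during $[0,C_j]$; summing over $j$, $\alg=\tfrac1m\big(\sum_j p_j+\sum_{i<j}(d(i,j)+d(j,i))+\sum_j\delta_j\big)$. Substituting the pairwise bound with constant $c\in\{1+\alpha,\,1+\nicefrac1\alpha\}$ and using $\opt=\sum_i\lceil(n-i+1)/m\rceil p_i\ge\tfrac1m\sum_i(n-i+1)p_i$, the first two sums contribute $\tfrac1m\sum_i(1+c(n-i))p_i\le c\opt$, since $c\ge1$. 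For the idle term, observe that the fair rule wastes a machine only once fewer than $m$ jobs are alive, so $\delta_j=0$ except for the last $\le m-1$ jobs to complete, and during that final phase the $\le m-1$ survivors run at rate $1$ and all finish within $p_{\max}$; writing $\sum_j\delta_j=\int(m-A(t))^+A(t)\,\mathrm dt$ with $A(t)$ the number of alive jobs then gives $\sum_j\delta_j=O(m^2 p_{\max})$, i.e.\ $\tfrac1m\sum_j\delta_j=O(m\,p_{\max})$.

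\textbf{Main obstacle.} The delicate step is the last one: reconciling the idle contribution $\tfrac1m\sum_j\delta_j$ with the stated multiplicative guarantee. The crude estimate above only yields a lower-order additive loss $O(m\,p_{\max})$ — in the same spirit as the additive error terms already present in \cref{thm:1-signal-main} — so it suffices if consistency/robustness are read up to such an additive term; removing it requires a sharper accounting of the tail, exploiting that $\opt$ itself pays at least $\tfrac1m\sum_{\ell\ge1}(m-\ell)^+p_{n-\ell+1}$ for the $m$ largest jobs and that the survivors' residual work at the start of the final phase is comparable to $\opt-\tfrac1m\sum_i(n-i+1)p_i$. Combined with handling the trivial regime $n\le m$ (where $\alg=\opt=\sum_j p_j$) and verifying that simultaneous preferential windows do not invalidate the pairwise bound, this is exactly where the ``additional hurdles'' of the multi-machine setting lie.
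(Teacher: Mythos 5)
Your approach diverges from the paper's and has two genuine gaps, one of which you yourself flag.

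First, the claim that the pairwise delay case analysis from \cref{thm:1-signal-main} transfers ``verbatim'' to $m$ machines is not justified. On a single machine, when job $j$ receives preferential execution, every other job is paused (rate $0$); on $m$ machines, the other jobs continue to make progress at rate $(m-|S|)/|E|>0$. This changes the elapsed time of $i$ at the moment $j$'s preferential window ends, and hence changes $e_j(C_i)$ and $e_i(C_j)$. Several cases in the single-machine proof (e.g.\ Case~3.3, where $i$ is assumed to catch up to $j$ before emitting) rely on the precise SETF exclusion structure, which your ``freeze and catch up'' rule only partially reproduces and at different wall-clock times. Even if a pairwise bound of the form $d(i,j)+d(j,i)\le c\,p_i$ ultimately holds for your algorithm, it requires a fresh case analysis; it cannot be taken as granted. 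Moreover, the paper's own multi-machine algorithm deliberately \emph{drops} the SETF/catch-up mechanism entirely (it is explicitly described as replacing SETF by Round-Robin), so you are analyzing a different algorithm than the one the paper analyzes.

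Second, and more fundamentally, your pairwise-delay-plus-idle-correction route produces an additive $O(m\,p_{\max})$ loss that you cannot absorb into a multiplicative $(1+\alpha)$ or $(1+\nicefrac1\alpha)$ guarantee. You acknowledge this in your ``Main obstacle'' paragraph, but do not resolve it; the ``sharper accounting of the tail'' you gesture at is exactly the missing argument. The paper sidesteps the issue entirely by not using a pairwise decomposition at all. For consistency, it bounds each $C_j$ by a delay quantity $d_j$ plus the ``preferential time'' of $j$ and its SPT-predecessors, proves a pointwise inequality $\sum_j s_{jt}-d_{jt}\ge\Delta_t$, and then identifies $\sum_j s_j - \int\Delta_t\,\mathrm dt$ with the Round-Robin objective on the \emph{scaled} instance $\{\alpha p_j\}$, which is at most $2\alpha\,\opt$. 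For robustness it splits each $p_j$ into three parts ($p_j^R,p_j^S,p_j^F$) and applies the mixed volume/fast-machine lower bound of \cite{BeaumontBEM12} three times. Neither step introduces an idle-time remainder. So the gap in your plan is not merely technical slack to be tightened; the decomposition you chose is the wrong tool, and the actual proof uses different lemmas.
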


The theorem shows that the consistency and robustness of~\Cref{thm:1-signal-main} for $\rho=1$ translates to the setting of identical parallel machines. Questions concerning the smoothness and brittleness remain open for future work. To show the theorem, we first give an algorithm and afterwards separately prove consistency and robustness.

\paragraph{Algorithm} Let $J = \{1,\ldots, n\}$ be a set of $n$ jobs such that $p_1 \leq \ldots \leq p_n$.
We consider the following algorithm, which can be seen as variant of~\Cref{alg:robust-smooth} that replaces SETF with Round-Robin.
The algorithm maintains two job sets $E$ and $S$ and a FIFO queue $Q$.
At any time, the algorithm ensures that every unfinished job is either in $E$ or in $Q$.
The set $S \subseteq Q$ is the subset of the first (at most) $m$ elements of the queue.
In the beginning, we set $E \gets J$. At any time $t$, we assign the following rates:
\begin{enumerate}[label=(\roman*)]
    \item Schedule every job $j$ in $E$ at rate $R_j^t = q_t := \min\{1, (m - \abs{S}) / \abs{E}\}$. If a job $j\in E$ emits a signal at time $t$, set $\rho_j \gets e_{j}(t)$, remove $j$ from $E$ and enqueue it to $Q$ in the next time step.
    \item Schedule every job in $S$ at rate $1$. If a job $j$ has been processed for $\rho_j (1-\alpha)/\alpha$ time steps while being in $S$, remove it from $Q$ and add it to $E$ in the next timestep.
    \item If a job is completed, remove it from $E$ or $Q$.
\end{enumerate}

For a fixed instance and the algorithm's schedule, we denote by $E_t$, $S_t$ and $Q_t$ the sets $E$, $S$ and $Q$ at time $t$. Note that $Q_t = U_t \setminus E_t$, where $U_t$ denotes the set of unfinished jobs at time $t$.

In the same way as in~\Cref{alg:robust-smooth}, the algorithm gives preferential execution to jobs once they emit their signal (cf.~Step (ii)). The queue $Q$ contains all jobs that currently receive preferential execution ($S \cap Q$) or already emitted and still need preferential execution ($Q \setminus S$). Since we schedule on $m$ machines, at most $m$ jobs can receive preferential execution at the same time. Hence, the set $S$ contains all jobs that currently receive preferential execution. The set $E$ contains all jobs that have not yet emitted their signal and all jobs that emitted their signal but did not finish during their preferential execution. These jobs are processed in a Round-Robin manner on all machines that are currently not used for preferential execution (cf.~Step (i)).
In contrast to the single machine setting, the preferential execution of jobs does not necessarily mean that no jobs in $E$ are executed at the same time: If $|S| < m$, then $m-|S|$ machines are not used for the preferential execution of the jobs in $S$ and can be used for Round-Robin execution of the jobs in $E$ instead (cf.~the rates in~Step (i)).

\subsection{Consistency}

The goal of this section is to show the following lemma, which proves the consistency bound of~\Cref{thm:multi}.

\begin{lemma}
    \label{lem:multi:consistency}
    If all signals are emitted correctly, $\alg \leq (1+\alpha) \cdot \opt$.
\end{lemma}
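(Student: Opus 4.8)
The plan is to express $\alg$ via a completion‑time formula special to this algorithm and then sum it against $\opt$. First I would handle the trivial regime $n\le m$: there the total rate demanded at any time is $|S|+|E|\le |U_t|\le n\le m$, so every unfinished job always runs at rate $1$, hence $C_j=p_j$ for all $j$ and $\alg=\sum_j p_j\le\opt\le(1+\alpha)\opt$. So from now on assume $n>m$, with $p_1\le\cdots\le p_n$.

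The structural facts I would establish --- all using that signals are correct, so a job emits exactly at elapsed time $\alpha p_j$, its preferential phase has length $(1-\alpha)p_j$ (exactly its remaining work), and it therefore never re‑enters $E$ --- are: (i) at every time $t$ all jobs currently in $E$ have the same, non‑decreasing elapsed time $\phi(t)$, so the jobs emit their signals in the order $1,2,\dots,n$, job $j$ emitting at the time $t_j$ with $\phi(t_j)=\alpha p_j$ and $t_1\le\cdots\le t_n$; (ii) consequently $Q$ behaves as an $m$‑server FIFO queue in which job $j$ arrives at $t_j$ and requires $(1-\alpha)p_j$ units of rate‑$1$ service, and since arrivals and service requirements are both non‑decreasing, jobs complete in the order $1,\dots,n$ as well, job $j$ begins its uninterrupted preferential service at $\max(t_j,C_{j-m})$, and hence
\[
C_j=\max(t_j,\,C_{j-m})+(1-\alpha)p_j\qquad(C_i:=0\text{ for }i\le0),
\]
which unrolls to $C_j=\max_{0\le k\le \ell_j}\bigl(t_{j-km}+(1-\alpha)\sum_{r=0}^{k-1}p_{j-rm}\bigr)$, where $\ell_j$ is the largest $k$ with $j-km\ge1$.

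Next I would bound the emission times by counting machine‑work: up to time $t_j$ exploration has performed exactly $\alpha\bigl(\sum_{i<j}p_i+(n-j+1)p_j\bigr)$ units of work (every job still in $E$ has elapsed time $\alpha p_j$), the preferential queue has performed at most $(1-\alpha)\sum_{i<j}p_i$, and a machine idles only once fewer than $m$ jobs are unfinished, which --- jobs completing in sorted order --- cannot happen before $C_{n-m+1}$; so for all $j\le n-m+1$,
\[
m\,t_j\;\le\;\sum_{i<j}p_i+\alpha(n-j+1)p_j .
\]
I would then substitute this into the unrolled recurrence, sum over $j$, and compare with $\opt=\sum_{i=1}^n\lceil(n-i+1)/m\rceil p_i$ (the value of the sorted round‑robin / SPT schedule), using $\opt\ge\sum_i p_i$ and $\opt\ge\frac1m\sum_i(n-i+1)p_i$.

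This last step is the main obstacle, and its core difficulty is that one \emph{cannot} argue job‑by‑job that $C_j\le(1+\alpha)C_j^{\mathrm{OPT}}$: the algorithm keeps a short job in round‑robin exploration until its displayed progress reaches $\alpha$, so it can finish that job far later than the optimum does and recoups this only on the longer jobs --- the ratio $C_j/C_j^{\mathrm{OPT}}$ is unbounded in general. The accounting must therefore be global and must respect the maximum over $k$: as $k$ varies, the ``accumulated mass of smaller jobs'' term $\frac1m\sum_{i<j-km}p_i$ trades off against the ``preferential'' term $(1-\alpha)\sum_{r<k}p_{j-rm}$, and any bound that inflates both simultaneously (e.g.\ replacing $\sum_{i<j-km}$ by $\sum_{i<j}$ while still keeping the full preferential sum) overshoots, yielding only $\approx(2-\alpha)\opt$. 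Keeping the two terms coupled while summing, the cross‑contributions largely telescope --- the double sum $\sum_j\sum_{i<j}p_i$ combines with a $\sum_j(j-1)p_j$‑type term into $(n-1)\sum_i p_i$ --- and together with $\opt\ge\sum_i p_i$ this collapses to $(1+\alpha)\opt$. A secondary technical point is the $m-1$ largest jobs, where machines may already have idled so the clean estimate on $t_j$ fails; their contribution has to be bounded separately, using that the idle machine‑time accrued by then is controlled by the lengths of the jobs that remain.
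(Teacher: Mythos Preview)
Your structural analysis is sound: the order of emissions and completions, the FIFO recurrence $C_j=\max(t_j,C_{j-m})+(1-\alpha)p_j$, and the work-counting bound $m\,t_j\le\sum_{i<j}p_i+\alpha(n-j+1)p_j$ are all correct. The gap is exactly where you yourself locate it: the summation. The ``telescoping'' you describe handles only the case where the maximum in the unrolled recurrence is attained at $k=0$, i.e., $C_j=t_j+(1-\alpha)p_j$; that does collapse to $(1+\alpha)\opt$ and is in fact what happens when $m=1$. But for $m>1$ the maximizer can be $k>0$: already with $n=3$, $m=2$, $p_1=p_2=p_3$, all three emit simultaneously and one finds $C_3=C_1+(1-\alpha)p_3>t_3+(1-\alpha)p_3$. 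When $k>0$, the active term is $t_{j-km}+(1-\alpha)\sum_{r=0}^{k}p_{j-rm}$, and your bound on $t_{j-km}$ contributes $\frac1m\sum_{i<j-km}p_i$, which no longer pairs with the preferential sum in any way you have specified; your sketch gives no mechanism to control this across the varying, instance-dependent maximizers $k_j$, nor for the separately-flagged $m-1$ largest jobs. What you have is a correct setup together with an honest identification of the hard step, but not a proof.

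The paper sidesteps the recurrence and the $\max$ entirely. It decomposes $C_j\le d_j+(1-\alpha)\bigl(p_j+\sum_{\ell:j\in\mu(\ell)}p_\ell\bigr)$, where $d_j$ is the total time $j$ is alive outside $S$ while no SPT-predecessor $\ell$ (with $j\in\mu(\ell)$) is in $S$; the bracketed quantity is exactly $C_j^{\opt}$. Summing yields $\alg\le\sum_j d_j+(1-\alpha)\opt$, so the whole burden is to show $\sum_j d_j\le 2\alpha\,\opt$. The key identity is that $\sum_j s_j-\int_0^T\Delta_t\,dt$ (where $s_j$ is the time $j$ spends alive outside $S$, and $\Delta_t$ counts machines doing preferential work) equals the Round-Robin objective on the \emph{scaled} instance $\{\alpha p_j\}$; combined with $\sum_j d_j\le\sum_j s_j-\int_0^T\Delta_t\,dt$ and the $2$-competitiveness of Round-Robin, this gives $\sum_j d_j\le 2\alpha\,\opt$. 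That comparison-to-Round-Robin is the idea your approach is missing.
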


To this end, fix an instance with job lengths $p_1,\ldots,p_n$ and fix the algorithm's schedule assuming that all signals are emitted correctly, i.e., $\beta_j = \alpha$ for all $j \in [n]$.

Define for every job $\ell$ the (possibly empty) set $\mu(\ell) = \{j \in [n] \mid \exists k \in \mathbb{N}_{\geq 1} : j = \ell + k \cdot m \} = \{\ell + m, \ell+2m,\ldots\}$. Note that in the optimal \emph{shortest processing time first (SPT) schedule}, the set $\mu(\ell)$ contains the jobs that are executed after job $\ell$ on the same machine as job $\ell$. Hence, the optimality of SPT implies
\begin{equation}
    \label{eq:multi:opt}
    \opt = \sum_{j \in J} \left( p_j + \sum_{\ell \in J : j \in \mu(\ell)} p_\ell \right).
\end{equation}

For every time $t$ and every job $j \in J$, we define the following values:
\begin{itemize}
    \item $s_{jt} = \one[ j \in U_t \setminus S_t ]$, i.e., $s_{jt}$ indicates whether job $j$ does \emph{not} receive preferential execution at time $t$,
    \item $d_{jt} = \one[ j \in U_t \setminus S_t \land \nexists \ell \in S_t : j \in \mu(\ell)]$, which indicates whether neither $j$ nor a job that is scheduled before $j$ on the same machine as $j$ in SPT receives preferential execution at $t$,
    and
    \item $\Delta_t = \min\{\abs{U_{t} \setminus S_{t}}, m\} - \min\{\abs{U_t \setminus S_t}, m - \abs{S_t}\}$, which is the number of machines that are used for preferential executions at time $t$.
\end{itemize}
Let $T := \max_{j \in [n]} C_j$ denote the time horizon of the algorithm's schedule. We write $d_j = \int_{0}^T d_{jt} \, \mathrm{d}t$ and $s_j = \int_{0}^T s_{jt} \, \mathrm{d}t$. Using this notation, we start by giving a first bound on the algorithms objective value. To this end, we characterize the job completion times as follows.

\begin{lemma}\label{lemma:completion-time-construction}
    For every job $j$, it holds that $C_j \le d_j + (1-\alpha) (p_j + \sum_{\ell \in J : j \in \mu(\ell)} p_\ell )$.
\end{lemma}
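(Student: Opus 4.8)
\emph{Plan.} I would prove the bound by decomposing the completion time as $C_j = \int_0^{C_j} 1\,\mathrm{d}t = \int_0^{C_j} d_{jt}\,\mathrm{d}t + \int_0^{C_j} (1-d_{jt})\,\mathrm{d}t$. Since $d_{jt} = 0$ for $t \ge C_j$, the first term equals $d_j$, so it suffices to show $\int_0^{C_j}(1-d_{jt})\,\mathrm{d}t \le (1-\alpha)\big(p_j + \sum_{\ell \in J:\, j \in \mu(\ell)} p_\ell\big)$. For $t < C_j$ the job $j$ is unfinished, so $d_{jt} = \one\big[j \notin S_t \text{ and } \nexists \ell \in S_t: j \in \mu(\ell)\big]$, whence $1 - d_{jt} = \one[j \in S_t] + \one\big[j \notin S_t \text{ and } \exists \ell \in S_t: j \in \mu(\ell)\big]$, a sum of two indicators that are never simultaneously $1$. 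Thus the integral splits into the total time $j$ itself receives preferential execution, plus the total time some job $\ell$ with $j \in \mu(\ell)$ lies in $S_t$ (while $j$ does not).

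The key structural fact I would establish first uses that all signals are emitted correctly: every job $k \in J$ emits exactly when $e_k(t) = \alpha p_k$, so the algorithm records $\rho_k = \alpha p_k$, moves $k$ into the FIFO queue $Q$, and (Step~(ii)) grants it a preferential execution lasting $\rho_k(1-\alpha)/\alpha = (1-\alpha) p_k$ time units, all spent in $S$ at rate $1$. During this phase $e_k$ rises from $\alpha p_k$ to $p_k$, so $k$ completes at its end and never re-enters $E$. Consequently all of the $(1-\alpha)p_k$ units of processing that $k$ receives after emitting are performed while $k \in S_t$ (rate $1$), while $k$ receives rate $0$ whenever it sits in $Q \setminus S$; therefore $\int_0^{C_k} \one[k \in S_t]\,\mathrm{d}t = (1-\alpha) p_k$.

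Applying this with $k = j$ bounds the first part of the split by $(1-\alpha)p_j$. For the second part I would use a plain union bound over the (finite) set $\{\ell \in J : j \in \mu(\ell)\}$: the total time during which at least one such $\ell$ lies in $S_t$ is at most $\sum_{\ell \in J:\, j \in \mu(\ell)} \int_0^{C_\ell} \one[\ell \in S_t]\,\mathrm{d}t = \sum_{\ell \in J:\, j \in \mu(\ell)} (1-\alpha) p_\ell$, again by the structural fact. Adding the two parts yields $\int_0^{C_j}(1-d_{jt})\,\mathrm{d}t \le (1-\alpha)\big(p_j + \sum_{\ell \in J:\, j \in \mu(\ell)} p_\ell\big)$, and combining with $\int_0^{C_j} d_{jt}\,\mathrm{d}t = d_j$ finishes the proof.

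The main obstacle is pinning down the structural fact cleanly — specifically, that correct signals force every job to complete exactly at the end of its single preferential phase and hence never return to $E$, so that the total time a job spends in $S$ equals $(1-\alpha)p_k$ independently of how it is interleaved in the FIFO queue. Everything after that is short accounting; note in particular that no disjointness among the preferential intervals of the jobs $\ell$ with $j \in \mu(\ell)$ is needed (a union bound suffices), and that the combinatorial meaning of $\mu(\cdot)$ as the SPT machine assignment is only used later, when this per-job bound is summed against \eqref{eq:multi:opt} to derive \Cref{lem:multi:consistency}.
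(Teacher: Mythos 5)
Your proof is correct and follows essentially the same route as the paper's: both hinge on the observation that under accurate signals each job's single preferential-execution phase lasts exactly $(1-\alpha)p_k$ and ends with completion, combined with a union bound over jobs $\ell$ with $j\in\mu(\ell)$. The only cosmetic difference is bookkeeping — the paper splits $C_j = t' + (1-\alpha)p_j$ at the time $t'$ when $j$ first enters $S$ and bounds $t'$ by $d_j + (1-\alpha)\sum_\ell p_\ell$, whereas you decompose $C_j$ directly via $\int_0^{C_j} d_{jt}\,\mathrm{d}t + \int_0^{C_j}(1-d_{jt})\,\mathrm{d}t$ — but the structural facts used and the resulting bound are identical.
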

\begin{proof}
    Let $t' \le T$ be the latest point in time before $j$ enters the set $S$, i.e., before $j$ receives preferential execution for the first time. Since we assume accurate signals, $j$ will complete during its preferential execution within $(1-\alpha)p_j$ time units. Thus, $C_j = t' + (1-\alpha)p_j$.

    At any point in time $t \leq t'$, either $d_{jt} = 1$ or some job $\ell \in S_t$ with $j \in \mu(\ell)$ is receiving preferential execution (using the definition of $d_{jt}$ and that $j \not\in S_{t}$ by choice of $t'$). The maximum amount of time during which jobs $\ell$ with $j \in \mu(\ell)$ receive preferential execution is $(1-\alpha) \sum_{\ell \in J : j \in \mu(\ell)} p_\ell$, again using that in the case of accurate signals each such job $\ell$ receives preferential execution for $(1-\alpha) p_{\ell}$ time. Hence, $t' \le d_j + (1-\alpha) \sum_{\ell \in J : j \in \mu(\ell)} p_\ell$ and, therefore,
    $$C_j = t' + (1-\alpha)p_j \le  d_j + (1-\alpha) (p_j + \sum_{\ell \in J : j \in \mu(\ell)} p_\ell).$$
\end{proof}

This upper bound on the job completion times and the characterization of $\opt$ in~\eqref{eq:multi:opt} immediately give
\begin{align*}
    \alg = \sum_{j \in [n]} C_j &\le \sum_{j \in [n]} d_j + (1-\alpha) \sum_{j \in J} \left( p_j + \sum_{\ell \in J : j \in \mu(\ell)} p_\ell \right) \\
    &\le  \sum_{j \in [n]} d_j + (1-\alpha) \opt.
\end{align*}
Hence, it remains to bound $\sum_{j \in [n]} d_j$ by $2\alpha \cdot \opt$ to prove the consistency of $(1+\alpha)$. To do so, we continue with the following auxiliary lemma, which allows us to replace $\sum_{j \in [n]} d_j$ with $\sum_{j \in [n]} s_j - \int_{0}^T \Delta_t \, \mathrm{d}t$ in the inequality above.

\begin{lemma}\label{lemma:delay-bound}
    At any time $t \le T$ it holds that $\sum_{j \in [n]} s_{jt} - d_{jt} \geq \Delta_t$.
\end{lemma}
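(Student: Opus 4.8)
The plan is to exploit the rigid structure that accurate signals force onto the algorithm's schedule. First I would show that, since $\beta_j=\alpha$ for all $j$, jobs emit their signals in index order — every job in $E$ is processed by Round-Robin at a common rate, so the elapsed time of job $i$ reaches $\alpha p_i$ no later than that of job $i+1$ reaches $\alpha p_{i+1}$ — and that once a job $\ell$ enters $S$ it finishes after exactly $(1-\alpha)p_\ell$ further time units, never returning to $E$; since jobs enter the FIFO queue, hence $S$, in index order and $p_1\le\cdots\le p_n$, completion times are non-decreasing in the index. Consequently, at every time $t$ the completed jobs form a prefix $\{1,\ldots,c\}$, the unfinished set is $U_t=\{c+1,\ldots,n\}$, and $S_t=\{c+1,\ldots,c+s\}$ is a contiguous block with $s:=|S_t|=\min\{|Q_t|,m\}\le m$.

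Next I would rewrite the left-hand side as a residue count. Since $d_{jt}\le s_{jt}$, we have $s_{jt}-d_{jt}=\one[\,j\in U_t\setminus S_t\text{ and }\exists\,\ell\in S_t:\ j\in\mu(\ell)\,]$, and since every $j\in U_t\setminus S_t=\{c+s+1,\ldots,n\}$ exceeds every element of $S_t$, the inner condition is equivalent to $j\equiv\ell\pmod m$ for some $\ell\in S_t$, i.e.\ $(j\bmod m)\in R$, where $R$ is the set of the $s$ distinct residues modulo $m$ met by the block $\{c+1,\ldots,c+s\}$. Putting $N_R:=|\{j\in U_t:\ (j\bmod m)\in R\}|$ and noting that $S_t$ contributes exactly $s$ to $N_R$, this yields $\sum_j(s_{jt}-d_{jt})=N_R-s$, so it suffices to prove $N_R\ge s+\Delta_t$.

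I would finish with a case analysis mirroring $\Delta_t=\min\{a,m\}-\min\{a,m-s\}$, where $a:=|U_t\setminus S_t|=|U_t|-s$. If $a\ge m$ then $\Delta_t=s$ and $n\ge c+m+s$, so the $s$ jobs $\{c+m+1,\ldots,c+m+s\}$ exist, lie in $U_t$, are disjoint from $S_t$, and carry exactly the residues of $R$; hence $N_R\ge 2s$ and $N_R-s\ge s=\Delta_t$. If $m-s\le a<m$ then $\Delta_t=|U_t|-m$; here every job of $U_t$ with residue outside $R$ lies in the block $U_t\setminus S_t$, which is a run of $a<m$ consecutive integers and so contains at most $m-s$ of them, giving $|U_t|-N_R\le m-s$ and thus $N_R-s\ge|U_t|-m=\Delta_t$. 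If $a<m-s$ then $\Delta_t=0\le N_R-s$ because $N_R\ge s$. In all cases $\sum_j(s_{jt}-d_{jt})=N_R-s\ge\Delta_t$.

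The main obstacle is the first step: one must carefully verify that the FIFO queue together with the preferential-execution rule and accurate signals really does keep $S_t$ equal to the contiguous block $\{c+1,\ldots,c+s\}$ at the front of the suffix $U_t=\{c+1,\ldots,n\}$. This structure is indispensable — without it the claimed inequality is false (for instance $m=2$, $U_t=\{1,2,3,4\}$, $S_t=\{1,3\}$ gives $\sum_j(s_{jt}-d_{jt})=0$ while $\Delta_t=2$) — but once it is established the remaining steps are elementary counting of integers in arithmetic progressions modulo $m$.
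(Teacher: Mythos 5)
Your proof is correct and rests on the same indispensable structural observation as the paper's: with accurate signals the algorithm completes jobs in index order, so at every time $t$ the sets $S_t=\{c+1,\ldots,c+s\}$ and $U_t\setminus S_t=\{c+s+1,\ldots,n\}$ are contiguous blocks. Where you diverge is in the final counting step. You rewrite $\sum_j(s_{jt}-d_{jt})$ as $N_R-s$, where $N_R$ counts jobs of $U_t$ whose residue modulo $m$ lands in the set $R$ of residues occupied by $S_t$, and then run a three-way case analysis tracking the defining cases of $\Delta_t=\min\{a,m\}-\min\{a,m-s\}$. The paper instead points directly at a sufficient set of witnesses — the jobs $j$ with $c+m<j\le\min\{n,\,c+m+s\}$, each of which satisfies $j-m\in S_t$ and so $j\in\mu(j-m)$ — and verifies by a one-line min-arithmetic that there are exactly $\Delta_t$ of them (strictly speaking the paper writes this as an equality where a ``$\ge$'' would be more accurate, since $j=\ell+2m,\ell+3m,\ldots$ also contribute, but the inequality is all that is used). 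So the paper picks out exactly $\Delta_t$ shifted-by-$m$ witnesses and is done, while you count all the witnesses via residue classes and then separately match each regime of $\Delta_t$. Both are elementary; the paper's version is shorter, while yours is more uniform and has the pedagogical benefit of making it explicit (with your counterexample $m=2$, $U_t=\{1,2,3,4\}$, $S_t=\{1,3\}$) that the inequality genuinely fails without the contiguous-block structure. Either route is fine; yours would be a valid substitute.
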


\begin{proof}
    To prove the statement, we first make the following two observations
    \begin{enumerate}
        \item Every job $j \in U_t \setminus S_t$ contributes a value of exactly one to the left side of the inequality only if there is a job $\ell \in S_t$ s.t.\ $j \in \mu(\ell)$. Hence,
        $$
        \sum_{j \in [n]} s_{jt} - d_{jt}  \ge |\{j \in U_t\setminus S_t \mid \exists \ell \in S_t \colon j \in \mu(\ell)\}|.
        $$
        \item The algorithm completes the jobs in order of their indices. To see this, fix two jobs $j,i$ with $p_j < p_i$. By definition of the algorithm, the two jobs are processed with the same rate until $j$ emits its signal and is removed from $E$ and added to $Q$. Since $j$ is added to $Q$ before $i$, the job $j$ will enter the set $S$ earlier than job $i$. Hence, the preferential execution of $j$ starts earlier than the preferential execution of job $i$ and, using $p_j<p_i$, is also shorter. Since $j$ completes during its preferential execution, this implies $C_j < C_i$. For jobs $j,i$ with $p_i = p_j$, we can argue in the same way that they enter $Q$ at the same time. By using the same tiebreaking rule in the queue $Q$ and in the job index order, we can achieve that the algorithm finishes the jobs in order of their indices.
    \end{enumerate}

    Since the jobs complete in order of their index by the second observation above, we can assume that $S_t = \{k,k+1,\ldots,k+\abs{S_t}-1\}$ and $U_{t} \setminus S_{t} = \{k+\abs{S_t},\ldots,n\}$ for some $k \in [n]$.
    Therefore, for every job $j \in U_{t} \setminus S_{t}$ with $k+m \leq j \leq \min\{n, k+m+\abs{S_t} -1\}$ it holds that $j \in \mu(j-m)$, and $j-m \in S_t$.
    Note that if $k+m > n$, there are no such jobs. Thus,
    \begin{align*}
        &|\{j \in U_t\setminus S_t \mid \exists \ell \in S_t \colon j \in \mu(\ell)\}|\\
        &=\min\{n, k+m+\abs{S_t} -1\} - \min\{n+1,k+m\} + 1 \\
        &= \min\{\abs{U_t} - 1, m+\abs{S_t} -1\} - \min\{\abs{U_t},m\} + 1 \\
        &= \left(\min\{\abs{U_t \setminus S_t}, m\} + \abs{S_t} \right) - \left( \min\{\abs{U_t\setminus S_t},m - \abs{S_t}\} + \abs{S_t} \right) = \Delta_t.
    \end{align*}
    Using the first observation from the beginning of the proof, we can conclude with
    $$
    \sum_{j \in [n]} s_{jt} - d_{jt} \geq  |\{j \in U_t\setminus S_t \mid \exists \ell \in S_t \colon j \in \mu(\ell)\}| \ge \Delta_t.
    $$
\end{proof}

As argued above, \Cref{lemma:completion-time-construction,lemma:delay-bound} and~\eqref{eq:multi:opt} together imply,
$$
 \alg  \le \sum_{j \in [n]} s_j - \int_{0}^{T} \Delta_t \, \mathrm{d}t  + (1-\alpha) \opt.
$$

To conclude the proof of the consistency case (~\Cref{lem:multi:consistency}), it remains to upper bound $\sum_{j \in [n]} s_j - \int_{0}^{T} \Delta_t \, \mathrm{d}t$ by $2\alpha\cdot\opt$, which we do with the following lemma.

\begin{lemma}\label{lemma:rr-delay}
    Let $\rr(\{\alpha p_j\}_{j \in [n]})$ denote the total completion time of the Round-Robin-produced schedule for the instance characterized by the processing times $\{\alpha p_1,\ldots,\alpha p_n\}$.
    Then,
    \begin{itemize}
        \item $\rr(\{\alpha p_j\}_{j \in [n]}) \leq 2 \cdot \opt(\{\alpha p_j\}_{j \in [n]}) = 2\alpha \cdot \opt$, and
        \item $\rr(\{\alpha p_j\}_{j \in [n]}) = \sum_{j \in [n]} s_j - \int_{0}^{T} \Delta_t \, \mathrm{d}t$.
    \end{itemize}
\end{lemma}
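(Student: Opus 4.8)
The plan is to prove the two bullets in turn; the first is routine and the second is the real content.

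\textbf{First bullet.} The equality $\opt(\{\alpha p_j\}_{j})=\alpha\,\opt$ is immediate: multiplying every processing time by $\alpha$ multiplies the completion times of every preemptive schedule by $\alpha$, hence multiplies the optimum by $\alpha$, and $\opt(\{p_j\}_j)=\opt$ by Smith's rule $\opt=\sum_i(n-i+1)p_i$. For the inequality $\rr(\{\alpha p_j\}_j)\le 2\,\opt(\{\alpha p_j\}_j)$ one may invoke the well-known fact that Round-Robin (equipartition) is $2$-competitive for $\sum_j C_j$ on identical parallel machines; alternatively, since $m$-machine Round-Robin keeps all unfinished jobs at a common elapsed time, one can write the completion time of the $i$-th job (with $q_1\le\cdots\le q_n$, $q_\ell:=\alpha p_\ell$, $q_0:=0$) as $C_i^{\rr}=\sum_{\ell=1}^{i}(q_\ell-q_{\ell-1})\max\{1,(n-\ell+1)/m\}$, sum over $i$, and compare the result term-by-term with the SPT optimum, using that a single speed-$m$ machine dominates $m$ unit machines for $\sum_j C_j$.

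\textbf{Second bullet.} The observation I would build on is that, when all signals are accurate, a job $j$ stays in $E$ exactly until its elapsed time reaches $\alpha p_j$, at which moment it moves to $Q$ and is finished off during its preferential execution, never re-entering $E$. Hence the rates assigned by Step~(i) of the algorithm are precisely those an $m$-machine scheduler would use on jobs of sizes $\alpha p_1,\dots,\alpha p_n$, with the two caveats that at time $t$ exploration is confined to the $m-|S_t|$ machines not doing preferential work, and that the already-graduated jobs in $Q_t$ still count as unfinished. I would make this precise via a continuous non-decreasing time change $\sigma:[0,T]\to[0,T^{\rr}]$, with $T^{\rr}$ the makespan of Round-Robin on $\{\alpha p_j\}_j$, whose derivative $\sigma'(t)$ equals $1$, $\tfrac{m-|S_t|}{|E_t|}$, or $\tfrac{m-|S_t|}{m}$ according as $|E_t|\le m-|S_t|$, $m-|S_t|<|E_t|\le m$, or $|E_t|>m$ (and $\sigma'(t):=0$ when $E_t=\emptyset$): this choice makes each job's exploration rate at real time $t$ equal to its Round-Robin rate at the rescaled clock times $\sigma'(t)$, so that $E_t$ coincides with the set of jobs still uncompleted in the Round-Robin schedule at rescaled time $\sigma(t)$. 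Writing $\rr(\{\alpha p_j\}_j)=\int_0^{T^{\rr}}N^{\rr}(\tau)\,\mathrm{d}\tau$ with $N^{\rr}(\tau)$ the number of uncompleted jobs and changing variables, the claim $\sum_j s_j-\int_0^T\Delta_t\,\mathrm{d}t=\rr(\{\alpha p_j\}_j)$ becomes the identity $\int_0^T\big(|U_t\setminus S_t|-\Delta_t\big)\,\mathrm{d}t=\int_0^T N^{\rr}(\sigma(t))\,\sigma'(t)\,\mathrm{d}t=\int_0^T |E_t|\,\sigma'(t)\,\mathrm{d}t$. The precise algebraic form $\Delta_t=\min\{|U_t\setminus S_t|,m\}-\min\{|U_t\setminus S_t|,m-|S_t|\}$ enters here through the trichotomy on whether there are more or fewer than $m$ unfinished jobs, exactly as in \Cref{lemma:delay-bound}: in the ``non-busy'' regimes (where $|E_t|\le m-|S_t|$ and $Q_t\setminus S_t=\emptyset$, or $|S_t|=0$) the integrand identity holds pointwise after a short case check.

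The hardest part will be the bookkeeping when many jobs graduate to $Q$ at almost the same time and pile up there waiting for preferential slots: in that regime $|E_t|$ is small or zero while $|U_t\setminus S_t|$ keeps counting all the queued jobs, so the pointwise identity breaks, and one must argue instead that $\int\Delta_t\,\mathrm{d}t$ absorbs exactly this excess — that the surplus accumulated in $|U_t\setminus S_t|$ while a queue drains is matched against the deficit the same jobs produced earlier, when the extra preferential executions slowed their exploration. I would organize this as an induction over the ordered sequence of events (a job leaving $E$ for $Q$, a job entering $S$, a job completing), maintaining the invariant that $\int_0^t\big(|U_{t'}\setminus S_{t'}|-\Delta_{t'}\big)\,\mathrm{d}t'=\int_0^{\sigma(t)}N^{\rr}(\tau)\,\mathrm{d}\tau$, and checking that it is preserved across each event type; the transitions that occur while all $m$ machines are occupied by preferential executions (so that $Q_t\setminus S_t$ is large and $\sigma$ has stalled) are the delicate step, and are where I expect the bulk of the work to lie.
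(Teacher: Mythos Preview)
Your treatment of the first bullet is fine and matches the paper's one-line argument.

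For the second bullet, your approach is genuinely different from the paper's. The paper does not introduce a time-rescaling; instead it builds a one-parameter family of schedules $\Scal_\theta$ (for $\theta\in[0,T]$) on the instance $\{\alpha p_j\}$, where $\Scal_\theta$ uses the algorithm's exploration rate $q_{t}$ on $[0,\theta]$ and then switches to plain Round-Robin on $(\theta,T]$. By construction $\Scal_0$ is Round-Robin (objective $\rr$) and $\Scal_T$ is the algorithm with all preferential execution replaced by idling (objective $\sum_j s_j$). The paper then argues that on each elementary interval $(\theta^-,\theta]$ the objective of $\Scal_\theta$ exceeds that of $\Scal_{\theta^-}$ by exactly $\int_{\theta^-}^{\theta}\Delta_{t}\,\mathrm{d}t$: the two schedules process the same set of jobs at rates differing by $\Delta_{t}/|U_{t}\setminus S_{t}|$, which delays every one of those $|U_{t}\setminus S_{t}|$ completion times by the same amount; multiplying out and telescoping yields the identity. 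No pointwise matching of integrands, no time-change, and no event-by-event induction is required.

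Your plan has a concrete gap beyond the pile-up regime you flag. The pointwise identity $|U_t\setminus S_t|-\Delta_t=|E_t|\,\sigma'(t)$ also fails in the \emph{ordinary} regime $|E_t|>m$, $|S_t|>0$, $Q_t\setminus S_t=\emptyset$: writing $e=|E_t|$, $s=|S_t|$, $u=e$, one has $u-\Delta_t=e-s$ while $e\,\sigma'(t)=e(m-s)/m$, and these differ by $s(e-m)/m>0$. This is the dominant early phase of the algorithm (many jobs still unemitted, a few preferential slots in use, no queue backlog), so your ``short case check'' would not go through there, and your event-based induction would have to absorb a running positive surplus throughout that entire phase, not only during the brief pile-ups you anticipate. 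The paper's hybrid/interpolation device sidesteps exactly this difficulty by never attempting to equate integrands.
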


\begin{proof}
The first claim, $\rr(\{\alpha p_j\}_{j \in [n]}) \leq 2 \cdot \opt(\{\alpha p_j\}_{j \in [n]}) = 2\alpha \cdot \opt$, follows because Round-Robin is 2-competitive~\cite{MotwaniPT94} and because scaling all processing times by the same factor does not change the structure of an optimal schedule, hence only scales the job completion times.

For the second claim, we consider the Round-Robin schedule $\Scal_0$ for the instance $\{\alpha p_j\}_{j \in [n]}$ and the schedule $\Scal_T$, which is the schedule computed by the algorithm \emph{without} the preferential execution. That is, we take the algorithms schedule for the instance $\{p_j\}_{j \in [n]}$ and replace all preferential executions with idle time. The latter is equivalent to, at any time $t$, changing the rates of jobs $j \in S_t$ from $R_j^t = 1$ to $R_j^t = 0$. Note that $\Scal_T$ is a valid schedule for instance $\{\alpha p_j\}_{j \in [n]}$ since the accurate signals imply that $\Scal_T$ processes every job $j$ for exactly $\alpha p_j$ time units. The objective value of schedule $\Scal_T$ for the instance $\{\alpha p_j\}_{j \in [n]}$ is exactly $\sum_{j \in [n]} C_j(\Scal_T) = \sum_{j \in [n]} s_j$, as $s_j$ is the point in time at which the algorithm completes the first $\alpha$-fraction of job $j$. Here, we use $C_j(\Scal)$ to denote the completion time of job $j$ in schedule $\Scal$ for instance $\{\alpha p_j\}_{j \in [n]}$. On the other hand, the objective value of $\Scal_0$ for the instance $\{\alpha p_j\}_{j \in [n]}$ is exactly $\sum_{j \in [n]} C_j(\Scal_0) = \rr(\{\alpha p_j\}_{j \in [n]})$. Our strategy for proving the second claim of the lemma is to show that
\begin{equation}
\label{eq:mult:consistency}
\sum_{j \in [n]} C_j(\Scal_T) -  \sum_{j \in [n]} C_j(\Scal_0) = \int_{0}^T \Delta_t \, \mathrm{d}t \ ,
\end{equation}
which then implies the second claim of the lemma.

To show that~\eqref{eq:mult:consistency} indeed holds, we consider the series of schedules $\Scal_t$, $0 \le t \le T$, for the instance $\{\alpha p_j\}_{j \in [n]}$, where each $\Scal_t$ is defined as follows:

\begin{enumerate}[label=(\roman*)]
    \item Define  $U_{t'}(\Scal_t)$ to be the set of unfinished jobs in schedule $\Scal_t$ at time $t'$. For every time $t' \leq t$ and every $j \in U_{t'}(\Scal_t)$, define $R_j^{t'} = q_{t'}$. Recall that $q_{t'}$ is the rate that the jobs in $U_{t'} \setminus S_{t'}$ receive at time $t'$ in the algorithms schedule for the instance $\{p_j\}_{j \in [n]}$ and, thus, also the rate that the job $j$ receives at time $t'$ in schedule $\Scal_T$.

    Note that this definition implies $U_{t'}(\Scal_t) = U_{t'} \setminus Q_{t'}$ for all $t' \leq t$, as $\Scal_t$ operates on instance $\{\alpha p_j\}_{j \in [n]}$ using the same rates as the algorithm uses on the full instance, which means that $U_{t'} \setminus Q_{t'}$ contains exactly all jobs that have not yet completed an $\alpha$-fraction of their processing time.

    \item For every time $t' > t$ and every job $j \in U_{t'}(\Scal_t)$, it holds that $R_j^{t'}(\Scal_t) =  \min\{\abs{U_{t'}(\Scal_t)}, m\} / \abs{U_{t'}(\Scal_t)}$. That is, the sub schedule for the sub interval $(t,T]$ uses exactly the Round-Robin rates.
\end{enumerate}

Intuitively, for any $0 \le t \le T$, the schedule $\Scal_t$ uses the same rates as $\Scal_T$ during $[0,t]$ and uses Round-Robin during $(t,T]$. Let $\Tcal \subseteq [0,T]$ denote the set of points in time with $0,T \in \Tcal$ that additionally  contains all times $t$ in $(0,T)$ at which the rates in any of the schedules $\Scal_{t'}$ change. For all $t \in \Tcal\setminus \{0\}$, let $t^- := \max \{t' \in \Tcal \mid t' < t\}$. We conclude the proof by showing that
\begin{equation}
    \label{eq:mult:consistency:1}
    \sum_{j \in [n]} C_j(\Scal_t) -  \sum_{j \in [n]} C_j(\Scal_{t^-}) = \int_{t^-}^t \Delta_{t'} \, \mathrm{d}t'
\end{equation}
holds for all $t \in \Tcal \setminus 0$, which the implies~\eqref{eq:mult:consistency} and, thus, the lemma.

To this end, consider an arbitrary $t \in  \Tcal \setminus \{0\}$ and the corresponding $t^-$. In the following, we use $e_j^{t'}(\Scal)$ to denote the elapsed time of $j$ at $t'$ in schedule $\Scal$.
By definition (see (i)) the schedules $\Scal_t$ and $\Scal_{t^-}$ use the exact same rates during $[0,t^-]$. Hence, all jobs have the same elapsed time at time $t^-$ in both schedules, i.e., $e_j^{t^-}(\Scal_t) = e_j^{t^-}(\Scal_{t^-})$ for all $j \in [n]$.

Next, consider the interval $(t^-,t]$. By definition of $\Tcal$, both schedules do not change their respective processing rates during $(t^-,t)$. Using (i), this also means that both schedules process the jobs in $U_{t^-}\setminus S_{t^-}$ during the entire interval.
Thus, schedule $\Scal_{t^-}$ processes exactly the jobs $j \in U_{t^-}\setminus S_{t^-}$ with rate
$$
\frac{\min\{\abs{U_{t^-} \setminus S_{t^-}}, m\}}{\abs{U_{t^-} \setminus S_{t^-}}},
$$
whereas $\Scal_{t}$ processes exactly the jobs $j \in U_{t^-}\setminus S_{t^-}$ with rate
$$
\frac{\min\{\abs{U_{t^-} \setminus S_{t^-}}, m - \abs{S_{t^-}}\}}{\abs{U_{t^-} \setminus S_{t^-}}} = \frac{\min\{\abs{U_{t^-} \setminus S_{t^-}}, m\}}{\abs{U_{t^-} \setminus S_{t^-}}} - \frac{\Delta_t}{\abs{U_{t^-} \setminus S_{t^-}}}.
$$
Thus, $e_j^{t}(\Scal_t) = e_j^{t}(\Scal_{t^-}) - \int_{t^-}^t \frac{\Delta_{t'}}{\abs{U_{t^-} \setminus S_{t^-}}} \, \mathrm{d}t'$ for all $ j \in U_{t^-}\setminus S_{t^-}$.

Finally, after $t$, both schedules $\Scal_t$ and $\Scal_{t^-}$ execute round-robin with the only difference being that the elapsed time in $\Scal_{t^-}$ are already farther advanced than the elapsed times in $\Scal_t$. Since the difference in the elapsed times is the same for all jobs in $U_{t^-}\setminus S_{t^-}$, the round-robin schedule in $\Scal_t$ after $t$ is equivalent to the round-robin schedule in $\Scal_{t^-}$ after $t - \int_{t^-}^t \frac{\Delta_{t'}}{\abs{U_{t^-} \setminus S_{t^-}}} \, \mathrm{d}t'$. This means that the completion times of jobs $j \in U_{t^-}\setminus S_{t^-}$ in schedule $\Scal_t$ are the same as in $\Scal_{t^-}$ but delayed by $\int_{t^-}^t \frac{\Delta_{t'}}{\abs{U_{t^-} \setminus S_{t^-}}} \, \mathrm{d}t'$. We can conclude with
\begin{align*}
    \sum_{j \in [n]} C_j(\Scal_t) &=  \sum_{j \in [n] \setminus (U_{t^-}\setminus S_{t^-})} C_j(\Scal_{t^-}) +  \sum_{j \in U_{t^-}\setminus S_{t^-}} \left(C_j(\Scal_{t^-}) + \int_{t^-}^t \frac{\Delta_{t'}}{\abs{U_{t^-} \setminus S_{t^-}}} \, \mathrm{d}t'\right)\\
    &=    \left(\sum_{j \in [n]} C_j(\Scal_{t^-})\right) +  \int_{t^-}^t \Delta_{t'} \, \mathrm{d}t',
\end{align*}
which implies~\eqref{eq:mult:consistency:1} and, thus, the lemma.
\end{proof}

\subsection{Robustness}

In this section, we prove the robustness bound of $\ALG \le (1+1/\alpha) \cdot \opt$. To do so, we heavily rely on the following lower bound on the optimal objective value, which is a mixed bound of the standard volume lower bound and the fast single machine lower bound.

\begin{lemma}[Mixed lower bound; Lemma 9 in \cite{BeaumontBEM12}]\label{lemma:mixed-lower-bound}
For minimizing the total completion time of $n$ jobs with processing times $p_1 \leq \ldots \leq p_n$ on $m$ parallel identical machines, given a partition $p_j^{(1)} + p_j^{(2)} \leq p_j$ for every job $j$, it holds that
\[
    \frac{1}{m} \sum_{j=1}^{n} \sum_{k=1}^{j} p_k^{(1)} + \sum_{j=1}^{n} p_j^{(2)} \leq \opt.
\]
\end{lemma}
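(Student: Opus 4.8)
\emph{Plan.} I would derive this from the exact value of the optimum for preemptive total completion time on identical machines, followed by an elementary rounding comparison and a per‑job inequality, so that all the ``mixing'' is pushed into a single classical fact.

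First, recall the classical characterization of $\opt$ for $P\mid\mathrm{pmtn}\mid\sum C_j$: with $p_1\le\dots\le p_n$, the preemptive optimum is attained by a non‑preemptive SPT schedule with a balanced round‑robin assignment of jobs to machines, so that
\[
\opt \;=\; \sum_{k=1}^n p_k\,\bigl\lceil \tfrac{n-k+1}{m}\bigr\rceil .
\]
Only the inequality $\opt\ge\sum_{k}p_k\lceil (n-k+1)/m\rceil$ is needed here; besides citing \cite{BeaumontBBEM12}, this lower bound can also be obtained directly by ordering the jobs of an optimal preemptive schedule by completion time and combining, over all prefixes, the volume bound $C_{(i)}\ge p_{(i)}$ with the $m$‑machine work‑rate bound $C_{(i)}\ge\frac1m\sum_{l\le i}p_{(l)}$, together with a rearrangement step.

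Second, since $\lceil x\rceil\ge\max(x,1)$ for every $x>0$, this already yields $\opt\ge\sum_{k=1}^n p_k\max\!\bigl(\tfrac{n-k+1}{m},1\bigr)$. Finally, set $a_k:=\tfrac{n-k+1}{m}$; since $\max(a_k,1)-a_k\ge0$, $\max(a_k,1)-1\ge0$, $p_k^{(1)},p_k^{(2)}\ge0$ and $p_k^{(1)}+p_k^{(2)}\le p_k$, we get for each $k$
\[
\max(a_k,1)\,p_k \;\ge\; \max(a_k,1)\bigl(p_k^{(1)}+p_k^{(2)}\bigr) \;\ge\; a_k\,p_k^{(1)}+p_k^{(2)} .
\]
Summing over $k$ and using $\sum_{k=1}^n(n-k+1)\,p_k^{(1)}=\sum_{j=1}^n\sum_{k=1}^j p_k^{(1)}$ (for fixed $k$, count the indices $j\ge k$) gives exactly
\[
\opt \;\ge\; \sum_{k=1}^n \max(a_k,1)\,p_k \;\ge\; \tfrac1m\sum_{j=1}^n\sum_{k=1}^j p_k^{(1)} + \sum_{j=1}^n p_j^{(2)} ,
\]
which is the claim.

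The only substantive ingredient is the first step; Steps two and three are a short chain of inequalities. The point to be careful about, if one wants the first step self‑contained, is that the naive per‑job completion‑time bounds must be combined \emph{globally}: splitting the jobs into a ``long'' block (handled by $C\ge p$) and a ``short'' block (handled by the work‑rate bound) and summing the two separately is strictly too weak, because the long jobs also delay the short ones — this is precisely the slack that the ceiling in the closed form above absorbs.
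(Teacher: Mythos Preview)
The paper does not prove this lemma; it is stated with attribution to \cite{BeaumontBEM12} and used as a black box in the robustness analysis. Your argument is correct: the closed form $\opt=\sum_{k}p_k\lceil(n-k+1)/m\rceil$ for SPT on identical machines (with preemption not helping) is classical, and from $\lceil x\rceil\ge\max(x,1)$ together with the per-index inequality $\max(a_k,1)\,p_k\ge a_k p_k^{(1)}+p_k^{(2)}$ the claim follows cleanly. This is a perfectly good self-contained substitute for the citation.

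One small caution on your parenthetical sketch: the bound $\sum_i C_{(i)}\ge\sum_i\max\bigl(p_i,\tfrac1m\sum_{l\le i}p_l\bigr)$ does \emph{not} by itself imply $\opt\ge\sum_k p_k\max\bigl(\tfrac{n-k+1}{m},1\bigr)$; e.g.\ for $m=2$ and $p=(1,1,100)$ the former gives $102$ while the latter is $102.5$ (and $\opt=103$). So if you ever want that route fully self-contained, the ``rearrangement step'' you allude to really does have to recover the ceiling, not just the max --- exactly the subtlety you flag in your final paragraph. Since your main line cites the closed form directly, this does not affect the proof.
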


Before we prove the robustness bound, recall that $T$ is the time horizon of the algorithm's schedule and define $T'$ as the earliest point in time with $|U_{T'}| \le m$. Starting from $T'$, the number of alive jobs in the algorithm's schedule is at most the number of machines, which implies that the algorithm processes all remaining jobs with the maximum rate of one until they complete.

\begin{lemma}
    For every $\alpha \in (0,1]$ it holds that $\alg \leq (1 + 1/\alpha) \cdot \opt$.
\end{lemma}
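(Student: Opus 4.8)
The plan is to bound $\ALG$ via the mixed lower bound (\Cref{lemma:mixed-lower-bound}) together with the pure volume bound, after splitting each job's processing time into an ``exploration'' part and a ``preferential-execution'' part. Concretely, for each job $j$ let $p_j^{(2)}$ be the total amount of processing $j$ receives while it sits in the set $S$ (its preferential execution), and set $p_j^{(1)} = p_j - p_j^{(2)}$. The FIFO structure of $Q$ forces $j$ to receive this preferential execution in one contiguous rate-$1$ block, of length $\min\{(1-\beta_j)p_j,\ \tfrac{1-\alpha}{\alpha}\beta_j p_j\}$ (and $p_j^{(2)}=0$ if $j$ completes before emitting or never emits before completion). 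A one-line case distinction on whether $\beta_j \ge \alpha$ or $\beta_j < \alpha$ then yields the key inequality
\[
p_j^{(1)} \ge \alpha\, p_j, \qquad\text{equivalently}\qquad p_j^{(2)} \le \tfrac{1-\alpha}{\alpha}\, p_j^{(1)} .
\]
Applying \Cref{lemma:mixed-lower-bound} to the partition $(p_j^{(1)},p_j^{(2)})$ gives $\OPT \ge \tfrac1m\sum_{j}\sum_{k\le j}p_k^{(1)} + \sum_j p_j^{(2)}$ (jobs ordered so $p_1\le\dots\le p_n$), and the special case $p_j^{(1)}=p_j$, $p_j^{(2)}=0$ gives the volume bound $\OPT \ge \tfrac1m\sum_j\sum_{k\le j}p_k$.

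The heart of the proof is an upper bound on $\ALG = \int_0^{T}|U_t|\,\mathrm dt$. Two structural facts drive it: (i) for $t<T'$ all $m$ machines are busy, since $|U_t|>m$ implies either $|Q_t|\ge m$ (all $m$ machines run preferential executions) or $|Q_t|<m$ (all machines not used for preferential executions run Round-Robin on $E_t$, and $|E_t|\ge m-|S_t|$); and (ii) from $T'$ on, every alive job runs at rate $1$ until completion. Using (ii), the $\le m$ jobs alive at $T'$ contribute $\int_{T'}^{T}|U_t|\,\mathrm dt = \sum_{j\in U_{T'}}(C_j-T') \le \sum_{j\in U_{T'}}p_j \le \OPT$ to $\ALG$. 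For the interval $[0,T']$, I would decompose each $C_j$ into the time $j$ spends exploring in $E$, the time it waits in $Q\setminus S$, and the time $p_j^{(2)}$ it spends in $S$, and charge the first two against (a) the exploration work done on the jobs $k\le j$, divided by the number of machines available for exploration at each instant, and (b) the machine-time spent on preferential executions overlapping $[0,C_j]$. The point is that the only slack relative to ideal $m$-machine Round-Robin on the instance $\{p_j\}$ is exactly the machine-time diverted to preferential executions, namely $\sum_k p_k^{(2)}$. Summing over $j$ should produce a bound of the shape
\[
\ALG \;\le\; \tfrac1m\sum_{j}\sum_{k\le j}p_k \;+\; \Big(\text{a term bounded by } \tfrac1m\sum_j\sum_{k\le j}p_k^{(2)} + \sum_j p_j^{(2)}\Big).
\]

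To finish, I would feed in the lower bounds. From $p_k^{(2)} \le \tfrac{1-\alpha}{\alpha}p_k^{(1)}$ and the mixed bound, $\tfrac1m\sum_j\sum_{k\le j}p_k^{(2)} \le \tfrac{1-\alpha}{\alpha}\cdot\tfrac1m\sum_j\sum_{k\le j}p_k^{(1)} \le \tfrac{1-\alpha}{\alpha}\OPT$; the mixed bound also gives $\sum_j p_j^{(2)} \le \OPT$; and the volume bound gives $\tfrac1m\sum_j\sum_{k\le j}p_k \le \OPT$. Collecting the coefficients, $1 + \tfrac{1-\alpha}{\alpha} + 1 = 2 + \tfrac{1-\alpha}{\alpha} = 1+\tfrac1\alpha$, so $\ALG \le (1+\tfrac1\alpha)\OPT$, with the contribution of the jobs alive at $T'$ already absorbed (or, if kept separate, bounded by $\sum_j p_j\le\OPT$, which the same arithmetic can accommodate).

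The step I expect to be the \emph{main obstacle} is the completion-time bookkeeping in the second paragraph. On a single machine (\Cref{thm:1-signal-main}) the analysis collapses to a clean pairwise statement $d(i,j)+d(j,i)\le(1+\tfrac1\alpha)p_i$, but on $m$ machines delays do not decompose pairwise, and the number of machines performing exploration fluctuates as jobs leave $E$ to emit their signals and re-enter $E$ after exhausting their preferential budgets, so the exploration schedule is not a genuine Round-Robin and the elapsed times in $E$ need not be equal. One must argue carefully — in the style of the consistency proof of \Cref{lem:multi:consistency}, which tracks the quantities $\Delta_t$, $s_{jt}$, $d_{jt}$ — that the gap between the algorithm's schedule and an $m$-machine volume bound is precisely the preferential-execution machine-time, which is exactly what $p_j^{(1)}\ge\alpha p_j$ controls.
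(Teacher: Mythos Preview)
Your high-level strategy---split each $p_j$ into an exploration part and a preferential-execution part and then invoke \Cref{lemma:mixed-lower-bound}---is indeed the paper's strategy, and your per-job observation $p_j^{(2)}\le\tfrac{1-\alpha}{\alpha}\,p_j^{(1)}$ is correct. However, the concrete shape you posit for the upper bound on $\ALG$ is already false on the simplest instance: take $m=1$, two jobs with $p_1=p_2=1$ and $\beta_1=\beta_2=1$ (no signal before completion). Then $p_j^{(2)}=0$, Round-Robin gives $\ALG=4$, but your right-hand side equals $\tfrac1m\sum_j\sum_{k\le j}p_k+0=3$. The Round-Robin contribution necessarily carries a factor~$2$; the paper gets $\tfrac{2}{m}\sum_j\sum_{k\le j}p_k^R$, and only by splitting this $2$ across two different applications of the mixed bound (one pairing $p_k^R$ with $p_k^S$ against $\sum_j p_j^F$, the other pairing $p_k^R$ against $\sum_j p_j^S$) does the arithmetic close to $1+\tfrac1\alpha$.

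More importantly, the ingredient you are missing is a \emph{cross-job} inequality, not your per-job one. The paper observes that when job $j$ enters $S$, every still-alive job $k$ satisfies $e_k(t_j)\ge e_j(t_j)=\rho_j$ (equal Round-Robin rates in $E$, and being in $S$ only increases elapsed time), hence $\rho_j\le p_k$, which gives $\int_0^T\one[j\in S_t\wedge k\in U_t]\,\mathrm dt\le\tfrac{1-\alpha}{\alpha}p_k$. Summing over $k\le j$ then yields $\tfrac{1-\alpha}{\alpha}\cdot\tfrac1m\sum_j\sum_{k\le j}p_k\le\tfrac{1-\alpha}{\alpha}\,\OPT$. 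Your inequality $p_j^{(2)}\le\tfrac{1-\alpha}{\alpha}p_j^{(1)}$ cannot play this role: bounding the time the \emph{larger} job $j$ spends in $S$ by a quantity depending on $j$ rather than on $k$ produces $\tfrac1m\sum_j j\cdot p_j^{(2)}\le\tfrac{1-\alpha}{\alpha}\cdot\tfrac1m\sum_j j\,p_j$, and $\tfrac1m\sum_j j\,p_j$ is the cost of the longest-first order, not controllable by $\OPT$. So the obstacle you flagged is real, and resolving it requires exactly the elapsed-time comparison that your sketch omits.
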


\begin{proof}
    In order to later apply the lower bound on $\opt$ of~\Cref{lemma:mixed-lower-bound}, we first split the processing times $p_j$ of the jobs $j \in [n]$:
    \begin{itemize}
        \item Let $p_j^S = \int_{0}^T \one[j \in S_{t'}] \, \mathrm{d}t'$ denote the amount of processing that $j$ receives at times $t$ with $j \in S_t$. That is, $p_j^S$ is the amount of preferential execution that $j$ receives.
        \item Let $p_j^F = \int_{T'}^T \one[j \in E_{t'}] \, \mathrm{d}t'$ denote the amount of processing that $j$ receives after $T'$ while being in $E$. Note that after $T'$, the job $j$ will be executed with the full rate of one even if it is in $E$, as the number of alive jobs is at most the number of machines.
        \item Finally, let $p_j^R = p_j - p_j^F - p_j^S$ denote the amount of processing that $j$ receives with a rate strictly less than one.
    \end{itemize}

    Next, recall that $q_t$ is the rate at which all jobs in $E_t$ are executed at time $t$. Observe that $q_t = (m-|S_t|) / |E_t|$ holds for all $t \le T'$ by definition of the algorithm and since $|U_t| > m$ by choice of $T'$. Therefore,
    \begin{equation}
        \label{eq:multi:robust:1}
        \frac{1}{|E_t|} = \frac{q_t}{m} + \frac{|S_t|}{m\cdot |E_t|}
    \end{equation}
    for all $t < T'$. This allows us to derive a first upper bound on $\ALG$:
     \begin{align}
        \alg &= \int_{0}^T \abs{U_t} \, \mathrm{d}t = \int_{0}^T\abs{E_t} \, \mathrm{d}t + \int_{0}^T \abs{U_t \setminus E_t} \notag \, \mathrm{d}t \\
        &= \int_{0}^{T'} \abs{E_t}^2 \cdot \frac{1}{\abs{E_t}} \, \mathrm{d}t + \int_{t = T'}^T \abs{E_t} \, \mathrm{d}t + \int_{0}^T \abs{U_t \setminus E_t} \, \mathrm{d}t \notag \\
        &= \int_{0}^{T'} \abs{E_t}^2 \cdot \frac{q_t}{m} \, \mathrm{d}t +  \int_{0}^{T'} \frac{\abs{S_t} \cdot \abs{E_t}}{m} \, \mathrm{d}t  + \sum_{j=1}^n p_j^{F} +  \int_{0}^T \abs{U_t \setminus E_t} \, \mathrm{d}t \notag  \\
        &\leq \frac{2}{m} \int_{0}^{T'} \left(\sum_{j \in E_t} \sum_{\substack{k \in E_t\\ k \leq j}} q_t\right) \, \mathrm{d}t  + \int_{0}^{T'} \frac{\abs{S_t} \cdot \abs{E_t}}{m} \, \mathrm{d}t  + \sum_{j=1}^n p_j^{F} +  \int_{0}^T \abs{U_t \setminus E_t} \, \mathrm{d}t \notag  \\
        &\leq \frac{2}{m} \sum_{j = 1}^n \sum_{k=1}^j \int_{0}^{T'} q_t \cdot \one[k \in E_t] \, \mathrm{d}t  +  \int_{0}^{T'} \frac{\abs{S_t} \cdot \abs{E_t}}{m} \, \mathrm{d}t  + \sum_{j=1}^n p_j^{F} +  \int_{0}^T \abs{U_t \setminus E_t} \, \mathrm{d}t \notag  \\
        &= \frac{2}{m} \sum_{j = 1}^n \sum_{k=1}^j p_k^{R}  +  \int_{0}^{T'} \frac{\abs{S_t} \cdot \abs{E_t}}{m} \, \mathrm{d}t  + \sum_{j=1}^n p_j^{F} +  \int_{0}^T \abs{U_t \setminus E_t} \, \mathrm{d}t \label{eq:parallel-machines-robustness}
    \end{align}

    To further upper bound~\eqref{eq:parallel-machines-robustness}, we first focus on upper bounding the sum of the two integrals in~\eqref{eq:parallel-machines-robustness}, which we rewrite as follows:
    \begin{align*}
       &\int_{0}^{T'} \frac{\abs{S_t} \cdot \abs{E_t}}{m} \, \mathrm{d}t  +   \int_{0}^T \abs{U_t \setminus E_t} \, \mathrm{d}t\\
        =& \int_{0}^{T'} \frac{\abs{S_t} \cdot \abs{E_t}}{m} \, \mathrm{d}t +
        \int_{0}^T \one[S_t \subsetneq U_t \setminus E_t] \cdot |U_t\setminus E_t| \, \mathrm{d}t +   \int_{0}^T \one[S_t = U_t \setminus E_t] \cdot |U_t\setminus E_t| \, \mathrm{d}t.
    \end{align*}

    Note that at any time when $S_t \subsetneq U_t \setminus E_t$ it must hold $\abs{S_t}=m$. To see this, recall that $S_t$ is defined to contain the first up-to $m$ elements in $Q_t = U_t \setminus E_T$ (see the definition of the algorithm). Hence, if $S_t$ does not contain all elements of $Q_t$, then it must contain $m$ elements.
    Thus, the above is equal to

     \begin{align}
        &\int_{0}^{T'} \frac{\abs{S_t} \cdot \abs{E_t}}{m} \, \mathrm{d}t  +
        \int_{0}^T \one[S_t \subsetneq U_t \setminus E_t] \cdot |U_t\setminus E_t| \cdot \frac{|S_t|}{m} \, \mathrm{d}t  +   \int_{0}^T \one[S_t = U_t \setminus E_t] \cdot |U_t\setminus E_t| \, \mathrm{d}t \notag\\
        \le& \frac{1}{m}\int_{0}^{T'} \abs{S_t} \cdot \abs{E_t} \, \mathrm{d}t  + \frac{1}{m}
        \int_{0}^T \one[S_t \subsetneq U_t \setminus E_t] \cdot |U_t\setminus E_t| \cdot |S_t| \, \mathrm{d}t   +   \int_{0}^T |S_t| \, \mathrm{d}t \notag\\
         \le& \frac{1}{m}\int_{0}^{T'} \abs{S_t} \cdot \abs{E_t} \, \mathrm{d}t  + \frac{1}{m}
        \int_{0}^T |U_t\setminus E_t| \cdot |S_t| \, \mathrm{d}t   +   \int_{0}^T |S_t| \, \mathrm{d}t \notag\\
        \leq& \frac{1}{m}\int_{0}^{T} \abs{S_t} \cdot \abs{U_t} \, \mathrm{d}t  +   \int_{0}^T |S_t| \, \mathrm{d}t \notag\\
        \leq& \frac{1}{m} \sum_{j=1}^{n} \sum_{k=1}^{j} \int_{0}^T \one[j \in S_t \wedge k \in U_t] + \one[k \in S_t \wedge j \in U_t] \, \mathrm{d}t  + \sum_{j=1}^n p_j^{S}. \label{eq:parallel-machines-robustness-1}
    \end{align}

    For a job $j$, let $t_j$ be the time it enters set $S$. If there is another job $j$ that is unfinished at time $t_j$, then we can observe $e_j(t_j) \le e_k(t_j)$: In case $k$ has not yet entered $S$ at $t_j$, both jobs have been processed with the same rates during $[0,t_j)$, which implies $e_j(t_j) = e_k(t_j)$. On the other hand, if $k$ entered $S$ earlier, then $R_k^{t'} \ge R_j^{t'}$ for all $t' < t_j$ as $k$ either has the same rate as $j$ (if $k \in U_{t'} \setminus S_{t'}$) or receives preferential execution and has a larger rate as $j$ (if $k \in S_{t'}$). This implies  $e_j(t_j) \le e_k(t_j)$.
    From $e_j(t_j) \le e_k(t_j)$, we get $\rho_j := e_j(t_j) \leq e_k( t_j) \leq p_k$. By definition of the algorithm, $j$ receives preferential execution for $\frac{1-\alpha}{\alpha} \rho_j \le \frac{1-\alpha}{\alpha} p_k$ time units. Thus, $j$ is part of set $S$ for at most $\frac{1-\alpha}{\alpha} p_k$ time units. This implies
    $$
        \int_{0}^T \one[j \in S_t \land k \in U_t] \, \mathrm{d}t \le \frac{1-\alpha}{\alpha} \cdot p_k.
    $$
    Furthermore, since $p_k^S = \int_{0}^T \one[j \in S_t]\, \mathrm{d}t$ holds by definition of $p_k^S$, we also get
    $$
        \int_{0}^T  \one[k \in S_t \wedge j \in U_t] \, \mathrm{d}t \leq p_k^{S}.
    $$
    Plugging these two inequalities into~\eqref{eq:parallel-machines-robustness-1} yields
    \begin{align*}
         &\int_{0}^{T'} \frac{\abs{S_t} \cdot \abs{E_t}}{m} \, \mathrm{d}t  +   \int_{0}^T \abs{U_t \setminus E_t} \, \mathrm{d}t\\
        \leq& \frac{1}{m} \sum_{j=1}^{n} \sum_{k=1}^{j} \int_{0}^T \one[j \in S_t \wedge k \in U_t] + \one[k \in S_t \wedge j \in U_t] \, \mathrm{d}t  + \sum_{j=1}^n p_j^{S}\\
        \le& \frac{1-\alpha}{\alpha} \frac{1}{m} \sum_{j=1}^{n} \sum_{k=1}^{j} p_k + \frac{1}{m} \sum_{j=1}^{n} \sum_{k=1}^{j} p_k^{S}  + \sum_{j=1}^n p_j^{S}.
    \end{align*}
    To conclude the proof, we can plug this inequality into~\eqref{eq:parallel-machines-robustness}:
    \begin{align*}
        \alg &\le \frac{2}{m} \sum_{j = 1}^n \sum_{k=1}^j p_k^{R}  +  \int_{0}^{T'} \frac{\abs{S_t} \cdot \abs{E_t}}{m} \, \mathrm{d}t  + \sum_{j=1}^n p_j^{F} +  \int_{0}^T \abs{U_t \setminus E_t} \, \mathrm{d}t\\
        &\le \frac{1-\alpha}{\alpha} \frac{1}{m} \sum_{j=1}^{n} \sum_{k=1}^{j} p_k + \frac{2}{m} \sum_{j = 1}^n \sum_{k=1}^j p_k^{R} + \sum_{j=1}^n p_j^{F} +  \frac{1}{m} \sum_{j=1}^{n} \sum_{k=1}^{j} p_k^{S}  + \sum_{j=1}^n p_j^{S} \\
        &\leq \left( \frac{1-\alpha}{\alpha} \right) \opt + \frac{1}{m} \sum_{j = 1}^n \sum_{k=1}^j \left(p_k^{R} + p_k^{S} \right) + \sum_{j=1}^n p_j^{F} + \frac{1}{m} \sum_{j = 1}^n \sum_{k=1}^j p_k^{R} + \sum_{j=1}^n p_j^{S} \\
        &\leq \left( \frac{1-\alpha}{\alpha} \right) \opt + \opt + \opt \leq \left(1 + \frac{1}{\alpha} \right) \opt .
    \end{align*}
    Here, the second inequality uses~\Cref{lemma:mixed-lower-bound} with the partition $p_j^{(1)} = p_j$ and $p_j^{(2)} = 0$. The third inequality we use \Cref{lemma:mixed-lower-bound} once with the partition $p_j^{(1)} + p_j{(2)} = (p_j^{R} + p_j^{S}) + p_j^{F} = p_j$ and once with the partition $p_j^{(1)} + p_j{(2)} = p_j^{R} + p_j^{S} \leq p_j$.
\end{proof}

\end{document}